\pgfplotsset{compat=1.18}
\g@addto@macro{\@algocf@init}{\SetKwInOut{KwSubroutine}{Subroutine}} 
\theoremstyle{plain}
\crefname{maintheorem}{Theorem}{Theorems}
\newcommand{\dualMMexp}{\ensuremath{\omega^{\perp}}}
\newcommand{\cR}{{\mathcal{R}}}
\newcommand{\cM}{{\mathcal{M}}}
\newcommand{\cN}{{\mathcal{N}}}
\newcommand{\MM}{\problem{MM}}
\newcommand{\MMV}{\problem{MMV}}
\newcommand{\ColMMV}{\problem{Column\textrm{-}Wise\textrm{-}MMV}}
\newcommand{\FSMM}{\problem{FSMM}}
\newcommand{\OSMM}{\problem{OSMM}}
\newcommand{\ct}{\tilde{c}}
\let\oldnl\nl
\newcommand{\nonl}{\renewcommand{\nl}{\let\nl\oldnl}}
\newcommand{\crefpart}[2]{%
	\hyperref[#2]{\namecref{#1}~\labelcref*{#1}~\ref*{#2}}%
}
\crefname{algocf}{Algorithm}{Algorithms}
\Crefname{algocf}{Algorithm}{Algorithms}
\newcommand{\din}{{\delta_{\textrm{in}}}}
\newcommand{\dout}{{\delta_{\textrm{out}}}}
\title{Output-Sparse Matrix Multiplication Using Compressed Sensing}
\author{
	Huck Bennett\thanks{University of Colorado Boulder. Email: \email{huckbennett@gmail.com}. Supported by NSF Grant CCF-2312297.}
	\and
	Karthik Gajulapalli\thanks{Georgetown University. Email: \email{kg816@georgetown.edu}. Supported by NSF grant CCF-2338730.}
	\and
	Alexander Golovnev\thanks{Georgetown University. Email: \email{alexgolovnev@gmail.com}. Supported by NSF grant CCF-2338730.}
	\and
	Evelyn Warton\thanks{Oregon State University. Email: \email{wartone@oregonstate.edu}.}
}
\date{\today}
\begin{document}
	\maketitle
	
	\begin{abstract}
		We give two algorithms for \emph{output-sparse matrix multiplication} ($\OSMM$), the problem of multiplying two $n \times n$ matrices $A, B$ when their product $AB$ is promised to have at most $O(n^{\delta})$ many non-zero entries for a given value $\delta \in [0, 2]$.
		We then show how to speed up these algorithms in the \emph{fully sparse} setting, where the input matrices $A, B$ are themselves sparse. All~of our algorithms work over arbitrary rings.
		
		Our first, deterministic algorithm for $\OSMM$ works via a two-pass reduction to compressed sensing. It runs in roughly $n^{\omega(\delta/2, 1, 1)}$ time, where $\omega(\cdot, \cdot, \cdot)$ is the rectangular matrix multiplication exponent. This substantially improves on prior deterministic algorithms for output-sparse matrix multiplication.
		
		Our second, randomized algorithm for $\OSMM$ works via a reduction to compressed sensing and a variant of matrix multiplication verification, and runs in roughly $n^{\omega(\delta - 1, 1, 1)}$ time.
		This algorithm and its extension to the fully sparse setting have running times that match those of the (randomized) algorithms for $\OSMM$ and $\FSMM$, respectively, in recent work of Abboud, Bringmann, Fischer, and K\"{u}nnemann (SODA, 2024).
		Our algorithm uses different techniques and is arguably simpler.
		
		Finally, we observe that the running time of our randomized algorithm and the algorithm of Abboud et al. are optimal via a simple reduction from rectangular matrix multiplication.
	\end{abstract}

	\pagenumbering{roman}
	\thispagestyle{empty}
	\newpage
	\tableofcontents
	\newpage
	\pagenumbering{arabic}
	
	\section{Introduction}
	\label{sec:intro}
	
	The Matrix Multiplication Problem ($\MM$) is to compute the product $AB$ of two $n \times n$ matrices $A, B$ given as input. It is one of the most fundamental problems in computer science.
	The running time $O(n^{\omega})$ of the fastest algorithm for $\MM$ is governed by the matrix multiplication exponent $\omega \leq 2.372$ (see~\cite{alman2024more} for the current best bound), and it is a major open question whether $\omega = 2$.
	
	While one line of work has sought to improve $\omega$, another has tried to find faster algorithms for important special cases of $\MM$. 
	In this work, we give new algorithms for the \emph{Output-Sparse Matrix Multiplication Problem} (OSMM), where the product $AB$ of the input matrices $A$ and $B$ is promised to be \emph{sparse}.
	Specifically, we consider $\MM$ when $AB$ is promised to have $O(n^{\delta})$ many non-zero entries for some $\delta \in [0, 2]$, which we take to be a parameter of the problem.
	We denote this problem for matrices over a ring $R$ by $\delta$-$\OSMM_R$.
	We also discuss how to extend our algorithms to the \emph{fully sparse} setting and get an additional speed-up for our algorithms in the case where the input matrices $A$ and $B$ are themselves sparse.
	
	\subsection{Our Work}
	\label{sec:our-work}
	
	Our main results are a deterministic algorithm and a randomized algorithm for $\OSMM$ over arbitrary rings $R$ including, in particular, finite fields, the integers, and the real numbers. The algorithms perform a similar number of ring operations regardless of $R$.
	
	Define the rectangular matrix multiplication exponent $\omega(\alpha, \beta, \gamma)$ to be the infimum over values $\omega' \geq 0$ such that there is an algorithm for multiplying an $n^{\alpha} \times n^{\beta}$ matrix and an $n^{\beta} \times n^{\gamma}$ matrix in $O(n^{\omega'})$ time.
	Define the dual matrix multiplication exponent $\omega^{\perp}$ to be the supremum over values $\omega' \geq 0$ such that $\omega(1, 1, \omega') = 2$.
	The theorems corresponding to our main two algorithms are as follows.

	\begin{restatable}[Deterministic OSMM]{maintheorem}{MainDet}
		\label{thm:det-osmm-intro}
		Let $\delta \in [0, 2]$, let $R$ be a ring, and let $\eps > 0$ be a %
		constant.
		There is a deterministic algorithm for solving $\delta$-$\OSMM_R$ on $n \times n$ matrices that performs $O(n^{\omega(\delta/2, 1, 1) + \eps})$ operations over $R$.
	\end{restatable}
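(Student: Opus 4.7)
The plan is a two-pass algorithm combining rectangular matrix multiplication with deterministic compressed sensing. Fix threshold $T = \lceil n^{\delta/2} \rceil$ and let $C = AB$. Since $\sum_{j=1}^n \nnz(C_j) \le O(n^\delta)$, at most $O(n^{\delta/2})$ columns of $C$ can satisfy $\nnz(C_j) > T$; call these the \emph{heavy} columns and call the remaining columns \emph{light}.

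In the first pass, let $\Phi \in R^{m \times n}$ be a deterministic sensing matrix with $m = O(T)$ rows whose columns satisfy a ring-valued spark condition: every $2T$-subset of columns of $\Phi$ is $R$-linearly independent. Such a $\Phi$ can be built from Vandermonde / generalized Reed--Solomon style constructions over arbitrary rings, using enough distinct evaluation points. Form
\[
\tilde C = (\Phi A)\,B
\]
as two successive rectangular matrix multiplications, each of an $m \times n$ matrix by an $n \times n$ matrix with $m = O(n^{\delta/2})$. By definition of $\omega(\delta/2, 1, 1)$ as an infimum, this takes $n^{\omega(\delta/2, 1, 1) + \eps}$ ring operations. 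The $j$-th column of $\tilde C$ equals $\Phi C_j$. For each $j$, run a deterministic decoder that returns the unique $T$-sparse preimage of $\tilde C_j$ under $\Phi$ when such a preimage exists. For light columns, the spark property forces $C_j$ itself to be this unique preimage; for heavy columns, no $T$-sparse preimage exists, the decoder reports inconsistency, and we flag $j$. Let $H \subseteq [n]$ be the set of flagged (heavy) columns; then $|H| = O(n^{\delta/2})$.

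In the second pass, compute $A \cdot B_H$ directly, where $B_H$ is the restriction of $B$ to columns indexed by $H$. This is multiplication of an $n \times n$ matrix by an $n \times O(n^{\delta/2})$ matrix, costing $n^{\omega(1, 1, \delta/2) + \eps} = n^{\omega(\delta/2, 1, 1) + \eps}$ operations by symmetry of the rectangular matrix multiplication exponent in its outer arguments. Filling in these columns alongside the previously recovered light columns completes $C$.

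The principal technical obstacle is the deterministic, ring-agnostic compressed sensing step: explicitly constructing $\Phi$ with the required spark property over an arbitrary ring $R$, supplying a decoder that locates the support and solves the induced invertible linear system, and ensuring that the total decoding cost over all $n$ columns fits into the $n^{\omega(\delta/2, 1, 1) + \eps}$ budget. The $n^{\eps}$ slack in the exponent must absorb both the sub-polynomial overhead inherent in approaching the infimum $\omega(\delta/2,1,1)$ and any polylogarithmic overhead from constructing $\Phi$ and from solving the $O(T) \times O(T)$ linear systems during per-column recovery.
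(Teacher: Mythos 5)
Your first pass matches the paper's in spirit (sketch all columns at once via $(\Phi A)B$ with an $O(n^{\delta/2})$-row deterministic measurement matrix, then decode each column), but the proposal has two genuine gaps, either of which breaks it. First, the heavy-column detection step is unsound: with $m = O(T) \ll n$ rows, the kernel of $\Phi$ necessarily contains vectors of sparsity larger than $2T$, so a column $C_j$ with more than $T$ non-zero entries can have exactly the same sketch $\Phi C_j$ as some $T$-sparse vector. The spark condition only guarantees uniqueness among $T$-sparse preimages; it does not force the decoder to ``report inconsistency'' on heavy columns. Hence the decoder can silently return a wrong $T$-sparse vector for a heavy column, that column is never flagged, and your second pass (recomputing only $A B_H$ for flagged columns) never repairs it. No choice of $\Phi$ with $m \ll n$ can fix this detection claim. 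The paper sidesteps the need to identify heavy columns entirely: since the residual $C - C'$ can be non-zero only on the at most $t \approx n^{\delta/2}$ heavy columns, every \emph{row} of the residual is $t$-sparse, so a second sketch-and-recover pass applied to $H(AB - C')^T$ (again via fast rectangular multiplication) recovers the residual exactly, regardless of which columns were decoded incorrectly.

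Second, the compressed sensing primitive you invoke does not exist in the generality you need. A Vandermonde/GRS-style matrix with every $2T$ columns linearly independent requires many evaluation points whose pairwise differences are invertible, which fails over arbitrary rings (e.g.\ $\F_2$, $\Z/2^k\Z$, rings with zero divisors), and the associated syndrome decoder needs division. The paper instead uses the scheme of \cite{BGIKS08} instantiated with the \cite{GUV09} expanders: the measurement matrix is binary and $\poly(\log n)$-column-sparse, and recovery uses only additions, equality tests, and majority counting, so it is ring-agnostic. Relatedly, your accounting of the decoding cost is off: solving an $O(T)\times O(T)$ linear system per column is polynomial in $T$, not polylogarithmic overhead, and for $\delta > 1$ even $n \cdot T^2$ already exceeds $n^{\omega(\delta/2,1,1)+\eps}$. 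One needs per-column recovery in time nearly linear in $T$ (the paper's scheme achieves $t^{1+\alpha}\poly(\log n)$), plus a measurement matrix constructible in $\widetilde{O}(n)$ time; both requirements are exactly why the expander-based scheme is used.
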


	\begin{restatable}[Randomized OSMM]{maintheorem}{MainRand}
		\label{thm:rand-osmm-intro}
		Let $\delta \in [0, 2]$, let $R$ be a ring, and let $\eps > 0$ be a constant.
		There is a randomized algorithm for solving $\delta$-$\OSMM_R$ on $n \times n$ matrices that performs $O(n^{\beta + \eps})$ operations over $R$, where
		\[
		\beta := 
		\begin{cases}
			2 & \text{if $\delta \leq 1 + \omega^{\perp}$ ,} \\
			\omega(\delta - 1, 1, 1) & \text{otherwise .}
		\end{cases}
		\]
	\end{restatable}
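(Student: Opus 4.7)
The plan is to combine a randomized compressed-sensing reduction with a column-wise matrix multiplication verification, exploiting the observation that since $AB$ has at most $O(n^\delta)$ nonzero entries, the average sparsity of a column of $AB$ is $O(n^{\delta-1})$. This average is exactly the ``budget'' we can afford to spend per column of a linear sketch of $AB$, and it matches the shape of the rectangular matrix product that governs the target running time.

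The first step is to pick a sketching matrix $\Phi \in R^{k\times n}$ with $k = \Theta(n^{\delta-1})$ whose rows support compressed-sensing decoding using only ring operations --- for instance, a Vandermonde or generalized Reed--Solomon style sketch --- and to compute $\Phi AB$ as two consecutive rectangular products $(\Phi A)\cdot B$. Each such product has shape $(n^{\delta-1}\times n)\cdot(n\times n)$ and therefore costs $\tilde O(n^{\omega(\delta-1,1,1)})$ ring operations, by definition of the rectangular exponent. In the second step, for each column $j \in [n]$, I would run the sparse-recovery decoder on the $j$-th column of $\Phi AB$ to produce a candidate column $C_{\cdot,j}$; the imbalance in per-column sparsity is handled by a standard doubling trick, running the decoder at geometrically increasing sparsity targets and accepting the first level at which a consistent candidate appears. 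The total decoding work is $\tilde O(n^\delta)$, which is absorbed into the sketch-computation cost.

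In the third step, we certify the candidate matrix $C$ against $AB$ using a variant of Freivalds' check applied column-wise (i.e.\ $\ColMMV$), which runs in $\tilde O(n^2)$ ring operations; on failure we repeat the pipeline with fresh randomness, and $O(\log n)$ repetitions suffice by standard amplification. The regime $\delta \leq 1+\omega^\perp$ of the theorem is handled by the same pipeline: by the definition of $\omega^\perp$ we have $\omega(\delta-1,1,1)=2$ whenever $\delta-1 \leq \omega^\perp$, so the running time collapses to $\tilde O(n^2)$ and yields the first branch $\beta = 2$.

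The main obstacle I anticipate is making the sparse-recovery step both ring-generic (many classical compressed-sensing schemes tacitly assume a field with enough structure, e.g.\ to invert Vandermonde-like systems) and robust across $n$ columns with heterogeneous sparsities, without producing so many false positives that verification no longer fits inside the budget $\tilde O(n^{\omega(\delta-1,1,1)})$. I would expect this to be resolvable by combining a structured ring-only sketch (Vandermonde or generalized Reed--Solomon) with $O(\log n)$ independent repetitions, and by using $\ColMMV$ as the final arbiter to discard any remaining false positives column-by-column.
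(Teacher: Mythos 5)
Your high-level architecture (sketch $AB$ via a compressed-sensing matrix, decode column-by-column, certify with $\ColMMV$/Freivalds) matches the paper, but there is a genuine gap in the central step: you fix the sketch dimension at $k = \Theta(n^{\delta-1})$, the \emph{average} column sparsity, and sketch all $n$ columns once. The output-sparsity promise $\norm{AB}_0 \leq O(n^{\delta})$ does not bound the sparsity of individual columns, which can be as large as $\min\set{n, O(n^{\delta})}$; a sketch with only $\Theta(n^{\delta-1})$ rows simply does not contain enough information to recover such columns, no matter what decoder you run on it. Your ``doubling trick'' raises the \emph{sparsity target of the decoder} but not the number of measurements, so it cannot get past this information-theoretic barrier, and re-running the whole pipeline with fresh randomness does not help either: the heavy columns fail every time. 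The paper resolves exactly this tension with an iterative guess-and-check loop: at level $t = 2^i$ it uses a measurement matrix with roughly $t^{1+\alpha}$ rows, but applies it only to the columns not yet certified correct by $\ColMMV$; a counting argument shows there are at most $O(n^{\delta}/t)$ such columns, so each level costs roughly $n^{\omega(\eta,1,\delta-\eta)}$ with $\eta + (\delta-\eta) = \delta$, which is at most $n^{\omega(\delta-1,1,1)}$ by the convexity/splitting bound (\cref{cor:alphaFla}). Without shrinking the column set as the sketch grows, handling the heaviest columns would force a product of cost up to $n^{\omega}$, destroying the claimed bound.

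A second, smaller but real problem is your choice of a Vandermonde/generalized Reed--Solomon sketch: its decoding requires inverting Vandermonde-type systems, i.e.\ field (or at least unit-rich) structure, whereas the theorem is claimed over an arbitrary ring $R$ (including rings with zero divisors). You flag this as an obstacle but your proposed fix is again GRS plus repetition, which does not remove the need for inverses. The paper instead uses the expander-based scheme of~\cite{BGIKS08} instantiated with the~\cite{GUV09} expanders: recovery uses only ring additions, subtractions, and equality tests, and the measurement matrix is binary and $\poly(\log n)$-column-sparse, which is also what makes computing $HA$ in $\Ot(n^2)$ time possible---another ingredient your cost accounting implicitly needs but does not supply.
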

	
	Since $\omega^{\perp} \geq 0.321$~\cite{vassilevska-williams2024new}, the algorithms in \cref{thm:det-osmm-intro,thm:rand-osmm-intro} run in essentially quadratic time (which is the best possible without making an assumption about input sparsity) for all $\delta \leq 0.642$ and $\delta \leq 1.321$, respectively. Furthermore, both algorithms run in $O(n^{\omega - \eps})$ time for some $\eps = \eps(\delta) > 0$ when $\delta < 2$.\footnote{Note that $\delta - 1 \leq \delta / 2$ for $\delta \in [0, 2]$ so the randomized algorithm in \cref{thm:rand-osmm-intro} is always at least as fast as the deterministic algorithm in \cref{thm:det-osmm-intro}.}
	We also extend our algorithms to the \emph{fully sparse} setting where the input matrices $A, B$ are themselves sparse (in addition to having the promise that their product $AB$ is sparse), and get faster algorithms in this case; see \cref{sec:fsmm,thm:fsmm-deterministic,thm:fsmm-randomized}.
	Although we choose to highlight our somewhat simpler $\OSMM$ algorithms, they actually follow as corollaries from our algorithms for the fully sparse case.
	
	We note that our extension of \cref{thm:rand-osmm-intro} to the fully sparse setting (given in \cref{thm:fsmm-randomized}) is equivalent to~\cite[Lemma 3.1]{abboud2024time}, the main technical lemma in~\cite{abboud2024time} for fully sparse matrix multiplication on general rings. 
	However, our corresponding algorithm uses different techniques and is arguably simpler.
	In \cref{prop:fsmm-rmm-equiv} we also show that our randomized algorithms in \cref{thm:rand-osmm-intro,thm:fsmm-randomized} (and therefore also~\cite[Lemma 3.1]{abboud2024time}) are \emph{optimal} via a reduction from rectangular matrix multiplication. 
	
	We state the running time bounds of our algorithms in terms of operations over a ring $R$, but these bounds immediately imply algorithms with similar running times (in the usual sense) over many standard domains. Over the integers with entries in $[-M, M]$ each operation requires at most $\poly(\log M, \log n)$ time, and over finite fields $\F_q$ each operations requires at most $\poly(\log q, \log n)$ time. Assigning unit cost to operations on real numbers is standard in the Real RAM model.

	\subsection{Applications and Comparison with Prior Work}
	\label{sec:comparison-prior-work}
	
	Input- and output-sparse matrix multiplication have a number of applications including answering certain database queries~\cite{AP09,GWWZ15,DHK20}, computing transitive closures in sparse graphs~\cite{GWWZ15,BCH15}, multi-source breadth-first search~\cite{ZWWW11,GRS06, GjCHSWLW23}, Markov clustering~\cite{SHAB20}, and error correction of matrix products \cite{GLLPT17,kunnemann2018nondeterministic,roche18}.\footnote{We note that error correction of matrix products is equivalent to $\OSMM$.}
	
	A long line of work has studied both input- and output-sparse matrix multiplication~\cite{Gu78,yuster2005fast,journals/ipl/IwenS09,L09,AP09,P12,kutzkov13,JS15,GWWZ15,kunnemann2018nondeterministic,roche18,DHK20,abboud2024time}.
	Our deterministic algorithm in \cref{thm:det-osmm-intro} substantially improves on prior deterministic $\delta$-$\OSMM$ algorithms for most values of $\delta$, and is never slower than them. See~\cref{fig:osmm-exp-bounds}.
	As noted above, our algorithm runs in essentially optimal quadratic time for $\delta \leq 0.642$, and for every constant $\delta < 2$ it runs in $O(n^{\omega - \eps})$ time for some $\eps = \eps(\delta) > 0$.
	
	The main prior works on deterministic $\OSMM$ are Kutzkov~\cite{kutzkov13}, which gives a $O(n^2 + n^{2 \delta + 1})$-time algorithm for $\OSMM$ over the reals, and K\"{u}nnemann~\cite{kunnemann2018nondeterministic}, which gives a  $O(n^{2 + \delta/2} + n^{2 \delta})$-time algorithm for $\OSMM$ over the integers.
	Their algorithms only run in $O(n^{\omega - \eps})$ time for $\delta < (\omega - 1)/2 \approx 0.686$ and $\delta < 2 (\omega - 2) \approx 0.744$, respectively.
	(One can also show analytically that \cref{thm:det-osmm-intro} is at least as fast as~\cite{kutzkov13,kunnemann2018nondeterministic} for all $\delta$.\footnote{By \cref{thm:rectMM-ub} we have that $\omega(1, 1, \delta/2) \leq 1.825 + 0.548 \delta/2$ from which it follows that $\omega(1, 1, \delta/2) \leq 2 \delta + 1$ for all $\delta \geq 0.5$. So, our algorithm is always at least as fast as the algorithm in~\cite{kutzkov13}.
		Furthermore, by \cref{prop:alphaProps}, \cref{item:alpha0}, $\omega(\delta/2, 1, 1) \leq 2 + \delta/2$ and so our algorithm is always at least as fast as the algorithm in~\cite{kunnemann2018nondeterministic}.})
	
	Finally, we note a connection between our algorithm for $\OSMM$ and an algorithm from~\cite{BGGW24-MMV} for the All Zeroes Problem, the problem of checking whether $AB = 0$ for two $n \times n$ input matrices $A, B$.
	(We note that~\cite{kunnemann2018nondeterministic} showed that the All Zeroes Problem and the better-known Matrix Multiplication Verification Problem ($\MMV$) are essentially equivalent.)
	Specifically, \cref{thm:det-osmm-intro} achieves the same running time of roughly $n^{\omega(\delta/2, 1, 1)}$ for $\delta$-$\OSMM$ that a deterministic algorithm in \cite{BGGW24-MMV} achieved for the All Zeroes Problem with the same promise that the input matrices $A, B$ satisfy $\norm{AB}_0 \leq O(n^{\delta})$.
	The All Zeroes Problem with the guarantee $\norm{AB}_0 \leq O(n^{\delta})$ reduces to $\delta$-$\OSMM$ (this is a trivial decision-to-search reduction), and so one can view \cref{thm:det-osmm-intro} as an upgrade of the deterministic algorithm in~\cite{BGGW24-MMV}.
	We also emphasize again that \cref{thm:det-osmm-intro,thm:rand-osmm-intro} work over \emph{any} ring $R$ whereas the algorithms from each of~\cite{kutzkov13,kunnemann2018nondeterministic,BGGW24-MMV} only work on specific domains.
	
	The sole deterministic $\OSMM$ algorithm that is faster than ours is an algorithm that (only) works for multiplication over nonnegative integers in~\cite[Lemma 3.11]{abboud2024time}.\footnote{We note that algorithms for matrix multiplication over nonnegative integers imply algorithms for Boolean matrix multiplication as a special case. Recall that the $(i,j)$th entry of the Boolean product of matrices $A, B \in \bit^{n \times n}$ is defined as $\lor_{k = 1}^n (A_{i, k} \land B_{k, j})$. To compute this product it suffices to compute the standard matrix product $AB$ over the integers and replace all positive entries of the output with ones.}
	Their algorithm's running time is stated for the fully sparse setting, but it runs in roughly $O(n^{\omega(\delta - 1, 1, 1)})$ time in the output sparse setting. This matches the running time of randomized algorithms in their work and ours for $\OSMM$ over general rings.

	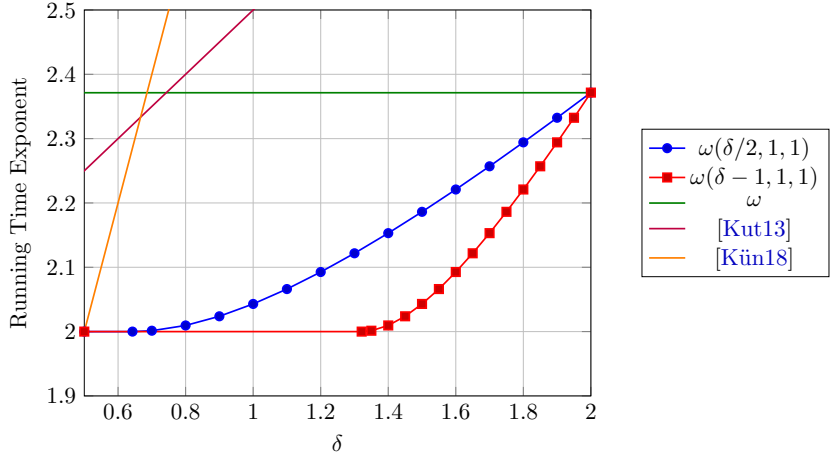
\begin{figure}
		\begin{center}
			\begin{tikzpicture}[scale=0.8]
				\begin{axis}[
					width=10cm,
					height=8cm,
					xlabel={$\delta$},
					ylabel={Running Time Exponent},
					legend style={at={(1.1, 0.5)}, anchor=west, legend columns=1},
					grid=major,
					ymin=1.9, ymax=2.5,
					xmin=0.5, xmax=2,
					enlargelimits=false
					]
					
					\addplot+[thick] coordinates {
						(0.5, 2)
						(0.642668, 2)
						(0.7, 2.001363)
						(0.8, 2.009541)
						(0.9, 2.023788)
						(1, 2.042994)
						(1.1, 2.066134)
						(1.2, 2.092631)
						(1.3, 2.121734)
						(1.4, 2.153048)
						(1.5, 2.18621)
						(1.6, 2.220929)
						(1.7, 2.256984)
						(1.8, 2.294209)
						(1.9, 2.33244)
						(2, 2.371552)
					};
					\addlegendentry{$\omega(\delta/2, 1, 1)$}
					
					\addplot+[thick] coordinates {
						(0.5, 2)
						(1.321334, 2)
						(1.35, 2.001363)
						(1.4, 2.009541)
						(1.45, 2.023788)
						(1.5, 2.042994)
						(1.55, 2.066134)
						(1.6, 2.092631)
						(1.65, 2.121734)
						(1.7, 2.153048)
						(1.75, 2.18621)
						(1.8, 2.220929)
						(1.85, 2.256984)
						(1.9, 2.294209)
						(1.95, 2.33244)
						(2, 2.371552)
					};
					\addlegendentry{$\omega(\delta-1, 1, 1)$}
					
					\addplot[green!50!black, thick]{2.371339};
					\addlegendentry{$\omega$}
					
					\addplot[purple, thick, samples=1000] {
						2 + 0.5*x
					};
					\addlegendentry{\cite{kutzkov13}}
					
					\addplot[orange, thick, samples=1000] { 
						2*x + 1
					};\addlegendentry{\cite{kunnemann2018nondeterministic}}
					
				\end{axis}
			\end{tikzpicture}
		\end{center}
		
		\caption{A plot of the running time exponents of several algorithms for $\delta$-$\OSMM$, with arbitrarily small polynomial factors suppressed.
			The blue and red curves give upper bounds on the running times of our deterministic roughly $n^{\omega(\delta/2, 1, 1)}$-time algorithm in~\cref{thm:det-osmm-intro} and our roughly $n^{\omega(\delta - 1, 1, 1)}$-time randomized algorithm in~\cref{thm:rand-osmm-intro}, respectively. (The latter bound also holds for a randomized algorithm in~\cite{abboud2024time}.)
			The points on the curves come from upper bounds on rectangular $\MM$ exponents in~\cite{vassilevska-williams2024new}, and the line segments connecting them are justified by the convexity of $\omega(\cdot, 1, 1)$.
			We also plot the $\MM$ exponent $\omega$, and the exponents of the $O(n^{\max \set{2, 2 \delta + 1}})$-time algorithm of Kutzkov~\cite{kutzkov13} and the $O(n^{\max \set{2 + \delta/2, 2 \delta}})$-time algorithm of K\"{u}nnemann~\cite{kunnemann2018nondeterministic};~\cite{kutzkov13,kunnemann2018nondeterministic} are the main prior works giving deterministic algorithms for $\OSMM$.}
		\label{fig:osmm-exp-bounds}
	\end{figure}

	\subsection{Technical Overview}
	\label{sec:technical-overview}

	We next give an overview of our algorithms, which use simple primitives: fast rectangular matrix multiplication, a compressed sensing scheme with good parameters, and an algorithm for a variant of matrix multiplication verification.
	Define $\norm{\vec{v}}_0$ (respectively, $\norm{M}_0$) to be the number of non-zero entries in (i.e., Hamming weight of) a vector $\vec{v}$ (respectively, matrix $M$). We call a vector~$\vec{v}$ (respectively, matrix $M$) \emph{$t$-sparse} if $\norm{\vec{v}}_0 \leq t$ (respectively, if $\norm{M}_0 \leq t$).

	\paragraph{Output-column-sparse $\MM$ using compressed sensing.}
	
	In the \emph{compressed sensing} problem with parameters $n, m, t \in \Z^+$ with $1 \leq t \leq n$, the goal is to compute an $m \times n$ matrix $H = H(m, n, t)$ called a \emph{measurement matrix}, and to design an efficient \emph{sparse recovery} algorithm $\cR$ (depending on~$H$) such that for every $t$-sparse vector $\vec{x}$ of length $n$, $\cR(H \vec{x}) = \vec{x}$. That is, the goal is for $\cR$ to be able to recover any $t$-sparse vector $\vec{x}$ from $H \vec{x}$.
	
	Several previous works including \cite{journals/ipl/IwenS09} use compressed sensing to solve 
	$\MM$ in the case where every column $\vec{c}_i$ of the output matrix $C = (\vec{c}_1, \ldots, \vec{c}_n) := AB$ is promised to be $t$-sparse.
	The technique works by computing $HAB$ parenthesized as $(HA) B$ using fast rectangular matrix multiplication, and then recovering the columns $\vec{c}_i$ of $C = AB$ from $HC$ using $\cR$ as $\vec{c}_i = \cR(H \vec{c}_i)$ for $i = 1, \ldots, n$.
	If computing $H$ takes $T_{\cM}(n)$ time, $\cR$ takes $T_{\cR}(n)$ time, and $m = O(n^{\beta})$ for some $\beta \in [0, 1]$, then for any constant $\eps > 0$ the overall running time of this algorithm is
	\begin{equation} \label{eq:osmm-via-cs-runtime}
		O(T_{\cM}(n) + n^{\omega(\beta, 1, 1) + \eps} + n \cdot T_{\cR}(n)) \ \text{.}
	\end{equation}
	
	We note that~\cite{journals/ipl/IwenS09} uses the compressed sensing scheme of~\cite{conf/soda/Indyk08}. The compressed sensing scheme there has good parameters and has a fast sparse recovery algorithm $\cR$. However, while~\cite{conf/soda/Indyk08} uses a polynomial-time construction of expanders, it is unclear that its construction runs in better-than-$n^{\omega}$ time. (We achieve $T_{\cM}(n) = \Ot(n)$ time in \cref{thm:main-compressed-sensing}.)
	
	\paragraph{Reducing $\OSMM$ to compressed sensing.} 
	
	As our first contribution, we show how to replace the assumption in the above compressed-sensing-based algorithm for $\MM$ that every column of $C = AB$ is $t$-sparse with the ``global'' assumption that $C$ is $t^2$-sparse. If $t^2 = O(n^{\delta})$, this is exactly the assumption in $\delta$-$\OSMM$.
	To do this, we use a two-pass reduction, which roughly speaking first computes a matrix $C'$ that agrees with $C$ on sparse columns and then computes all rows of $C - C'$.
	This reduction appears formally as~\cref{alg:osmm-deterministic}.
	
	Assume without loss of generality that $\cR$ always outputs \emph{some} vector of length $n$ even when its input is not $t$-sparse.
	More specifically, assume that $\vec{c}_i' := \cR(H \vec{c}_i)$ is always a vector of length $n$ with the guarantee that $\vec{c}_i' = \vec{c}_i$ if $\vec{c}_i$ is $t$-sparse.
	Let $C' := (\vec{c}_1', \ldots, \vec{c}_n')$. The key observation is that if $C = AB$ is $t^2$-sparse, then \emph{every} row of $C - C'$ must be $t$-sparse. To see this, note that if $C$ is $t^2$-sparse, then it has fewer than $t$ columns that are \emph{not} $t$-sparse, and that these are the only (potentially) non-zero columns in $C - C'$. %
	We can then recover the entries of $C - C'$ (and therefore also $C$, since we know $C'$) by computing $H(C - C')$ as $(HA)B - HC'$ and running $\cR$ on each row of the result. %
	
	\paragraph{Reducing $\OSMM$ to compressed sensing and a variant of $\MMV$.}
	
	Our second reduction is from $\OSMM$ to compressed sensing and a variant of the matrix multiplication verification problem ($\MMV$) that we call $\ColMMV$.
	The goal of $\ColMMV$ is, given three matrices $A$, $B$, and $C'$ as input, to identify all of the indices $i \in [n]$ such that $\vec{c}_i \neq \vec{c}_i'$, where $C = (\vec{c}_1, \ldots, \vec{c}_n) := AB$ and $C' = (\vec{c}_1', \ldots, \vec{c}_n')$.
	This reduction appears formally as \cref{alg:osmm_via_mmv}.
	
	The main idea of our second reduction is to sequentially ``guess and check'' all columns of $C$ of sparsity at most $2^i$ for $i = 0, \ldots, \ceil{\log n}$.
	Namely, in iteration $i$ we use a compressed sensing scheme as above to compute a matrix $C'$ that agrees with all columns of $C$ of sparsity at most $t = 2^i$.
	(As before, we assume that $\cR$ always outputs some vector of length $n$ regardless of its input.)
	We then use our $\ColMMV$ oracle to identify which of the columns of $C$ were computed correctly in $C'$. We use this information to update the corresponding columns of our output matrix and to remove the corresponding columns from $B$. We emphasize that this reduction itself is deterministic, and the only use of randomness in the algorithm summarized in \cref{thm:rand-osmm-intro} comes from solving $\ColMMV$.
	
	Correctness follows simply by analyzing the last iteration (in which $i = \ceil{\log n}$) since each column of $C$ has sparsity at most $n \leq 2^{\ceil{\log n}}$.
	In fact, the reduction does not require an upper bound on the sparsity of $C$ as input and works regardless of this sparsity---it is a reduction from $\MM$, not just $\OSMM$.
	However, the sparser $C$ is the faster the reduction is.
	Let $J_{i + 1} \subseteq [n]$ be the set of indices of columns of $C$ with more than $2^i$ non-zero entries.\footnote{In \cref{alg:osmm_via_mmv}, $J_{i + 1}$ can technically be a subset of these columns if the sparse recovery algorithm happens to succeed (i.e., when $\cR(H\vec{x}) = \vec{x}$) at an earlier iteration $j < i$ even when $\vec{x}$ is not $2^j$-sparse. But, $\card{J_{i + 1}}$ being smaller only makes the algorithm run faster.}
	
	Assume that it is possible to multiply the $m \times n$ measurement matrix $H$ by an arbitrary $n \times n$ matrix $A$ in $O(n^{2 + \eps})$ time for any constant $\eps > 0$. When we instantiate the reduction with the concrete compressed sensing scheme described in \cref{thm:main-compressed-sensing} this will be the case since our measurement matrices $H$ will be sparse.
	Then the bottleneck in each iteration $i$ is computing the product of $HA$ (an $m \times n$ matrix) and $B$ (an $n \times \card{J_i}$ matrix).
	Let $\alpha := i/\log n$. We know that $\card{J_{i + 1}} \leq O(n^{\delta}/2^i) = O(n^{\delta - \alpha})$, and if $m \approx t = 2^i = n^{\alpha}$ then it is possible to compute the product of $HA$ and $B$ in roughly $n^{\omega(\alpha, 1, \delta - \alpha)}$ time.
	By \cref{cor:alphaFla}, $\omega(\alpha, 1, \delta - \alpha) \leq \omega(\delta - 1, 1, 1)$ for any $\alpha$, and so the claimed running time of $O(n^{\omega(\delta - 1, 1, 1) + \eps})$ follows since we only need to compute $O(\log n)$ such matrix products and the multiplicative $\log n$ factor gets absorbed by the $\eps$ in the exponent.
	
	\paragraph{From Output-Sensitive $\MM$ to Fully Sparse $\MM$.}
	In \cref{sec:fsmm}, we extend our algorithms to the \emph{fully sparse} case where we consider the sparsity of the $n \times n$ input matrices $A, B$ in addition to the sparsity of $AB$. I.e., we are promised not only that $\norm{AB}_0 \leq O(n^{\dout})$ for some $\dout \in [0, 2]$, but also that $\norm{A}_0, \norm{B}_0 \leq O(n^{\din})$ for $\din \in [0, 2]$.\footnote{Without loss of generality we can take $\dout \leq 2 \din$.}
	For $\alpha, \beta, \gamma \in [0, 1]$, define $\omega_{\din}(\alpha, \beta, \gamma)$ to be the infimum over $\omega' \geq 0$ such that it is possible to compute the product of an $n^{\alpha} \times n^{\beta}$ matrix $A$ and an $n^{\beta} \times n^{\gamma}$ matrix $B$ in $O(n^{\omega'})$ time when $\norm{A}_0, \norm{B}_0 \leq O(n^{\din})$. %
	We note that there are algorithms showing that $\omega_{\din}(\alpha, \beta, \gamma)$ is smaller than $\omega(\alpha, \beta, \gamma)$ for sufficiently small values of $\din$~\cite{yuster2005fast,KSV06}.
	
	\cref{thm:fsmm-deterministic} (respectively, \cref{thm:fsmm-randomized}) extends our deterministic (respectively, randomized) algorithm in \cref{thm:det-osmm-intro} (respectively, \cref{thm:rand-osmm-intro}) to the fully sparse setting.
	\cref{thm:fsmm-deterministic} achieves a running time of roughly $O(n^{\omega_{\din}(\dout/2, 1, 1)} + n^{\dout/2 + 1})$, which is equal to the $O(n^{\omega(\dout/2, 1, 1)})$ running time of \cref{thm:det-osmm-intro} when $\din = 2$, i.e., when there is no promise on the sparsity of $A, B$ as in $\dout$-$\OSMM$.
	
	\cref{thm:fsmm-randomized} achieves a running time of roughly $O(n^{\din} + n^{\beta})$, for $\beta := \max_{\eta \in [0, \min\set{\dout, 1}]} \omega_{\din}(\eta, 1, \dout - \eta)$. Although this expression looks intricate, it appears in~\cite[Lemma 3.1]{abboud2024time} (to which it is equivalent) and arises naturally in our FSMM proof of optimality; see \cref{prop:fsmm-rmm-equiv}. 
	(For $\dout \geq 1$ and $\din = 2$, we also get that $\omega_{\din}(\eta, 1, \dout - \eta) \leq \omega(\dout - 1, 1, 1)$ by \cref{cor:alphaFla}, and moreover $\omega(\dout - 1, 1, 1) \geq 2 \geq \din$. So, we achieve the running time of \cref{thm:rand-osmm-intro}.)
	
	Extending our algorithms to the fully sparse setting essentially only uses basic facts about input-sparse matrix multiplication and properties of our compressed sensing scheme.
	Specifically, it uses the fact that our measurement matrix $H$ has very sparse columns, and that on input $H\vec{x}$ for a $t$-sparse vector $\vec{x}$ of length $n$, $\cR$ runs in time roughly $t$ rather than time roughly $n$.
	In particular, we use that such a matrix $H$ admits fast multiplication by any matrix $A$, and if $A$ is sparse then $HA$ is still nearly as sparse.
	(To achieve all of this, we use sparse representations of matrices and vectors.)
	
	\paragraph{Instantiating our reductions.}
	
	To instantiate our reductions and turn them into the algorithms summarized by \cref{thm:det-osmm-intro,thm:rand-osmm-intro}, we need to give an explicit compressed sensing scheme and an algorithm for $\ColMMV$.
	
	We use a compressed sensing scheme due to Berinde, Gilbert, Indyk, Karloff and Strauss~\cite{BGIKS08}, which requires certain expander graphs. To make the scheme of~\cite{BGIKS08} deterministic and obtain good parameters, we instantiate it with the expander graphs later studied in Guruswami, Umans, and Vadhan~\cite{GUV09}. 
	In particular, our measurement matrices $H$ correspond to the adjacency matrices of the bipartite expanders constructed in~\cite{GUV09} from Parvaresh-Vardy codes. They are sparse and binary.
	The main theorem summarizing this scheme is \cref{thm:main-compressed-sensing}.
	
	For completeness, we present and analyze the expander construction in~\cite{GUV09} and the sparse recovery algorithm in~\cite{BGIKS08} in \cref{sec:compressedSensing}. We show that they are both very efficient. Additionally, we observe that the sparse recovery algorithm in~\cite{BGIKS08} works not just over the real numbers (as it was originally presented), but over arbitrary rings.  
	
	We emphasize the subtlety of identifying an appropriate compressed sensing scheme for our purposes. We need algorithms that compute the measurement matrix and perform sparse recovery not just in (arbitrary) polynomial time, but in nearly quadratic and nearly linear time, respectively. Moreover, we need these algorithms to be deterministic---many works on compressed sensing sample a random measurement matrix $H$ (see, e.g., \cite{journals/tit/CandesRT06,journals/pieee/GilbertI10}).
	Finally, for \cref{thm:rand-osmm-intro} we need multiplication by $H$ to be very efficient.
	
	The algorithm for $\ColMMV$ that we use (stated formally in \cref{lem:freivalds-for-colMMV}) is a modified version of Freivalds's algorithm for ``plain'' $\MMV$~\cite{freivalds1979fast}. This algorithm is the only use of randomness in our our randomized $\OSMM$ algorithm, and so if it could be derandomized efficiently then we would get a \emph{deterministic} algorithm for $\delta$-$\OSMM$ running in roughly $n^{\omega(\delta - 1, 1, 1)}$ time.
	(Similarly, this algorithm is the only use of randomness in our randomized $\FSMM$ algorithm.)
	
	\paragraph{An alternative randomized algorithm.} We note that it is possible to implement a variant of our randomized algorithm that achieves a similar running time to \cref{thm:rand-osmm-intro} if it is possible to approximate the number of non-zero entries in each column of $C = (\vec{c}_1, \ldots, \vec{c}_n) = AB$ to within a constant factor efficiently.
	
	In fact, we can compute estimates $\ct_i$ of the sparsity of each column $\vec{c}_i$ satisfying
	$
	1/2 \cdot \norm{\vec{c}_i}_0 < \ct_i < 2 \cdot \norm{\vec{c}_i}_0
	$
	with high probability using a randomized algorithm from~\cite{PS14}.
	We can then use these estimates to multiply $A$ and $B$ efficiently. We first partition the columns of $B$ (and therefore $AB$) into blocks $B_j$, where $B_j$ consists of all columns $\vec{b}_i$ such that $2^{j-1} \leq \ct_i < 2^j$ for $j = 1, \ldots, r$, where $r \approx \log_2 n$. We then compute $A B_j$ for each $j$ using the algorithm for output-column-sparse matrix multiplication sketched earlier using the assumption (which holds with high probability) that each column of $A B_j$ is $2^{j+1}$-sparse.
	
	We note that the column-sparsity-estimation-based algorithm sketched here uses randomness in a different way from \cref{thm:rand-osmm-intro}; it uses randomness to estimate column sparsities and does not need to solve $\ColMMV$. However, we believe that the reduction from $\OSMM$ to $\ColMMV$ that we use to prove \cref{thm:rand-osmm-intro}---which can be viewed as a search-to-decision reduction of sorts---is elegant and more amenable to derandomization. In particular, a strong resolution to the famous question of derandomizing Freivalds's algorithm would derandomize \cref{thm:rand-osmm-intro} as well.
	
	\subsection{Acknowledgments}
	HB and AG would like to thank Divesh Aggarwal, the National University of Singapore (NUS), and the Centre for Quantum Technologies (CQT). Some of this work was completed while visiting them. The authors would also like to thank Thatchaphol Saranurak for suggesting the use of the expander construction in~\cite{GUV09}, Noah Stephens-Davidowitz for helpful comments, and the anonymous reviewers of a prior version of this work for pointing out the connection between~\cite[Lemma 3.1]{abboud2024time} and our work.

	\section{Preliminaries}
	\label{sec:prelims}
	By $\Z^+$ we denote the set of positive integers. For~$n\in\Z^+$, $[n]$ denotes the set $\{1,\ldots,n\}$. Let $R$ be a ring, $\vec{v}\in R^n$ be a vector, and $M\in R^{m\times n}$ be a matrix. Then by $\norm{\vec{v}}_0$ and $\norm{M}_0$ we denote the number of non-zero entries in~$\vec{v}$ and $M$, respectively. We say $\vec{v}$ (respectively, matrix~$M$) is \emph{$t$-sparse} if $\norm{\vec{v}}_0 \leq t$ (respectively, if $\norm{M}_0 \leq t$). We say that $M$ is \emph{$t$-column-sparse} if each of its columns is $t$-sparse. For an $n\in\Z^+$, $\vec{0}_n$ and $\vec{1}_n$ denote $n$-dimensional vectors with all zeros and all ones, respectively. 
	
	We use $\log(\cdot)$ to denote the logarithm base~2, i.e., $\log(2^n) = n$. We use the notation $\widetilde{O}(\cdot)$ to suppress polylogarithmic factors in the argument.

	\subsection{Matrix Multiplication}
	We first define the main two variants of matrix multiplication that we study.
	
	\begin{definition}[Output and Fully Sparse Matrix Multiplication] \label{def:fsmm}
		Let $R$ be a ring, and let $\din, \dout \in [0, 2]$.
		We define the $(\din, \dout)$-\emph{Fully Sparse Matrix Multiplication Problem} over $R$ ($(\din, \dout)$-$\FSMM_R$) as follows.
		The input consists of $A, B \in R^{n \times n}$ for some $n \in \Z^+$ with $\norm{A}_0, \norm{B}_0 \leq O(n^{\din})$ satisfying the promise that $\norm{AB}_0 \leq O(n^{\dout})$.
		The goal is to compute $AB$.
		
		We additionally define the $\dout$-\emph{Output-Sparse Matrix Multiplication Problem} over $R$ ($\dout$-$\OSMM_R$) as $(2, \dout)$-$\FSMM_R$.
	\end{definition}
	
	When both $\din = 2$ and $\dout = 2$ this corresponds to the problem of matrix multiplication in its general form. 
	Furthermore, note that since the product of two $t$-sparse matrices is $t^2$-sparse, we may without loss of generality assume that $\dout \leq 2 \din$.
	
	We next define rectangular matrix multiplication exponents, input-sparse rectangular matrix multiplication exponents, and the dual matrix multiplication exponent.
	
	\begin{definition}[Rectangular and Input-Sparse Rectangular Multiplication Exponents] \label{def:MM-exponents}
		For constants $\alpha,\beta,\gamma\geq 0$, the \emph{rectangular matrix multiplication exponent $\omega(\alpha,\beta,\gamma)$} is the infimum over all $\omega'\geq 0$ such that the product of $A\in\Z^{n^\alpha \times n^\beta}$ and $B\in\Z^{n^\beta \times n^\gamma}$ can be computed using $O(n^{\omega'})$ arithmetic operations. 
		
		We extend this notation and use $\omega_{\din}(\alpha, \beta, \gamma)$ to denote the case when both $A$ and $B$ are promised to be $O(n^{\din})$-sparse for $0 \leq \din \leq \max \set{\alpha + \beta, \beta + \gamma}$.
	\end{definition}
	
	We note that the \emph{square matrix multiplication exponent} $\omega$ is simply $\omega:=\omega(1,1,1)$, and that $\omega_{\din}(\alpha, \beta, \gamma) \leq \omega_{\max \set{\alpha + \beta, \beta + \gamma}}(\alpha, \beta, \gamma) = \omega(\alpha, \beta, \gamma)$ for all $\alpha, \beta, \gamma \geq 0$.

	\begin{definition}[Dual Matrix Multiplication Exponent]
		The \emph{dual matrix multiplication exponent~$\dualMMexp$} is defined as 
		\[
		\dualMMexp := \sup \set{\omega' \geq 0 : \omega(1, 1, \omega') = 2} \;.
		\]
	\end{definition}
	A beautiful line of work~\cite{gall2018improved,alman2021refined,duan2022faster,gall2023faster,vassilevska-williams2024new,alman2024more} has established strong bounds on multiplication exponents, in particular showing that
	\begin{align*}
		\omega & < 2.371339 \text{\;\;\;\cite{alman2024more}\;,}\\
		\dualMMexp & \geq 0.321334 \text{\;\;\;\cite{vassilevska-williams2024new}\;.}
	\end{align*}
	
	\subsection{Properties of Matrix Multiplication Exponents}
	Next we list several properties of rectangular matrix multiplication exponents $\omega(\cdot, \cdot, \cdot)$.
	\begin{proposition}[{\cite{lotti1983asymptotic}}]\label{prop:alphaProps}
		Let $\alpha_1, \alpha_2, \alpha_3, \beta_1, \beta_2,\beta_3, \lambda \geq 0$.
		The following hold:
		\begin{enumerate}
			\item\label{item:alpha0} $\max(\alpha_1+\alpha_2,\alpha_1+\alpha_3, \alpha_2+\alpha_3)\leq \omega(\alpha_1,\alpha_2,\alpha_3)\leq\alpha_1+\alpha_2+\alpha_3$.
			\item\label{item:alpha2} $\omega(\lambda \alpha_1, \lambda \alpha_2, \lambda \alpha_3) = \lambda \cdot \omega(\alpha_1, \alpha_2, \alpha_3)$.
			\item\label{item:alpha3} $\omega(\alpha_1+\beta_1, \alpha_2+\beta_2,\alpha_3+\beta_3)\leq \omega(\alpha_1, \alpha_2, \alpha_3) + \omega(\beta_1, \beta_2, \beta_3)$.
			\item\label{item:alpha1} For any permutation $\pi : [3] \to [3]$,
			$\omega(\alpha_1, \alpha_2, \alpha_3) = \omega(\alpha_{\pi(1)}, \alpha_{\pi(2)}, \alpha_{\pi(3)})$.
			\item $\omega(\alpha,\beta,\gamma)$ is continuous and non-decreasing in its arguments for $\alpha,\beta,\gamma\geq0$.
		\end{enumerate}
	\end{proposition}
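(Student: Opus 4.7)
The plan is to verify each of the five properties using standard techniques from algebraic complexity: direct counting, padding, block matrix decomposition, and tensor symmetry. I would prove Items 1, 2, 3, 5 in a straightforward manner and identify Item 4 as the technically deepest part.

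For Item 1, the upper bound follows from the naive algorithm, which computes each of the $n^{\alpha_1+\alpha_3}$ output entries as an inner product of length $n^{\alpha_2}$ using $O(n^{\alpha_1+\alpha_2+\alpha_3})$ ring operations in total. The lower bound $\omega(\alpha_1,\alpha_2,\alpha_3) \geq \alpha_1 + \alpha_3$ comes from observing that the output alone has $n^{\alpha_1+\alpha_3}$ entries that (in the worst case) must be written; the symmetric bounds $\alpha_1+\alpha_2$ and $\alpha_2+\alpha_3$ follow once Item 4 is established. For Item 2, substituting $m = n^\lambda$ and applying the definition to $m^{\alpha_i}$-dimensional matrices yields the scaling law directly.

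For Item 3, I would view an $n^{\alpha_1+\beta_1} \times n^{\alpha_2+\beta_2}$ matrix $A$ as an $n^{\beta_1} \times n^{\beta_2}$ block matrix whose blocks have size $n^{\alpha_1} \times n^{\alpha_2}$, and analogously decompose $B$. Multiplying at the block level requires $O(n^{\omega(\beta_1,\beta_2,\beta_3)+o(1)})$ block operations, each of which is a $(n^{\alpha_1}, n^{\alpha_2}, n^{\alpha_3})$ matrix multiplication costing $O(n^{\omega(\alpha_1,\alpha_2,\alpha_3)+o(1)})$ scalar operations, giving the sub-additive bound. For Item 5, monotonicity follows from padding with zeros; continuity on the positive orthant follows from the convexity of $\omega(\cdot, \cdot, \cdot)$, which itself can be extracted from Items 2 and 3 via a standard dyadic averaging argument along line segments in parameter space.

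The main obstacle is Item 4. The transposition symmetry $\omega(\alpha_1, \alpha_2, \alpha_3) = \omega(\alpha_3, \alpha_2, \alpha_1)$ is immediate via $(AB)^T = B^T A^T$, but the remaining symmetries such as $\omega(\alpha_1,\alpha_2,\alpha_3) = \omega(\alpha_2, \alpha_1, \alpha_3)$ require the deeper fact that the matrix multiplication tensor, writable as $\sum_{i,j,k} x_{ij} y_{jk} z_{ki}$, is invariant under cyclic permutations of its three modes. Consequently, any bilinear algorithm for computing a $n^{\alpha_1} \times n^{\alpha_3}$ output from $n^{\alpha_1} \times n^{\alpha_2}$ and $n^{\alpha_2} \times n^{\alpha_3}$ inputs can be ``rotated'' into a bilinear algorithm for the cyclically-shifted problem at the same cost up to lower-order factors. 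Combining the cyclic and transposition symmetries yields the full $S_3$ action claimed. This classical observation, which underlies essentially all work on rectangular matrix multiplication, is what makes Item 4 the technically deepest part of the proposition.
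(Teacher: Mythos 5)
The paper does not prove this proposition at all---it is imported wholesale from Lotti and Romani~\cite{lotti1983asymptotic}---so there is no in-paper argument to compare against; your sketch has to stand on its own, and in outline it does: naive multiplication and input/output-size counting for Item 1, the substitution $m=n^{\lambda}$ for Item 2, block substitution for Item 3, transposition plus the cyclic symmetry of the matrix multiplication tensor for Item 4, and padding plus sublinearity for Item 5 is exactly the classical route (Hopcroft--Musinski/Pan duality for the symmetry). Two points deserve tightening. First, both Item 3 and Item 4 silently require passing from the paper's definition of $\omega(\cdot,\cdot,\cdot)$, which counts arbitrary arithmetic operations, to \emph{bilinear} algorithms: block substitution needs an algorithm valid over a non-commutative ring of blocks, and the ``rotation'' of an algorithm only acts on bilinear ones; this is fixed by Strassen's standard constant-factor equivalence between multiplicative and bilinear complexity, but it should be stated, since without it the tensor-symmetry step does not apply to an arbitrary algorithm. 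Second, for Item 5 convexity yields continuity only on the interior of the orthant, whereas the statement is for all $\alpha,\beta,\gamma\geq 0$; the simpler and stronger argument is that Item 1 gives $\omega(\eps,0,0)\leq\eps$, so Item 3 makes $\omega$ $1$-Lipschitz in each coordinate on the closed orthant, which gives continuity everywhere and makes the dyadic averaging step unnecessary.
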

	
	We will make use of the following upper bound on $\omega(\alpha, 1, 1)$.
	\begin{theorem}[{\cite{lotti1983asymptotic,le2012faster}}] \label{thm:rectMM-ub}
		Let $\alpha \in [0, 1]$. Then
		\[
		\omega(\alpha, 1, 1) \leq
		\begin{cases}
			2 & \text{if $0 \leq \alpha \leq \omega^{\perp}$ ,} \\
			2 + \dfrac{\omega - 2}{1 - \omega^{\perp}} \cdot (\alpha - \omega^{\perp}) \leq 1.825+0.548\alpha & \text{if $\omega^{\perp} < \alpha \leq 1$ .}
		\end{cases}
		\]
	\end{theorem}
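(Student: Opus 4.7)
The plan is to prove the bound by first establishing that $\omega(\alpha, 1, 1)$ is a convex function of $\alpha$, and then applying two linear interpolation arguments: one trivially on $[0, \omega^{\perp}]$ and one on $[\omega^{\perp}, 1]$ using the values at the endpoints.

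First, I would show that $\alpha \mapsto \omega(\alpha, 1, 1)$ is convex on $[0, 1]$. Given $\alpha, \beta \geq 0$ and $\lambda \in [0, 1]$, write $\lambda \alpha + (1-\lambda)\beta$ in the first slot and $1 = \lambda + (1-\lambda)$ in the second and third slots. Then \cref{item:alpha3} (subadditivity) and \cref{item:alpha2} (homogeneity) of \cref{prop:alphaProps} give
\[
\omega(\lambda \alpha + (1-\lambda)\beta, 1, 1) \leq \omega(\lambda \alpha, \lambda, \lambda) + \omega((1-\lambda)\beta, 1-\lambda, 1-\lambda) = \lambda \omega(\alpha, 1, 1) + (1-\lambda) \omega(\beta, 1, 1),
\]
which is convexity in the first argument (this is the standard derivation of convexity from the Sch\"onhage $\tau$-theorem--style properties).

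For the first case ($0 \leq \alpha \leq \omega^{\perp}$), I would use the symmetry of \cref{item:alpha1} to rewrite $\omega(\alpha, 1, 1) = \omega(1, 1, \alpha)$, and then invoke monotonicity (item 5) together with continuity to conclude that $\omega(1, 1, \alpha) \leq \omega(1, 1, \omega^{\perp}) = 2$, where the last equality uses the definition of $\dualMMexp$ and continuity. (Note $\omega(1, 1, \alpha) \geq 2$ by \cref{item:alpha0}, so equality actually holds.)

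For the second case ($\omega^{\perp} < \alpha \leq 1$), I would apply convexity directly on the interval $[\omega^{\perp}, 1]$. Writing $\alpha = \lambda \cdot 1 + (1-\lambda) \cdot \omega^{\perp}$ with $\lambda = (\alpha - \omega^{\perp})/(1 - \omega^{\perp})$ and using $\omega(1, 1, 1) = \omega$ and $\omega(\omega^{\perp}, 1, 1) = 2$ (both from the first case and the definition) yields
\[
\omega(\alpha, 1, 1) \leq \lambda \omega + (1-\lambda) \cdot 2 = 2 + \frac{\omega - 2}{1 - \omega^{\perp}}(\alpha - \omega^{\perp}),
\]
which is the first (symbolic) inequality in the second case. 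For the concrete numerical bound $1.825 + 0.548\alpha$, both the linear upper bound just obtained and the target expression are affine in $\alpha$, so it suffices to verify the inequality at the two endpoints $\alpha = \omega^{\perp}$ and $\alpha = 1$ using $\omega \leq 2.371339$ and $\omega^{\perp} \geq 0.321334$. At $\alpha = 1$ we need $\omega \leq 2.373$, and at $\alpha = \omega^{\perp}$ we need $0.548 \omega^{\perp} \geq 0.175$, i.e., $\omega^{\perp} \geq 0.31934$; both hold comfortably with the cited bounds.

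The main obstacle is really just being careful that $\omega(\omega^{\perp}, 1, 1) = 2$ rather than a strict inequality in one direction, which is why continuity (\cref{prop:alphaProps}, item 5) is essential at the boundary; beyond that the argument is a direct consequence of convexity plus the symmetry of $\omega(\cdot,\cdot,\cdot)$ under permutations.
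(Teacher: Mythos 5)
Your proposal is correct. The paper gives no proof of \cref{thm:rectMM-ub} (it is imported from \cite{lotti1983asymptotic,le2012faster}), and your argument---convexity of $\omega(\cdot,1,1)$ via \cref{prop:alphaProps}, \cref{item:alpha2,item:alpha3}, the value $\omega(\omega^{\perp},1,1)=2$ via monotonicity and continuity at the supremum, linear interpolation between $(\omega^{\perp},2)$ and $(1,\omega)$, and an endpoint check of the affine bound using $\omega \leq 2.371339$ and $\omega^{\perp} \geq 0.321334$---is exactly the standard derivation behind that citation, consistent with how the paper itself uses convexity of $\omega(\cdot,1,1)$ (e.g., in \cref{fig:osmm-exp-bounds} and the proof of \cref{cor:alphaFla}).
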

	
	We next give an upper bound on rectangular matrix multiplication exponents $\omega(1, \alpha, \beta)$ with three potentially distinct arguments.
	\begin{corollary}\label{cor:alphaFla}
		Let $\alpha,\beta,\gamma\in[0,1]$ be such that $\beta+\gamma\geq 1$. Then $\omega(\alpha,\beta,\gamma) \leq \omega(\alpha,1,\beta+\gamma-1)$.
	\end{corollary}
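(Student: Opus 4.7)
My plan is to derive the inequality from a convex-combination identity combined with three properties of $\omega(\cdot,\cdot,\cdot)$ established in \cref{prop:alphaProps}: homogeneity (\cref{item:alpha2}), subadditivity (\cref{item:alpha3}), and permutation symmetry (\cref{item:alpha1}). The crux is that, by permutation symmetry, $\omega(\alpha,\beta+\gamma-1,1) = \omega(\alpha,1,\beta+\gamma-1)$; so if I can write $(\alpha,\beta,\gamma)$ as a nonnegative sum of scaled copies of the two triples $(\alpha,1,\beta+\gamma-1)$ and $(\alpha,\beta+\gamma-1,1)$, then subadditivity and homogeneity will immediately yield the desired bound.

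First I would dispatch the degenerate case $\beta=\gamma=1$, where the claim reduces to $\omega(\alpha,1,1)\leq\omega(\alpha,1,1)$. Otherwise $2-\beta-\gamma>0$, and I set
\[
\lambda := \frac{1-\gamma}{2-\beta-\gamma}, \qquad \mu := \frac{1-\beta}{2-\beta-\gamma},
\]
so that $\lambda,\mu\in[0,1]$ and $\lambda+\mu=1$. A short direct calculation then verifies the identity
\[
(\alpha,\beta,\gamma) = \lambda\cdot(\alpha,1,\beta+\gamma-1) + \mu\cdot(\alpha,\beta+\gamma-1,1),
\]
where the hypothesis $\beta+\gamma\geq 1$ ensures $\beta+\gamma-1\geq 0$, so all entries of both triples are nonnegative.

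Next I would apply subadditivity to this decomposition and scale each summand out via homogeneity, obtaining
\[
\omega(\alpha,\beta,\gamma)\leq \lambda\cdot\omega(\alpha,1,\beta+\gamma-1) + \mu\cdot\omega(\alpha,\beta+\gamma-1,1).
\]
Permutation symmetry equates the two $\omega$ values on the right, collapsing the bound to $(\lambda+\mu)\cdot\omega(\alpha,1,\beta+\gamma-1)=\omega(\alpha,1,\beta+\gamma-1)$, as claimed.

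The only non-routine step is spotting the right decomposition. Once one notices that $(\alpha,1,\beta+\gamma-1)$ and its permutation $(\alpha,\beta+\gamma-1,1)$ share an exponent, the scalars $\lambda,\mu$ are forced as the unique solution to the two linear equations matching the middle and last coordinates (subject to $\lambda+\mu=1$), and feasibility ($\lambda,\mu\geq 0$) follows from $\beta,\gamma\leq 1$.
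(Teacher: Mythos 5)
Your proposal is correct and follows essentially the same route as the paper's proof: the same convex decomposition with $\lambda = (1-\gamma)/(2-\beta-\gamma)$ (the paper writes your $\mu$ as $1-\lambda$), combined with subadditivity, homogeneity, and permutation symmetry from \cref{prop:alphaProps}. No gaps to report.
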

	\begin{proof}
		If $\beta=\gamma=1$, the statement holds trivially. Otherwise we can assume that $\beta+\gamma<2$. Let $\lambda = (1-\gamma)/(2-\beta-\gamma)\geq0$. Then we have that
		\begin{align*}
			\omega(\alpha,\beta,\gamma)&=\omega(\lambda\alpha + (1-\lambda)\alpha, \lambda + (1-\lambda)(\beta+\gamma-1), \lambda(\beta+\gamma-1)+(1-\lambda))\\
			& \leq \omega(\lambda\alpha, \lambda, \lambda(\beta+\gamma-1))+\omega((1-\lambda)\alpha, (1-\lambda)(\beta+\gamma-1), (1-\lambda))\\
			& = \lambda\cdot\omega(\alpha,1,\beta+\gamma-1)+(1-\lambda)\cdot\omega(\alpha,\beta+\gamma-1,1)\\
			& = \omega(\alpha,1,\beta+\gamma-1) \;,
		\end{align*}
		where the inequality uses \cref{prop:alphaProps},~\cref{item:alpha3}, the following equality uses \cref{prop:alphaProps},~\cref{item:alpha2}, and the last one uses \cref{prop:alphaProps},~\cref{item:alpha1}.
	\end{proof}

	\subsection{Input-Sparse Matrix Multiplication}
	In this paper, we assume \emph{sparse representations} of vectors and matrices. Specifically, we represent a vector $\vec{x} \in R^n$ by a list of index-value pairs of its non-zero entries. We represent a matrix $A=(\vec{a}_1,\ldots,\vec{a}_n)\in R^{m\times n}$ using lists of the non-zero entries of each column $\vec{a}_i$, sorted by~$i$.

	We next give a basic algorithm for matrix multiplication when one of the matrices is column-sparse. This algorithm is folklore, and appears in work at least as early as Gustavson~\cite{Gu78}.
	\begin{proposition}[Na\"{i}ve Input-Sparse Matrix Multiplication]
		\label{prop:naive-sparse-mm}
		Let $A\in R^{m \times n}, B \in R^{n \times p}$ for a ring $R$ and $m,n,p \in \Z^+$. Let $A$ be $t$-column-sparse for $t \in \Z^+$. Then $\norm{AB}_0 \leq t\cdot \norm{B}_0$, and there is a deterministic algorithm that computes $AB$ using  $t \cdot \norm{B}_0 \cdot \poly(\log{n})$ arithmetic operations over~$R$.
	\end{proposition}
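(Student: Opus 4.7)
The plan is to compute $AB$ one column at a time, exploiting that multiplication by a $t$-sparse column of $A$ contributes at most $t$ new nonzero entries to the output column per nonzero entry of $B$ used. First I would record the identity
\[
(AB)\vec{e}_j = A\vec{b}_j = \sum_{k \,:\, B_{kj} \neq 0} B_{kj}\cdot \vec{a}_k \;,
\]
where $\vec{b}_j$ is the $j$th column of $B$ and $\vec{a}_k$ is the $k$th column of $A$. Since each $\vec{a}_k$ is $t$-sparse, the right-hand side is a sum of $\norm{\vec{b}_j}_0$ many $t$-sparse vectors, hence itself at most $t\cdot\norm{\vec{b}_j}_0$-sparse. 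Summing over $j$ gives $\norm{AB}_0 \leq t\cdot\norm{B}_0$, which establishes the sparsity claim and, as a bonus, bounds the size of the output we need to write down.

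For the algorithm, I would iterate over columns $j\in[p]$. Because $B$ is stored in sparse column-major order, enumerating the nonzero pairs $(k, B_{kj})$ of column $j$ costs $O(\norm{\vec{b}_j}_0)$ time up to polylogarithmic factors for index comparisons. For each such pair, I scan the sparse representation of $\vec{a}_k$ (which has at most $t$ entries) and accumulate each scaled entry $B_{kj}\cdot A_{ik}$ into an auxiliary sparse container keyed by the row index~$i$. A self-balancing BST or a hash table keyed by $i\in[m]$ suffices for this accumulator; either choice supports insertion and lookup in $\poly(\log n)$ time per operation (balanced BSTs deterministically, avoiding any hashing assumption). After processing all nonzero entries of $\vec{b}_j$, I flush the accumulator into the sparse representation of column $j$ of $AB$, sorted by row index.

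The work done per nonzero entry $B_{kj}$ is $t \cdot \poly(\log n)$ (at most $t$ accumulator updates, each in $\poly(\log n)$ time), so the total number of ring operations is
\[
\sum_{j=1}^{p}\sum_{k \,:\, B_{kj}\neq 0} t\cdot \poly(\log n) \;=\; t\cdot\norm{B}_0 \cdot \poly(\log n) \;,
\]
as claimed. Correctness is immediate from the displayed identity for $A\vec{b}_j$, since the accumulator for column $j$ holds exactly $\sum_{k : B_{kj}\neq 0} B_{kj}\vec{a}_k = A\vec{b}_j$ at the moment it is flushed.

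The main potential obstacle is the data-structure overhead: naive accumulation (e.g., maintaining a dense $m$-entry array and zeroing it between columns) would introduce an additive $mp$ term and ruin the bound. Using a BST-based sparse accumulator, as above, charges the clearing cost to the very entries inserted, so the $\poly(\log n)$ overhead per ring operation is the only slack. With this choice the bound in the proposition follows directly, and the algorithm is deterministic.
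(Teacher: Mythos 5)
Your proposal is correct and is essentially the paper's argument: both pair each nonzero entry $B_{kj}$ with the at most $t$ nonzeros of column $\vec{a}_k$ of $A$, giving the $t\cdot\norm{B}_0$ bound on both the sparsity and the number of ring operations. The only difference is organizational—the paper phrases the computation as a sum of outer products (grouping by $k$) while you group by output column $j$—and you additionally spell out the sparse-accumulator data structure that the paper compresses into the $\poly(\log n)$ factor "accounting for the sparse representation," which is a reasonable bit of extra care but not a new idea.
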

	\begin{proof}
		For $i\in[n]$, let $b_i$ be the number of non-zero elements in the $i$th row of~$B$. If we consider the multiplication of $A$ and $B$ as a sum of outer products, then for every $k \in [n]$, the outer product of the $k$th column of $A$ and the $k$th row of $B$ can be computed using at most $t \cdot b_k$ multiplications, and has sparsity at most $t \cdot b_k$. Therefore, $\norm{AB}_0\leq \sum_{i\in[n]} t \cdot b_k=t\cdot\norm{B}_0$. The running time of this algorithm is bounded by $t \cdot \norm{B}_0 \cdot \poly(\log{n})$, where the $\poly(\log{n})$ factor accounts for the sparse representation.
	\end{proof}
	
	Next, we present known bounds on the complexity of matrix multiplication when the input matrices are sparse. The following bound was proven for square matrices in~\cite{yuster2005fast}, and was generalized to rectangular matrices in~\cite{KSV06} (see also~\cite[Lemma~4.2]{abboud2024time}).
	\begin{theorem}[\cite{yuster2005fast,KSV06}]\label{thm:yz}
		Let $\alpha,\beta,\gamma>0$ and $0 \leq \din \leq \max \set{\alpha + \beta, \beta + \gamma}$. Then
		\[
		\omega_\din(\alpha,\beta,\gamma)\leq
		\min_{\eta\in[0,\beta]} \max\set{\omega(\alpha, \eta, \gamma),\, 2\din-\eta} \;.
		\]
	\end{theorem}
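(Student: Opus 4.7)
The plan is to prove the bound for each fixed $\eta \in [0,\beta]$ by splitting the product $AB$ along the middle dimension according to the column-density of $A$, and then handle the two pieces separately with fast rectangular matrix multiplication and naive sparse multiplication. Specifically, for each middle-dimension index $k \in [n^\beta]$, call $k$ \emph{heavy} if the $k$-th column of $A$ has more than $n^{\din-\eta}$ non-zero entries and \emph{light} otherwise. Let $H$ and $L$ be the heavy and light index sets, and write $AB = A_H B_H + A_L B_L$, where $A_H, A_L$ are the restrictions of $A$ to columns indexed by $H, L$, and $B_H, B_L$ are the restrictions of $B$ to rows indexed by $H, L$.

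A Markov-type counting argument bounds $|H| = O(n^\eta)$: since $\sum_k \norm{\vec{a}_k}_0 = \norm{A}_0 \leq O(n^{\din})$ and each heavy column contributes more than $n^{\din-\eta}$ non-zeros, there are at most $O(n^{\din}) / n^{\din-\eta} = O(n^\eta)$ heavy columns. For the heavy piece, $A_H$ has dimensions $n^\alpha \times O(n^\eta)$ and $B_H$ has dimensions $O(n^\eta) \times n^\gamma$, so by \cref{def:MM-exponents} one can compute $A_H B_H$ using $O(n^{\omega(\alpha,\eta,\gamma)+o(1)})$ arithmetic operations. For the light piece, $A_L$ is $n^{\din-\eta}$-column-sparse by construction, so invoking \cref{prop:naive-sparse-mm} with $t = n^{\din-\eta}$ and $\norm{B_L}_0 \leq \norm{B}_0 \leq O(n^{\din})$ gives a running time of $\widetilde{O}(n^{\din-\eta} \cdot n^{\din}) = \widetilde{O}(n^{2\din-\eta})$.

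Summing the two costs and bounding the sum by twice the max yields $\omega_\din(\alpha,\beta,\gamma) \leq \max\{\omega(\alpha,\eta,\gamma),\ 2\din-\eta\}$ for every fixed $\eta \in [0,\beta]$, and taking the infimum over $\eta$ gives the claimed bound. I would also remark that the construction is fully deterministic (as in \cite{yuster2005fast,KSV06}), and that using the sparse representation assumption from \cref{sec:prelims} ensures that scanning $A$ to decide which columns are heavy, and assembling the two sub-instances, takes only $\widetilde{O}(\norm{A}_0 + \norm{B}_0)$ time, which is dominated by either term in the max.

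The only subtlety is the edge-case regime where $\eta > \din$: there the threshold $n^{\din-\eta}$ drops below $1$, so formally every non-empty column of $A$ would be declared heavy. But in this case $|H| \leq \norm{A}_0 \leq O(n^{\din}) \leq O(n^\eta)$, so the size bound on $H$ still holds and the rest of the analysis goes through unchanged. I do not anticipate any real obstacle; the main technical input is already packaged in \cref{prop:naive-sparse-mm}, and the rest is a clean heavy/light partition argument.
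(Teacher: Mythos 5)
Your proof is correct. The paper itself does not prove \cref{thm:yz} --- it imports the statement from \cite{yuster2005fast,KSV06} --- but your argument is essentially the standard Yuster--Zwick heavy/light decomposition underlying those references: split the middle dimension by a density threshold, handle the $O(n^{\eta})$ heavy indices with fast rectangular multiplication, and handle the light indices with the naive outer-product algorithm (\cref{prop:naive-sparse-mm}), giving $\max\{\omega(\alpha,\eta,\gamma),\,2\din-\eta\}$ for each fixed $\eta$. The one cosmetic difference is that Yuster--Zwick rank the middle indices by the products $\norm{\vec{a}_k}_0\cdot\norm{\vec{b}_k}_0$ (via a Cauchy--Schwarz argument), whereas you threshold on the column density of $A$ alone; both yield the same worst-case exponent $2\din-\eta$ for the light part, so nothing is lost. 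Your handling of the $\eta>\din$ edge case and the observation that the $\widetilde{O}(n^{\din})$ preprocessing is always dominated (since $\omega(\alpha,\eta,\gamma)\geq\eta+\gamma$ and $2\din-\eta\geq\din$ when $\eta\leq\din$) are both sound.
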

	
	Next, we list two additional bounds on sparse rectangular matrix multiplication exponents $\omega_\din(\cdot,\cdot,\cdot)$.
	\begin{proposition}\label{prop:omega_din}
		Let $\alpha,\beta,\gamma>0$ and $0 \leq \din \leq \max \set{\alpha + \beta, \beta + \gamma}$. The following hold:
		\begin{enumerate}
			\item\label{item:omega_din_ub}
			$\omega_\din(\alpha,\beta,\gamma)\leq \din+\min \set{\alpha, \gamma}$.
			\item\label{item:omega_din_lb}
			If $\din \geq \max\set{\alpha, \gamma}$, then
			$\omega_\din(\alpha,\beta,\gamma)\geq \alpha+\gamma$.
		\end{enumerate}
	\end{proposition}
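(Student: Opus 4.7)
The plan is to prove the two parts separately: part~\cref{item:omega_din_ub} by two applications of the na\"{i}ve input-sparse multiplication algorithm of \cref{prop:naive-sparse-mm}, and part~\cref{item:omega_din_lb} by exhibiting an instance whose product is entirely dense and invoking an output-size lower bound.

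For the upper bound, I would invoke \cref{prop:naive-sparse-mm} in two orientations and take whichever bound is smaller. In the first orientation, observe that $A \in R^{n^\alpha \times n^\beta}$ is trivially $n^\alpha$-column-sparse since it only has $n^\alpha$ rows, so \cref{prop:naive-sparse-mm} computes $AB$ in $n^\alpha \cdot \norm{B}_0 \cdot \poly(\log n) \leq n^{\alpha + \din} \cdot \poly(\log n)$ operations. Dually, computing $(AB)^T = B^T A^T$ and using that $B^T$ is $n^\gamma$-column-sparse while $\norm{A^T}_0 \leq n^\din$ yields $n^{\gamma + \din} \cdot \poly(\log n)$ operations. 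Picking the cheaper orientation gives $\omega_\din(\alpha, \beta, \gamma) \leq \din + \min\{\alpha, \gamma\}$, with the $\poly(\log n)$ factor absorbed into the infimum in the definition of $\omega_\din$.

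For the lower bound, the hypothesis $\din \geq \max\{\alpha, \gamma\}$ gives enough sparsity budget to accommodate inputs whose product is entirely non-zero. Concretely, I would take $A$ with its first column equal to $\vec{1}_{n^\alpha}$ and all other entries zero, and $B$ with its first row equal to $\vec{1}_{n^\gamma}^T$ and all other entries zero. Then $\norm{A}_0 = n^\alpha \leq n^\din$ and $\norm{B}_0 = n^\gamma \leq n^\din$, so this is a valid $\din$-sparse instance. The product $AB$ equals the outer product $\vec{1}_{n^\alpha} \vec{1}_{n^\gamma}^T$, which is the all-ones $n^\alpha \times n^\gamma$ matrix and therefore has $n^{\alpha + \gamma}$ non-zero entries. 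Any correct algorithm must emit every one of these entries, so it performs at least $\Omega(n^{\alpha + \gamma})$ operations, giving $\omega_\din(\alpha, \beta, \gamma) \geq \alpha + \gamma$.

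I do not anticipate serious obstacles. The one subtlety is in the lower bound: one must confirm that even under the sparse output representation convention used in the paper, producing $n^{\alpha + \gamma}$ distinct non-zero output entries genuinely requires $\Omega(n^{\alpha + \gamma})$ steps. Since the output has no zero entries at all in the constructed instance, the sparse representation provides no shortcut and the bound is unconditional.
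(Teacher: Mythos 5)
Your proposal is correct and matches the paper's proof essentially verbatim: the upper bound via two orientations of \cref{prop:naive-sparse-mm} (using that $A$ is $n^\alpha$-column-sparse and $B^T$ is $n^\gamma$-column-sparse), and the lower bound via the rank-one instance $A=(\vec{1},\vec{0},\ldots,\vec{0})$, $B=(\vec{1},\vec{0},\ldots,\vec{0})^T$ whose product is the all-ones matrix with $n^{\alpha+\gamma}$ non-zero entries. Your extra remark that the sparse output representation offers no shortcut for a fully dense product is a reasonable clarification of a point the paper leaves implicit.
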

	\begin{proof}
		The first bound follows from \cref{prop:naive-sparse-mm} by noting that every $n^\alpha \times n^\beta$ matrix is $n^\alpha$-column-sparse (and that the transpose of an $n^\beta \times n^\gamma$ matrix is $n^\gamma$-column-sparse).
		
		For the second bound, consider $A=(\vec{1},\vec{0},\ldots,\vec{0})\in R^{n^\alpha \times n^\beta}$ and $B=(\vec{1},\vec{0},\ldots,\vec{0})^T \in R^{n^\beta \times n^\gamma}$. Note that $\norm{A}_0,\norm{B}_0 \leq O(n^\din)$, and $\norm{AB}_0=n^{\alpha+\gamma}$. Therefore, any algorithm computing $AB$ must have running time at least $\Omega(n^{\alpha+\gamma})$.
	\end{proof}

	\section{Output-Sparse Matrix Multiplication}
	\label{sec:osmm}
	
	We now present our main technical results on $\OSMM$. In \cref{sec:compressed-sensing-basics}, we formally define compressed sensing schemes and state our main theorem about a concrete such scheme.
	In \cref{sec:deterministic-osmm}, we give a reduction from $\delta$-$\OSMM$ to compressed sensing and instantiate the reduction to give our roughly $n^{\omega(\delta/2, 1, 1)}$-time deterministic algorithm for $\delta$-$\OSMM$, proving \cref{thm:det-osmm-intro}.
	In \cref{sec:randomized-osmm}, we introduce a variant of matrix multiplication verification ($\MMV$) and give a reduction from $\OSMM$ to compressed sensing and this $\MMV$ variant.
	We then instantiate the reduction to give our roughly $n^{\omega(\delta - 1, 1, 1)}$-time randomized algorithm for $\delta$-$\OSMM$, proving \cref{thm:rand-osmm-intro}.

	\subsection{Compressed Sensing}
	\label{sec:compressed-sensing-basics}
	
	We will use the following formalization of a compressed sensing scheme.
	
	\begin{definition} \label{def:compressed-sensing-scheme}
		Let $R$ be a ring and let $m = m(n, t)$ be a non-decreasing, integer-valued function.
		An $m$-\emph{compressed sensing scheme} over $R$ is a pair of algorithms $(\cM, \cR)$ that work as follows. On~inputs $n$ and $t$ in unary with $1\leq t\leq n$, $\cM$ outputs a matrix 
		$H \in R^{m \times n}$. On input $H \vec{x}$ for a vector $\vec{x} \in R^n$ satisfying $\norm{\vec{x}}_0 \leq t$ 
		(and implicitly~$H$, $n$, and $t$),\footnote{Formally, the algorithm $\cM$ computes a string $\sigma$ as preprocessing that is passed as auxiliary input to $\cR$. Without loss of generality, we assume that $\sigma$ encodes~$H$, $n$, and $t$ (but it could also encode more).} $\cR$ outputs $\vec{x}$. 
	\end{definition}
	
	Here the matrix $H$ output by $\cM$ is called a \emph{measurement matrix}, and $\cR$ is called a \emph{sparse recovery algorithm}.
	We will crucially use the following theorem about an explicit compressed sensing scheme from~\cite{BGIKS08} 
	(which we instantiate with a construction of expanders from~\cite{GUV09}) in both of our main algorithms. 
	The claim in \cref{thm:main-compressed-sensing} follows implicitly by combining~\cite{BGIKS08,GUV09}, but we include a full proof for completeness. However, the proof is somewhat involved and so we defer it to \cref{sec:compressedSensing}.

	\begin{restatable}[\cite{BGIKS08}]{theorem}{MainCS} \label{thm:main-compressed-sensing}
		Let $R$ be a ring and let $\alpha>0$ be a constant. There is a deterministic \mbox{$m$-compressed} sensing scheme $(\cM, \cR)$ over~$R$ with $m = m(n,t)=t^{1+\alpha}\cdot\poly(\log{n})$. The algorithm $\mathcal{M}$ runs in time $\Ot(n)$, and the algorithm $\mathcal{R}$ performs $t^{1+\alpha}\cdot\poly(\log{n})$ arithmetic operations over~$R$.
		
		Furthermore, the measurement matrix $H := \cM(1^n, 1^t)$ is binary and $d$-column-sparse for $d=\poly(\log{n})$, and therefore can be multiplied by an arbitrary vector $\vec{x}\in R^n$ using $\norm{\vec{x}}_0\cdot\poly(\log{n})$ addition operations over~$R$.\footnote{Assuming that~$H$ and~$\vec{x}$ are given in sparse representation by lists of non-zero entries (as in the output of the algorithm~$\cM$).}
	\end{restatable}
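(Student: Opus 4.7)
The plan is to instantiate the BGIKS08 compressed sensing scheme with the Guruswami--Umans--Vadhan lossless expander construction. First, I would set up GUV09 to produce an explicit bipartite graph $G \subseteq [n] \times [m]$ that is a $(t, (1-\epsilon)D)$-lossless expander, with left-degree $D = \poly(\log n)$, right-vertex count $m = t^{1+\alpha}\poly(\log n)$, and $\epsilon > 0$ a sufficiently small absolute constant chosen so that iterative decoding succeeds. The measurement matrix $H \in \{0,1\}^{m \times n}$ is defined as the bipartite adjacency matrix of $G$, which by construction is binary and $D$-column-sparse; this immediately delivers the last sentence of the theorem, since multiplying such an $H$ by any vector $\vec{x}$ requires at most $\norm{\vec{x}}_0 \cdot D$ ring additions. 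The algorithm $\cM$ outputs $H$ in sparse (column-list) representation; because each edge of GUV09 is obtained by a Parvaresh--Vardy-style polynomial evaluation over a field of size $\poly(n)$, enumerating all $nD$ edges costs $\Ot(nD) = \Ot(n)$.

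Next, $\cR$ uses iterative decoding. The key property supplied by lossless expansion is that for every $S \subseteq [n]$ with $|S| \leq t$, a $(1-2\epsilon)$-fraction of elements $i \in S$ have at least $(1-2\epsilon)D$ \emph{private} right-neighbors---vertices $j \in N(i)$ whose only left-neighbor inside $S$ is $i$---and for those $j$ we have $(H\vec{x})_j = x_i$. Symmetrically, for $i \notin S$ applying expansion to $S \cup \{i\}$ forces all but a small fraction of $N(i)$ to lie outside $N(S)$, so $(H\vec{x})_j = 0$ at most of those positions. In both cases the plurality of the $D$ values $\{(H\vec{x})_j : j \in N(i)\}$ reveals $x_i$. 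The decoder maintains a residual $\vec{y}$ starting at $H\vec{x}$ and repeatedly locates a left vertex $i$ whose $D$-neighborhood in $\vec{y}$ has a dominant value $v$, sets $x_i := v$, and replaces $\vec{y}$ by $\vec{y} - v \cdot H e_i$. The algorithm only uses ring addition, subtraction, and equality testing (available in any ring via subtraction and a zero test), so it ports verbatim from the real-number setting in which BGIKS08 was originally stated.

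The main obstacle is the running-time analysis for $\cR$: a naive implementation that recomputes pluralities from scratch every round would cost $\Omega(nD)$ per round and miss the $t^{1+\alpha}\poly(\log n)$ target. The plan is to maintain dynamic counter structures that, for each candidate left vertex $i$, track the multiset of values currently appearing at its $D$ right-neighbors in $\vec{y}$ together with the current plurality, and to restrict attention to the $O(tD)$ left vertices that are adjacent to the currently-nonzero coordinates of $\vec{y}$. Since each decoding step modifies $\vec{y}$ at only $D$ positions, an amortized argument bounds the total work by $t \cdot \poly(D) = t \cdot \poly(\log n)$, which is subsumed by $t^{1+\alpha}\poly(\log n)$. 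A secondary issue worth flagging is that GUV09 must really deliver the claimed parameters in $\Ot(n)$ time rather than merely polynomial time; this comes down to checking that each edge is produced by $\poly(\log n)$ arithmetic operations in a field of size $\poly(n)$, so the full sparse output $H$ can indeed be written down in $\Ot(n)$ time.
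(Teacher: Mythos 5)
Your construction of $\cM$ and your justification of the last sentence of the theorem match the paper's (a GUV expander built from Parvaresh--Vardy codes, with all $nd$ edges enumerated in $\Ot(n)$ time, giving a binary, $\poly(\log n)$-column-sparse matrix). However, the paper does \emph{not} use the raw adjacency matrix $A$ as the measurement matrix: it sets $H := A \otimes_r B$, where $B$ is the ``bit-test'' matrix whose $j$th column is the binary expansion of $j$. That augmentation is not cosmetic, and its absence is where your recovery argument develops genuine gaps.

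The first gap is the running time of $\cR$. Your decoder enumerates \emph{left} vertices and takes pluralities over their neighborhoods, and you propose to restrict attention to ``the $O(td)$ left vertices adjacent to the currently-nonzero coordinates of $\vec{y}$.'' But the expander is only left-regular; a right vertex may have up to $n$ left-neighbors (the average right-degree is $nd/m \gg 1$ when $t \ll n$), so the set of left vertices adjacent to the at most $td$ nonzero coordinates of $H\vec{x}$ can have size $\Omega(n)$. Vertex expansion gives no way to prune the candidate set to $\poly(t,\log n)$, so your $\cR$ costs $\Omega(n)$ rather than $t^{1+\alpha}\poly(\log n)$. The second gap is correctness of the plurality rule for $i \notin S := \mathrm{supp}(\vec{x})$: expansion of $S\cup\{i\}$ only yields $\card{\cN(i)\setminus \cN(S)} \geq d(1-\eps(\card{S}+1))$, which is vacuous once $\card{S} \geq 1/\eps$. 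A single $i\notin S$ can have \emph{all} of its neighbors inside $\cN(S)$ and receive a dominant nonzero vote (e.g.\ when $\vec{x}$ is an indicator vector), so false positives are not excluded pointwise; bounding their number requires applying expansion to $S$ together with the bad vertices, hence a $(ct,\eps)$-expander for some $c>1$, plus a peeling/convergence analysis in the style of sequential sparse matching pursuit that you have not supplied. The paper sidesteps both problems with the bit test: its Reduce step enumerates the $m' = t^{1+\alpha}\poly(\log n)$ measurement \emph{blocks} rather than the $n$ left vertices; a block that isolates a support element has all of its nonzero entries equal to a common value $v$, and their positions within the block spell out the index $j$ in binary, so each block yields at most one candidate pair $(j,v)$ in $\poly(\log n)$ time; keeping only candidates receiving more than $d/2$ votes bounds the number of corrupted coordinates by $\norm{\vec{x}}_0/2$ via a direct edge-counting argument, and $O(\log t)$ iterations of Reduce complete the recovery. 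To close your gaps you would need either to adopt the bit-test augmentation or to give a substantially different (and harder) analysis of left-vertex voting.
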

	
	\subsection{Deterministic OSMM via a Reduction to Compressed Sensing}
	\label{sec:deterministic-osmm}
	
	In this section, we start by giving our two-pass reduction from $\OSMM$ to compressed sensing in \cref{alg:osmm-deterministic}.
	One may also view this reduction as a meta-algorithm. Making it into an explicit algorithm requires using an explicit compressed sensing scheme (we note this in the ``Subroutine'' heading in the specification of \cref{alg:osmm-deterministic}). 
	
	\begin{algorithm}[th]
		\KwIn{Matrices $A, B \in R^{n \times n}$ over a ring $R$, and $t \in \Z^+$ such that $\norm{AB}_0 \leq t^2$.}
		\KwSubroutine{An $m$-compressed sensing scheme $(\cM, \cR)$ over $R$.}
		\KwOut{The product $C = AB$.}
		\nonl~\\
		Compute a measurement matrix $H \in R^{m \times n} := \cM(1^n, 1^t)$. \\
		Compute $D = (\vec{d}_1, \ldots, \vec{d}_n) := HAB$ using fast rectangular MM. \\
		\DontPrintSemicolon Compute $D' := (\cR(\vec{d}_1), \ldots, \cR(\vec{d}_n))$. \tcp*{\footnotesize{$t$-sparse columns of $C = AB$ agree with $D'$.}}
		Compute $F = (\vec{f}_1, \ldots, \vec{f}_n) := (H (AB - D')^T)$ using fast rectangular MM. \\
		\DontPrintSemicolon Compute $F' := (\cR(\vec{f}_1), \ldots, \cR(\vec{f}_n))$. \tcp*{\footnotesize{$F' = (AB - D')^T$.}}
		\KwRet $D' + (F')^T$.
		\caption{Deterministic output-sparse matrix multiplication over a ring $R$.}
		\label{alg:osmm-deterministic}
	\end{algorithm}

	We next analyze \cref{alg:osmm-deterministic}.
	
	\begin{theorem} \label{thm:osmm-deterministic-correctness}
		Let $R$ be a ring, and let $A, B \in R^{n \times n}$ be matrices satisfying $\norm{AB}_0 \leq t^2$ for $t \in \Z^+$ with $t^2 \leq O(n^{\delta})$ for some $\delta \in [0, 2]$. Let $(\cM, \cR)$ be an $m$-compressed sensing scheme over~$R$ with $m = m(n, t) \leq O(t \cdot n^{\alpha})$ for some $\alpha > 0$.
		Then on input $A, B$%
		, \cref{alg:osmm-deterministic} outputs the matrix product $C := AB$. Furthermore, if $\cM$ runs in time $T_{\cM}(n, t)$ and $\cR$ runs in time $T_{\cR}(n, t)$ then for any constant $\eps > 0$, \cref{alg:osmm-deterministic} performs
		\[
		O(T_{\cM}(n, t) + n \cdot T_{\cR}(n, t) + n^{\omega(\delta/2 + \alpha, 1, 1) + \eps})
		\]
		arithmetic operations over~$R$.
	\end{theorem}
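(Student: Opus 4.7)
The plan is to prove the theorem in two parts: first correctness, using two invocations of the compressed-sensing scheme (one per pass), and then running time, using fast rectangular matrix multiplication for the two large products.

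For correctness, write $C = AB = (\vec{c}_1, \ldots, \vec{c}_n)$. By \cref{def:compressed-sensing-scheme}, $\cR(H\vec{c}_i) = \vec{c}_i$ whenever $\vec{c}_i$ is $t$-sparse, so the first pass produces $D'$ that agrees with $C$ on every $t$-sparse column. The key counting step is that strictly fewer than $t$ columns of $C$ can fail to be $t$-sparse: each such column contributes at least $t+1$ non-zero entries, so having $t$ such columns would force $\norm{C}_0 \geq t(t+1) > t^2$, contradicting the hypothesis $\norm{C}_0 \leq t^2$. Hence $C - D'$ has fewer than $t$ non-zero columns, which immediately forces every row of $C - D'$ to be $t$-sparse, i.e., every column of $(C - D')^T$ is $t$-sparse. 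Applying $\cR$ column-by-column to $F = H(C - D')^T$ therefore recovers $(C - D')^T$ exactly, so $F' = (C - D')^T$ and the algorithm returns $D' + (F')^T = D' + (C - D') = C$, as required.

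For the running time, Step 1 takes $T_{\cM}(n, t)$ operations, and Steps 3 and 5 together perform $O(n \cdot T_{\cR}(n, t))$ calls into the sparse-recovery routine. The dominant work is the two rectangular matrix multiplications in Steps 2 and 4. Since $t^2 \leq O(n^\delta)$ gives $t \leq O(n^{\delta/2})$, we have $m \leq O(t \cdot n^{\alpha}) \leq O(n^{\delta/2 + \alpha})$. In Step 2 we compute $(HA)B$ as two $n^{\delta/2 + \alpha} \times n$ by $n \times n$ products, each of which takes $O(n^{\omega(\delta/2 + \alpha, 1, 1) + \eps})$ operations by the definition of $\omega(\cdot,\cdot,\cdot)$. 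Step 4 requires a little care because we never have $AB$ in hand: we instead compute $H(AB - D')^T$ as $(HB^T)A^T - H(D')^T$, which consists of the same type of rectangular products (one forming the $m \times n$ intermediate $HB^T$, one right-multiplying it by $A^T$, and one computing $H(D')^T$) and so fits into the same asymptotic bound. Summing the three contributions yields the claimed running time.

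The main subtlety is the two-pass structure itself---specifically, the pigeonhole observation that converts a \emph{global} sparsity bound on $C$ into a \emph{row}-sparsity guarantee for the residual $C - D'$, so that the compressed-sensing scheme can be applied a second time on the transpose. The only other small obstacle is the reparenthesization in Step 4, which ensures that every product we compute is actually rectangular of the form $n^{\delta/2 + \alpha} \times n$ by $n \times n$, rather than a full $n \times n$ by $n \times n$ multiplication.
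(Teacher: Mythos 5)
Your proof is correct and follows essentially the same two-pass argument as the paper: the same pigeonhole step converting the global bound $\norm{AB}_0 \leq t^2$ into $t$-sparsity of every row of $C - D'$, and the same rectangular-product accounting, where your explicit reparenthesization of Step 4 as $(HB^T)A^T - H(D')^T$ is exactly how the paper's ``fast rectangular MM'' step is meant to be carried out. The only cosmetic difference is that the paper explicitly fixes the convention that $\cR$ outputs \emph{some} vector in $R^n$ on every input (so that $D'$ is well-defined and the bound $T_{\cR}(n,t)$ applies even on columns that are not $t$-sparse); you should state this assumption, but it is not a substantive gap.
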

	
	\begin{proof}
		We start by proving correctness. 
		Let $C = (\vec{c}_1, \ldots, \vec{c}_n)$, $D' = (\vec{d}_1', \ldots, \vec{d}_n')$, and $F' = (\vec{f}_1', \ldots, \vec{f}_n')$. 
		We assume without loss of generality that $\cR$ outputs \emph{some} vector in $R^n$ on every input $\vec{y} \in R^m$ (including when $\vec{y} \neq H \vec{x}$ for any $t$-sparse $\vec{x} \in R^n$).
		
		We note that $C$ has at most $t$ columns $\vec{c}_i$ that are not $t$-sparse, since otherwise we would have $\norm{C}_0 > t^2$.
		Furthermore, by the specification of $\cR$, $\vec{d}_i' = \cR(H\vec{c}_i) = \vec{c}_i$ for each $i \in [n]$ such that $\vec{c}_i$ is $t$-sparse. It follows that every row of $C - D'$ is $t$-sparse, and therefore that every column of $(AB - D')^T$ is $t$-sparse. From this and again using the specification of $\cR$, we get that $F' = (AB - D')^T$. Correctness then follows by noting that $C = AB = D' + (F')^T$.
		
		We next analyze the time complexity of \cref{alg:osmm-deterministic}. It makes one call to $\cM$ on input $(1^n, 1^t)$, which takes time at most $T_{\cM}(n, t)$, and $2n$ calls to $\cR$, which takes time at most $2n \cdot T_{\cR}(n, t)$. Furthermore, because $H \in R^{m \times n}$ for $t = O(n^{\delta/2})$ and $m \leq O(t \cdot n^{\alpha}) = O(n^{\delta/2 + \alpha})$, $HAB$ and $H(AB - D')^T$ can be computed in time $O(n^{\omega(\delta/2, 1, 1) + \eps})$ for any constant $\eps > 0$ using fast rectangular matrix multiplication. All other operations run in $O(n^2) \leq O(n^{\omega(\delta/2 + \alpha, 1, 1)})$ time. The result follows by summing the running time bounds for each of these components.
	\end{proof}
	
	Our main theorem about solving $\OSMM$ deterministically follows as a corollary.
	
	\MainDet*
	\begin{proof}
		We instantiate the compressed sensing scheme subroutine needed in \cref{alg:osmm-deterministic} with the scheme described in \cref{thm:main-compressed-sensing}.
		Let $\alpha > 0$ be a constant, which we will set later, and let $\alpha'$ be a constant satisfying $0 < \alpha' < \alpha$.
		From \cref{thm:main-compressed-sensing}, we have that there exists a deterministic $m$-compressed sensing scheme where $m = m(n, t) = t^{1 + \alpha'} \cdot \poly(\log n) \leq O(t n^{\alpha})$, with $T_{\cM}(n, t) \leq T_{\cM}(n, n) \leq \Ot(n) \leq O(n^{1 + \alpha})$ and $T_{\cR}(n, t) \leq T_{\cR}(n, n) \leq O(n^{1 + \alpha})$.
		Furthermore, we have that $\omega(\delta/2 + \alpha, 1, 1) \leq \omega(\delta/2, 1, 1) + \omega(\alpha, 0, 0) =  \omega(\delta/2, 1, 1) + \alpha$ by \cref{prop:alphaProps},~\cref{item:alpha3,item:alpha0}.
		Using the fact that $\omega(\delta/2, 1, 1) \geq 2$ for all $\delta \in [0, 2]$, we then have that for any constant $\eps' > 0$ the algorithm performs at most
		\[
		O(T_{\cM}(n, t) + n \cdot T_{\cR}(n, t) + n^{\omega(\delta/2 + \alpha, 1, 1) + \eps'}) 
		\leq O(n^{1 + \alpha} + n^{2 + \alpha} + n^{\omega(\delta/2, 1, 1) + \alpha + \eps'})
		\leq O(n^{\omega(\delta/2, 1, 1) + \alpha + \eps'})
		\]
		operations over $R$.
		To prove the claim, we need to show that the algorithm performs at most $O(n^{\omega(\delta/2, 1, 1) + \eps})$ operations. This follows by choosing $\eps', \alpha > 0$ such that $\eps' + \alpha \leq \eps$.
	\end{proof}

	\subsection{Randomized OSMM via a Reduction to a Variant of MMV and Compressed Sensing}
	\label{sec:randomized-osmm}
	
	In this section, we get a roughly $n^{\omega(\delta - 1, 1, 1)}$-time algorithm for $\delta$-$\OSMM$ by reducing to a variant of $\MMV$ and to compressed sensing.
	
	\subsubsection{Column-Wise Matrix Multiplication Verification}
	
	In this section, we introduce and give an algorithm for the variant of $\MMV$ in which the goal is to check whether each column of $AB$ is equal to its corresponding column in $C'$.
	
	\begin{definition} \label{def:col-mmv}
		For a ring $R$, the Column-Wise MMV problem over $R$ ($\ColMMV_R$) is defined as follows.
		The input consists of matrices $A \in R^{m \times n}$, $B \in R^{n \times p}$, and $C' = (\vec{c}_1', \ldots, \vec{c}_p') \in R^{m \times p}$.
		Let $C = (\vec{c}_1, \ldots, \vec{c}_p) := AB$.
		The goal is to output the set $J \subseteq [p]$ of indices of columns on which $C$ and $C'$ differ, i.e., 
		\[
		J := \set{j \in [p] : \vec{c}_j \neq \vec{c}_j'} \ \text{.}
		\]
	\end{definition}
	
	We note that ``normal'' $\MMV$ corresponds to deciding whether $J = \emptyset$ in \cref{def:col-mmv}, and so $\MMV$ easily reduces to $\ColMMV$.
	We next argue that a slight variant of Freivalds's algorithm~\cite{freivalds1979fast} also solves $\ColMMV$ in nearly quadratic time.
	
	\begin{lemma}[Freivalds's algorithm for $\ColMMV$]  \label{lem:freivalds-for-colMMV}
		Let $R$ be a ring and let $c \in \Z^+$ be a constant. For any $m, n, p \in \Z^+$, there is a randomized algorithm that solves $\ColMMV_R$ on matrices $A \in R^{m \times n}, B \in R^{n \times p}, C' \in R^{m \times p}$ performing $O(\max \set{mn, np} \cdot \log(mnp))$ arithmetic operations over~$R$ and has a success probability of at least $1 - 1/(mnp)^c$.
		In particular, when $m, p = O(n)$, the algorithm performs $O(n^2 \log n)$ arithmetic operations over~$R$ and has a success probability of at least $1 - 1/n^c$.
	\end{lemma}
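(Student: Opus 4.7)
The plan is to adapt Freivalds's randomized verifier so that it outputs, for every column simultaneously, a correct/incorrect verdict rather than a single global yes/no. A single trial proceeds as follows: sample $r \in \{0,1\}^m$ uniformly, then compute the row vector $w := r^T A B - r^T C'$ via three matrix--vector products---first $r^T A$ in $O(mn)$ operations, then $(r^T A) B$ in $O(np)$ operations, and separately $r^T C'$ in $O(mp)$ operations---and subtract. Report the trial's candidate set $J_{\text{trial}} := \set{j \in [p] : w_j \neq 0}$. Since $w_j = r^T(\vec{c}_j - \vec{c}_j')$, columns with $\vec{c}_j = \vec{c}_j'$ always produce $w_j = 0$, so only genuinely mismatched columns are ever reported.

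For a column with $\vec{v} := \vec{c}_j - \vec{c}_j' \neq \vec{0}_m$, fix any coordinate $i$ with $v_i \neq 0$ and pair each $r$ with the vector $r^{(i)}$ obtained by flipping coordinate $i$; then $(r - r^{(i)})^T \vec{v} = \pm v_i \neq 0$, so at most one member of each pair zeros out $w_j$, giving $\Pr[w_j = 0] \leq 1/2$. This pairing argument uses nothing about $R$ beyond $v_i \neq 0$, so it is valid over an arbitrary ring without any zero-divisor or extension-ring trickery; choosing the Bernoulli $\{0,1\}$ sampling distribution (rather than a uniform sample from a large subset of $R$, which need not exist) is the one place where one has to be slightly careful.

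To drive the per-column error down, I would run $k = \Theta(\log(mnp))$ independent trials---equivalently, stack the $r$'s as the rows of a $k \times m$ random binary matrix $\bar{R}$ and perform the three products $\bar{R} A$, $(\bar{R} A) B$, and $\bar{R} C'$ once---and output $J := \bigcup_{\ell} J_{\text{trial},\ell}$. Columns outside the true set $J^\star$ are never falsely added, while each $j \in J^\star$ is missed by every trial with probability at most $2^{-k}$; choosing the constant in $k$ and union-bounding over the at most $p$ columns of $J^\star$ yields overall failure probability at most $1/(mnp)^c$. The arithmetic cost is $O\bigl((mn + np + mp) \log(mnp)\bigr)$, which in the regime $\min(m,p) \leq n$ (and in particular whenever $m,p = O(n)$) matches the claimed $O(\max\set{mn, np} \cdot \log(mnp))$ bound and collapses to $O(n^2 \log n)$ in the stated ``in particular'' case. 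The main point---more a conceptual observation than a genuine obstacle---is that each coordinate of $w$ is an \emph{independent} Freivalds test for its own column, so the classical decision routine already doubles as the desired columnwise search routine with no additional loss.
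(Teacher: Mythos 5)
Your proposal is correct and follows essentially the same route as the paper's proof: stack $\Theta(\log(mnp))$ random $\{0,1\}$ test vectors into a matrix $X$, compute $X(AB-C')$ via the rectangular products $(XA)B$ and $XC'$, flag the columns with a nonzero entry, and drive the per-column miss probability $\le 1/2$ (you via the coordinate-flip pairing argument, the paper via an equivalent term-by-term induction) down by a union bound over the at most $p$ bad columns. Your explicit remark that the $mp$ cost term is dominated only when $\min(m,p)\le n$ is in fact slightly more careful than the paper's own accounting, and is harmless for the intended use where $m,p=O(n)$.
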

	
	\begin{proof}
		The algorithm works as follows. First, it samples a uniformly random matrix $X \sim \bit^{N \times m}$ for $N := \ceil{(c + 1) \log(mnp)}$. It then computes $D = (\vec{d}_1, \ldots, \vec{d}_p) := X(AB - C')$, and outputs $J' := \set{i \in [p] : \vec{d}_i \neq \vec{0}}$. %
		
		We start by showing correctness of the algorithm, i.e., that $J' = J$ for $J$ as defined in \cref{def:col-mmv}.
		For a fixed non-zero vector $\vec{y} \in R^m$ and a uniformly random $\vec{x} \sim \bit^m$, $\Pr[\iprod{\vec{x}, \vec{y}} \neq 0] \geq 1/2$. Indeed, this follows by noting that each term $x_i y_i$ in $\iprod{\vec{x}, \vec{y}} = \sum_{i=1}^m x_i y_i$ for non-zero $y_i$ is equal to $0$ with probability $1/2$ and an induction argument.
		So, for any non-zero column $\vec{y}_i$ of $AB - C'$, we get that $\Pr[X\vec{y}_i = \vec{0}] \leq 1/2^N$. (If $\vec{y}_i = 0$, then $X\vec{y}_i = \vec{0}$ with probability~$1$.)
		Taking a union bound over the at most $p$ non-zero columns of $AB - C'$, we then get that the probability that there exists a non-zero column $\vec{y}_i$ of $AB - C'$ such that $\vec{d}_i = \vec{0}$ is at most $p/2^N \leq p/(mnp)^{c + 1} \leq 1/(mnp)^c$, as needed.
		
		Furthermore, the algorithm runs in $O(N \cdot \max \set{mn, np}) = O(\max \set{mn, np} \cdot \log(mnp))$ time, which finishes the analysis.
	\end{proof}
	
	\subsubsection{The Reduction and its Analysis}
	
	We next give our reduction from $\OSMM$ (and in fact, simply $\MM$) to $\ColMMV$ and compressed sensing in \cref{alg:osmm_via_mmv}. We sometimes use sets to index the rows and columns of a matrix. I.e., for finite sets $I$ and $J$ we use notation like $A \in R^{I \times J}$ to denote a matrix over a ring $R$ with $\card{I}$ rows indexed by $I$ and $\card{J}$ columns indexed by $J$. Similarly, we use notation like $A_{I', J'}$ to denote the submatrix of $A \in R^{I \times J}$ with rows indexed by $I' \subseteq I$ and columns indexed by $J' \subseteq J$.
	
	\begin{algorithm}[th]
		\KwIn{Matrices $A, B \in R^{n \times n}$ over a ring $R$.}
		\KwSubroutine{An $m$-compressed sensing scheme $(\cM, \cR)$ over $R$.}
		\KwSubroutine{An algorithm \textsc{ColMMV} for \ColMMV.}
		\KwOut{The product $AB$.}
		\nonl~\\
		Set $J_0 := [n]$.\\
		Initialize a matrix $C' = (\vec{c}_1', \ldots, \vec{c}_n') \in R^{n \times n}$ arbitrarily.
		
		\For{$i = 0, 1, \ldots, \ceil{\log n}$}{
			Set $t := 2^i$.\\
			Compute $H := \cM(1^n, 1^t) \in R^{m \times n}$. \DontPrintSemicolon \tcp*{\footnotesize{$m = m(n, t)$}.}
			Compute $B' := B_{[n], J_i} \in R^{[n] \times J_i}$. \DontPrintSemicolon \tcp*{\footnotesize{Restrict $B$ to columns indexed by $J_i$.}}
			Compute $D = (\vec{d}_j)_{j \in J_i} := H A B' \in R^{[m] \times J_i}$. \\
			Compute $F = (\vec{f}_j)_{j \in J_i} := (\cR(\vec{d}_j))_{j \in J_i} \in R^{[n] \times J_i}$. \\
			Compute $J_{i + 1} := \textsc{ColMMV}(A, B', F)$. \DontPrintSemicolon 
			\tcp*{\footnotesize{Indices of columns that differ in $AB'$, $F$.}}
			\For{$j \in J_i \setminus J_{i+1}$}{
				Set $\vec{c}_j' := \vec{f}_j$.
			}
		}
		\KwRet $C'$
		\caption{Matrix multiplication using column-wise matrix multiplication verification.}
		\label{alg:osmm_via_mmv}
	\end{algorithm}
	
	In~\cref{thm:osm-via-mmv-correctness}, we analyze \cref{alg:osmm_via_mmv}. We again emphasize that its correctness does not depend on the sparsity of $AB$; only its running time does. Furthermore, the same algorithm works regardless of the sparsity of $AB$.

	\begin{theorem} \label{thm:osm-via-mmv-correctness}
		Let $A, B \in R^{n \times n}$ over a ring $R$, let $(\cM, \cR)$ be an $m$-compressed sensing scheme with $m = m(n, t) \leq O(t n^{\alpha})$ for some $\alpha > 0$, and let \textsc{ColMMV} be an algorithm for $\ColMMV$ over~$R$. Then on input $A, B \in R^{n \times n}$, \cref{alg:osmm_via_mmv} outputs the matrix product $C=AB$.
		
		Furthermore, let $T_{\cM}(n, t)$ and $T_{\cR}(n, t)$ denote the running times of $\cM$ and $\cR$, respectively, let $T_{H}(n, t)$ denote the time needed to multiply $H := \cM(1^n, 1^t)$ by an $n \times n$ matrix, and let $T_{V}(m, n, p)$ denote the running time of \textsc{ColMMV} on input matrices $A \in R^{m \times n}, B \in R^{n \times p}, C \in R^{m \times p}$.
		If~$\norm{AB}_0 \leq O(n^{\delta})$ for $0 \leq \delta \leq 2$ %
		then the algorithm performs
		\[
		\Ot(n^{\beta + \eps} + T_{\cM}(n, n) + n \cdot T_{\cR}(n, n) + T_{H}(n, n) + T_{V}(n, n, n))
		\]
		arithmetic operations over~$R$ for any constant $\eps > 0$,
		where
		\[
		\beta := 
		\begin{cases}
			2 & \text{if $\delta + \alpha \leq 1 + \omega^{\perp}$ ,} \\
			\omega(\delta + \alpha - 1, 1, 1) & \text{otherwise .}
		\end{cases}
		\]
	\end{theorem}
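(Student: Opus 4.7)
The plan is to analyze \cref{alg:osmm_via_mmv} in two parts. Write $C := AB = (\vec{c}_1, \ldots, \vec{c}_n)$. For correctness, I maintain the invariant that at the end of iteration $i$, $\vec{c}'_j = \vec{c}_j$ for every $j \in [n] \setminus J_{i+1}$. Initially $J_0 = [n]$, so the invariant is vacuous. In iteration $i$, because the $j$-th column of $B'$ equals the $j$-th column of $B$ for every $j \in J_i$, we have $\vec{f}_j = \cR(H \vec{c}_j)$; by the specification of $\ColMMV$, the subroutine returns exactly the set $J_{i+1}$ of indices $j \in J_i$ with $\vec{f}_j \neq \vec{c}_j$, and the algorithm copies $\vec{f}_j$ into $\vec{c}'_j$ for $j \in J_i \setminus J_{i+1}$, preserving the invariant. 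In the last iteration $t = 2^{\lceil \log n \rceil} \geq n$, so every column of $C$ is trivially $t$-sparse and $\cR$ recovers it exactly; hence $J_{\lceil \log n \rceil + 1} = \emptyset$ and $C' = C$ on termination. Notice this argument never uses the sparsity promise on $C$, consistent with the remark that the algorithm is a reduction from ordinary $\MM$.

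For the running time, the key quantity is $|J_i|$. Any column $\vec{c}_j$ of sparsity at most $2^{i-1}$ is recovered correctly by $\cR$ in iteration $i-1$ and removed from $J_i$, so $J_i$ contains only columns of strictly larger sparsity; combined with the trivial bound $|J_i| \leq n$ and the promise $\norm{C}_0 \leq O(n^\delta)$, this yields $|J_i| \leq \min\{n,\, O(n^\delta / 2^{i-1})\}$. Writing $\alpha' := i / \log n$ so that $2^i = n^{\alpha'}$, this is $n^{\min(1,\, \delta - \alpha')}$ up to constants. The dominant per-iteration cost is computing $D = (HA) B'$ via fast rectangular matrix multiplication: $HA$ has dimensions $n^{\alpha' + \alpha} \times n$ (since $m \leq O(t n^\alpha)$) and $B'$ has dimensions $n \times |J_i|$, giving cost $n^{\omega(\alpha' + \alpha,\, 1,\, \min(1, \delta - \alpha')) + \epsilon'}$ for any constant $\epsilon' > 0$.

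The technical heart of the proof is a uniform upper bound on this exponent over $\alpha' \in [0, \delta]$, via a three-regime case analysis. When $\delta - \alpha' \geq 1$ (which requires $\delta \geq 1$ and forces $\alpha' \leq \delta - 1$), monotonicity of $\omega$ gives $\omega(\alpha' + \alpha, 1, 1) \leq \omega(\alpha + \delta - 1, 1, 1)$. When $\delta - \alpha' < 1$ but $\alpha + \delta \geq 1$, I use permutation invariance (\cref{prop:alphaProps},~\cref{item:alpha1}) to rewrite the quantity as $\omega(1, \alpha' + \alpha, \delta - \alpha')$ and then apply \cref{cor:alphaFla} to obtain $\omega(1, 1, \alpha + \delta - 1) = \omega(\alpha + \delta - 1, 1, 1)$. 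When $\alpha + \delta < 1$, the trivial bound from \cref{prop:alphaProps},~\cref{item:alpha0} already gives $\omega(\alpha' + \alpha, 1, \delta - \alpha') \leq 1 + \alpha + \delta \leq 2$. Combining these with \cref{thm:rectMM-ub}, which yields $\omega(\alpha + \delta - 1, 1, 1) = 2$ exactly when $\alpha + \delta \leq 1 + \omega^\perp$, recovers the two cases defining $\beta$.

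Summing the per-iteration cost $n^{\beta + \epsilon'}$ over the $O(\log n)$ iterations absorbs the logarithmic factor into the final $\epsilon$. The remaining additive terms are routine: the $O(\log n)$ calls to $\cM$ contribute $\widetilde{O}(T_\cM(n, n))$; the products $HA$ across iterations contribute $\widetilde{O}(T_H(n, n))$; the $\sum_i |J_i| = O(n \log n)$ calls to $\cR$ contribute $\widetilde{O}(n \cdot T_\cR(n, n))$; and the $O(\log n)$ invocations of \textsc{ColMMV} contribute $\widetilde{O}(T_V(n, n, n))$. The main obstacle is the three-regime uniform bound on the matrix-multiplication exponent, particularly the regime $\alpha + \delta < 1$ where \cref{cor:alphaFla} does not directly apply and a separate trivial bound is needed, and the boundary $\alpha' = \delta - 1$ where the binding constraint on $|J_i|$ switches between $n$ and $n^{\delta - \alpha'}$.
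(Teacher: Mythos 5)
Your proof is correct and follows essentially the same route as the paper's: the same invariant-based correctness argument terminating at $t = 2^{\ceil{\log n}} \geq n$, the same bound $\card{J_i} \cdot 2^{i-1} \lesssim \norm{AB}_0$, and the same reduction of the per-iteration exponent $\omega(\alpha' + \alpha, 1, \delta - \alpha')$ to $\omega(\delta + \alpha - 1, 1, 1)$ via permutation invariance and \cref{cor:alphaFla}. The only difference is presentational: the paper dispatches the small-$\delta$ and $i = 0$ corner cases by assuming $\delta \geq 1$ without loss of generality, whereas you run an explicit three-regime case analysis; both handle the same edge cases with the same tools.
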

	
	\begin{proof}
		Assume without loss of generality that $\delta \geq 1$. If not, set $\delta = 1$.
		Let $C = (\vec{c}_1, \ldots, \vec{c}_n)$.
		
		We argue that after the $i$th iteration of the for loop, $\vec{c}_j' = \vec{c}_j$ for all $j \in [n]$ such that $\norm{\vec{c}_j}_0 \leq 2^i$.
		Indeed, this follows from the fact that for such an index $j$, $\vec{f}_j = \cR(H \vec{c}_j) = \vec{c}_j$ by the definition of a compressed sensing scheme with $t = 2^i$, and therefore we have that $j \notin J_{i + 1}$.
		(As in the proof of \cref{thm:osmm-deterministic-correctness} we assume that $\cR$ always outputs \emph{some} vector in $R^n$, regardless of its input.)
		In particular, because $\norm{\vec{c}_j}_0 \leq n \leq 2^{\ceil{\log n}}$ for all columns $\vec{c}_j$, $\vec{c}_j' = \vec{c}_j$ for all $j \in [n]$ after the last iteration of the for loop, which implies that $C' = C$, as needed.
		
		We next turn to analyzing the algorithm's running time. We upper bound the worst-case running time of a single iteration (iteration $i$) of the for loop, and note that the loop performs $O(\log n)$ iterations.
		The $\cM(1^n, 1^t)$ call takes time $T_{\cM}(n, t) \leq T_{\cM}(n, n)$, the $\card{J_i} \leq n$ calls to $\cR$ take time at most $\card{J_i} \cdot T_{\cR}(n, t)\leq n\cdot T_{\cR}(n, n)$, and the call to \textsc{ColMMV} takes time at most $T_{V}(n, n, \card{J_i}) \leq T_{V}(n, n, n)$. %
		
		It remains to upper bound the complexity of computing $H A B'$ in the $i$th iteration, which we compute parenthesized as $(H A)B'$.
		Computing $A' := H A$ takes $T_{H}(n, t) \leq T_H(n, n)$ time by assumption,\footnote{When instantiating this reduction each matrix $H$ will be sparse. Such matrices admit fast multiplication.} and it remains to compute $A' B'$, which we do using rectangular matrix multiplication.
		For $i = 0$, we have $t = 1$ and $m = m(n, t) = O(n^{\alpha})$. Therefore, computing $A' B'$ takes $O(n^{\omega(\alpha, 1, 1) + \eps}) = O(n^{\omega(\delta + \alpha - 1, 1, 1) + \eps})$ time.
		For $i \geq 1$, we have $t = 2^i$ and $(t/2) \cdot \card{J_i} \leq \norm{AB}_0$, which holds because $J_i$ consists of indices of columns of $AB$ with more than $t/2$ non-zero entries. 
		
		Let $\eta, \gamma \geq 0$ be such that $t = O(n^{\eta})$ and $\card{J_i} = O(n^{\gamma})$.
		Then because $A' \in R^{m \times n}$ for $m = m(n, t) = O(t n^{\alpha}) = O(n^{\eta + \alpha})$ and $B' \in R^{n \times \card{J_i}}$, $A' B'$ can be computed in $O(n^{\omega(\eta + \alpha, 1, \gamma) + \eps})$ time for any constant $\eps > 0$. Furthermore, we have that $\eta + \gamma \leq \delta$ (since $(t/2) \cdot \card{J_i} \leq \norm{AB}_0 \leq O(n^{\delta})$), and so by \cref{prop:alphaProps}, \cref{item:alpha1} and \cref{cor:alphaFla}, $\omega(\eta + \alpha, 1, \gamma) \leq \omega(\eta + \alpha + \gamma - 1, 1, 1) \leq \omega(\delta + \alpha - 1, 1, 1)$, as needed.
	\end{proof}
	
	Our main theorem about solving $\OSMM$ via a randomized algorithm follows as a corollary.
	
	\MainRand*
	\begin{proof}
		Assume without loss of generality that $\delta \geq 1$. If not, set $\delta = 1$.
		We instantiate the compressed sensing scheme subroutine needed in \cref{alg:osmm-deterministic} with the scheme described in \cref{thm:main-compressed-sensing}.
		The running time analysis follows from a similar argument as in the proof of \cref{thm:det-osmm-intro}, combined with the guarantee that $T_H(n, n) \leq \Ot(n^2)$ from \cref{thm:main-compressed-sensing} and the guarantee that $T_V(n, n, n) \leq \Ot(n^2)$ from \cref{lem:freivalds-for-colMMV}.
		In particular, the fact that $H$ can be multiplied by an arbitrary vector $\vec{x}$ in $\Ot(\norm{\vec{x}}_0)$ time implies that $H$ can be multiplied by an arbitrary matrix $A \in R^{n \times n}$ in $\Ot(n^2)$ time. This implies that $T_H(n, n) \leq \Ot(n^2)$.
		
		Furthermore, the algorithm in \cref{lem:freivalds-for-colMMV} is randomized and succeeds with probability $1 - 1/n^c$ for any constant $c > 0$.
		Fix any such $c > 0$. \cref{alg:osmm_via_mmv} makes $O(\log n)$ calls to $\textsc{ColMMV}$, and so by taking a union bound we get that all of these calls succeed with probability at least $1 - O(\log n/n^c) = 1 - o(1)$. It follows that \cref{alg:osmm_via_mmv} succeeds with probability $1 - o(1)$, as needed.
	\end{proof}

	\section{Fully Sparse Matrix Multiplication}
	\label{sec:fsmm}
	In this section, we describe how to extend our main algorithms for $\OSMM$ to the fully sparse setting when the input matrices $A, B$ are promised not only to satisfy $\norm{AB}_0 \leq O(n^{\dout})$ but also $\norm{A}_0, \norm{B}_0 \leq O(n^{\din})$ for some $\din \in [0, 2]$.
	
	\subsection{Deterministic FSMM}
	\label{sec:deterministic-fsmm}
	
	We first give our deterministic algorithm for $\FSMM$, which follows by instantiating \cref{alg:osmm-deterministic} with the compressed sensing scheme in \cref{thm:main-compressed-sensing} as in the proof of \cref{thm:det-osmm-intro}.
	
	\begin{theorem}[Deterministic FSMM] \label{thm:fsmm-deterministic}
		Let $\din, \dout \in [0, 2]$, let $R$ be a ring, and let $\eps > 0$ be a constant.
		There is a deterministic algorithm for solving 
		$(\din, \dout)$-$\FSMM_R$
		on $n \times n$ matrices that 
		performs $O(n^{\beta + \eps})$ operations over $R$, where
		\[
		\beta := \max \set{\dout/2 + 1, \omega_{\din}(\dout/2,1,1)} \ \text{.}
		\]
	\end{theorem}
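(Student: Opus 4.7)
The plan is to run \cref{alg:osmm-deterministic} exactly as in the proof of \cref{thm:det-osmm-intro}, instantiated with the compressed sensing scheme of \cref{thm:main-compressed-sensing} at $t = O(n^{\dout/2})$. Correctness is inherited verbatim from \cref{thm:osmm-deterministic-correctness}, whose argument makes no use of sparsity in $A$ or $B$. All of the new work is in a sharper running-time analysis that exploits the input sparsity together with the fact (from \cref{thm:main-compressed-sensing}) that the measurement matrix $H$ is $\poly(\log n)$-column-sparse, and so can be left-multiplied into any matrix $X$ in $\Ot(\norm{X}_0)$ operations, producing a result of sparsity $\Ot(\norm{X}_0)$.

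I would re-examine each line of \cref{alg:osmm-deterministic} in turn. The two expensive matrix products are $HAB$ on line 2 and $H(AB - D')^T$ on line 4. For the first, I would compute $(HA)B$: by the column-sparsity of $H$, forming $HA$ costs $\Ot(\norm{A}_0) = \Ot(n^{\din})$ operations and yields a matrix of sparsity $\Ot(n^{\din})$; then $(HA)\cdot B$ is an (approximately) $n^{\dout/2} \times n$ by $n \times n$ multiplication with both factors $\Ot(n^{\din})$-sparse, so by the definition of $\omega_{\din}$ it can be performed in $n^{\omega_{\din}(\dout/2, 1, 1) + \eps'}$ operations for any constant $\eps' > 0$. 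For the second product I would avoid forming $AB$ entirely and write
\[
H(AB - D')^T \;=\; H B^T A^T - H (D')^T ,
\]
computing $H B^T A^T$ by the same two-step strategy for another $n^{\omega_{\din}(\dout/2, 1, 1) + \eps'}$ operations and computing $H(D')^T$ in $\Ot(\norm{D'}_0)$ operations by left-multiplication.

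The remaining contributions come from the $2n$ calls to $\cR$. By \cref{thm:main-compressed-sensing}, each call performs $t^{1+\alpha'} \cdot \poly(\log n)$ operations for a constant $\alpha' > 0$ that I am free to choose, which simultaneously bounds the total recovery cost by $\Ot(n^{1 + \dout/2 + \alpha' \dout/2})$ and (since each column of $D'$ is the output of a single call) the sparsity $\norm{D'}_0$ by the same quantity. Summing everything gives a total cost of
\[
\Ot\bigl(n^{\din} + n^{1 + \dout/2 + \alpha' \dout/2} + n^{\omega_{\din}(\dout/2, 1, 1) + \eps'}\bigr)
\]
operations. Since $\omega_{\din}(\dout/2, 1, 1) \geq \din$ just from the cost of reading the input, the first term is always dominated; choosing $\alpha', \eps' > 0$ with $\alpha' + \eps' \leq \eps$ then yields the claimed bound $O(n^{\beta + \eps})$.

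The main obstacle I expect is the bookkeeping around $D'$: the promise $\norm{AB}_0 \leq O(n^{\dout})$ does not directly bound the sparsity of $D'$, so the second-pass analysis leans entirely on the running-time bound of $\cR$ as a proxy sparsity bound on its output. Threading this bound through the computation of $H(D')^T$, while ensuring that all $\poly\log n$ overheads from the sparse representation and the $\alpha'$-slack in the compressed sensing scheme stay inside the $n^{\eps}$ slack in the exponent, is the only real accounting challenge.
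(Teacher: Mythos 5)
Your proposal is correct and follows essentially the same route as the paper's proof: run \cref{alg:osmm-deterministic} with the scheme of \cref{thm:main-compressed-sensing}, compute $HAB$ as $(HA)B$ and $H(AB-D')^T$ as $(HB^T)A^T - H(D')^T$ using the column-sparsity of $H$ and the definition of $\omega_{\din}$, and absorb the $\alpha'$- and $\eps'$-slack into the exponent. Your way of bounding $\norm{D'}_0$ via the running time of $\cR$ is a clean (arguably cleaner) justification of the step the paper handles by bounding the support of $D'$.
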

	
	\begin{proof}[Proof sketch]
		We refer to \cref{alg:osmm-deterministic}, instantiated with the compressed sensing scheme in \cref{thm:main-compressed-sensing}, and show how to implement its operations under the assumption that $\norm{A}_0, \norm{B}_0 \leq O(n^{\din})$.
		Let $\alpha > 0$ be a small constant to be set in the analysis.
		We first note that $T_{\cM}(n, t) \leq \Ot(n)$ as before.
		
		We now analyze the time needed to compute $D = HAB$. 
		By \cref{thm:main-compressed-sensing}, $H$ is $\poly(\log n)$-column-sparse,  and so it is possible to compute $A' := HA$ in $\Ot(n^{\din})$ time by \cref{prop:naive-sparse-mm}. Since $\norm{A'}_0 = \Ot(n^{\din})$ and $H, A' \in R^{m \times n}$ for $m = O(n^{\dout/2 + \alpha})$, it is possible to compute the product $A' B$ in $O(n^{\omega_{\din}(\dout/2, 1, 1) + \eps})$ time for any constant $\eps > 0$ (assuming that $\alpha > 0$ is chosen to be sufficiently small; see the proof of \cref{thm:det-osmm-intro}). %
		We note that $\omega_{\din}(\dout/2, 1, 1) \geq \din$ (since simply reading in an $n \times n$ matrix $B$ with $\Theta(n^{\din})$ non-zero entries takes $\Omega(n^{\din})$ time), and so computing $HAB$ takes $O(n^{\omega_{\din}(\dout/2, 1, 1) + \eps})$ time overall.
		
		Furthermore, by \cref{thm:main-compressed-sensing},
		$T_{\cR}(t, n) \leq t^{1 + \alpha}\cdot \poly(\log n) \leq \Ot(n^{\dout/2 + \alpha})$ since
		$t = O(n^{\dout/2})$, and so computing $D'$ and $F'$ takes $\Ot(n^{\dout/2 + \alpha + 1})$ time.
		Finally, we analyze the time needed to compute $F$. Computing $H(AB)^T$ can be done in $O(n^{\omega_{\din}(\dout/2, 1, 1) + \eps})$ time for any constant $\eps > 0$ as with computing $D$.
		Furthermore, $D'$ has at most $O(n^{\dout/2})$ non-zero columns, and so, again using the fact that $H$ is $\poly(\log n)$-column-sparse, it takes $\Ot(n^{\dout/2 + 1})$ time to compute $H(D')^T$. The claim follows.
	\end{proof}
	
	We recover \cref{thm:det-osmm-intro} from \cref{thm:fsmm-deterministic} by noting that $\omega_2(\dout/2, 1, 1) = \omega(\dout/2, 1, 1)$ and that $\omega(\dout/2, 1, 1) \geq \dout/2 + 1$ (by \cref{prop:alphaProps},~\cref{item:alpha0}).
	
	Using the bounds on $\omega_{\din}(\dout/2,1,1)$ from \cref{prop:omega_din},~\cref{item:omega_din_ub} and ~\cref{thm:yz}, we conclude that for every $\eps>0$, $(\din, \dout)$-$\FSMM_R$ can be solved using $O(n^{\dout/2+1+\eps}+n^{\dout/2+\din+\eps})$ or $O(n^{\dout/2+1+\eps}+n^{\sigma+\eps})$ operations over~$R$, where
	\begin{align}\label{eq:detFSMMfla}
		\sigma \leq \min_{\eta\in[0,1]} \max\set{\omega(\dout/2, \eta, 1),\, 2\din-\eta} \;.
	\end{align}
	
	\subsection{Randomized FSMM}
	\label{sec:randomized-fsmm}
	
	We now give our randomized algorithm for $\FSMM$ as \cref{thm:fsmm-randomized}. It follows by instantiating \cref{alg:osmm_via_mmv} with the compressed scheme in \cref{thm:main-compressed-sensing} as in the proof of \cref{thm:rand-osmm-intro}.
	We recall that \cref{thm:fsmm-randomized} is equivalent to~\cite[Lemma 3.1]{abboud2024time}, although the algorithm that it is based on uses different techniques.
	
	\begin{theorem}[Randomized FSMM] \label{thm:fsmm-randomized}
		Let $\din\in[1,2], \dout \in [0, 2\din]$, let $R$ be a ring, and let $\eps > 0$ be a constant.
		There is a randomized algorithm for solving
		$(\din, \dout)$-$\FSMM_R$
		on $n \times n$ matrices that performs $O(n^{\din+\eps}+n^{\beta+\eps})$ operations over $R$, where
		\[
		\beta := \max_{\substack{\eta_1, \eta_2 \in [0, 1] \, , \\ \eta_1 + \eta_2 = \dout}} \omega_{\din}(\eta_1,1,\eta_2) \;.
		\]
	\end{theorem}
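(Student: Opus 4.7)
My plan is to instantiate \cref{alg:osmm_via_mmv} with the compressed sensing scheme from \cref{thm:main-compressed-sensing}, exactly as in the proof of \cref{thm:rand-osmm-intro}, but re-analyze each iteration to take advantage of the sparsity of $A$ and $B$, the column-sparsity of the measurement matrix $H$, and a sparse-aware implementation of Freivalds's algorithm for $\ColMMV$.

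In each iteration $i$ (with $t = 2^i$), I would first dispense with the ``cheap'' components. Generating $H$ still costs $\Ot(n)$ by \cref{thm:main-compressed-sensing}, and since $H$ is $\poly(\log n)$-column-sparse, \cref{prop:naive-sparse-mm} gives $A' := HA$ in $\Ot(n^{\din})$ time with $\nnz(A') = \Ot(n^{\din})$. Assuming without loss of generality (by post-processing) that $\cR$ always outputs a $t$-sparse vector, assembling $F$ takes a total of $\card{J_i} \cdot T_{\cR}(n, t) \leq \Ot(n^{\dout + \alpha})$ operations (using $\card{J_i} \cdot t \leq O(n^{\dout})$ for $i \geq 1$), and the resulting matrix satisfies $\nnz(F) = \Ot(n^{\dout})$. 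For the $\ColMMV$ call I would use a sparse variant of Freivalds: sample $X \in \bit^{O(\log n) \times n}$, compute $XA$ in $\Ot(n^{\din})$ time using the sparsity of $A$, $(XA)B'$ in $\Ot(n^{\din})$ time by iterating over columns of the sparse $B'$, and $XF$ in $\Ot(n^{\dout})$ time using the sparsity of $F$, giving an overall $\Ot(n^{\din} + n^{\dout})$ bound per $\ColMMV$ call.

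The main step, and what I expect to be the primary obstacle, is bounding the cost of the rectangular product $D = (HA) B'$. Writing $\eta_1 := i / \log n$ and $\eta_2 := \log_n \card{J_i}$, this is the multiplication of an $n^{\eta_1 + \alpha} \times n$ matrix by an $n \times n^{\eta_2}$ matrix in which both factors are $\Ot(n^{\din})$-sparse, so by the definition of $\omega_{\din}$ it can be computed in $\Ot(n^{\omega_{\din}(\eta_1 + \alpha, 1, \eta_2) + \eps'})$ time for any small constant $\eps' > 0$. The delicate part will be to show this exponent is at most $\beta + O(\alpha)$ in every iteration. For $i \geq 1$, the inequality $\card{J_i} \cdot 2^{i-1} \leq \norm{AB}_0 = O(n^{\dout})$ gives $\eta_1 + \eta_2 \leq \dout + o(1)$, and trivially $\eta_1, \eta_2 \leq 1$. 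Since $\dout \leq 2$, I can ``round up'' $(\eta_1 + \alpha, \eta_2)$ to a pair $(\eta_1', \eta_2') \in [0, 1]^2$ with $\eta_1' + \eta_2' = \dout$, $\eta_1' \geq \eta_1 + \alpha$, and $\eta_2' \geq \eta_2$, then invoke monotonicity of $\omega_{\din}(\cdot, 1, \cdot)$ (justified by zero-padding) to conclude $\omega_{\din}(\eta_1 + \alpha, 1, \eta_2) \leq \omega_{\din}(\eta_1', 1, \eta_2') \leq \beta$. I would handle the corner case $i = 0$ (where $t = 1$ and $\card{J_0} = n$) separately: here $HA$ has only $\poly(\log n)$ rows, so $(HA) B$ reduces to multiplying a polylog-row matrix by the sparse $B$ and costs only $\Ot(n^{\din})$.

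Summing these costs over the $O(\log n)$ iterations and absorbing logarithmic factors into $n^{\eps}$ would yield the claimed $O(n^{\din + \eps} + n^{\beta + \eps})$ running time. Correctness would be inherited verbatim from \cref{thm:osm-via-mmv-correctness}, and a union bound over the $O(\log n)$ randomized $\ColMMV$ calls (each succeeding with probability $1 - 1/n^c$ for a suitably chosen constant $c$) would give overall success probability $1 - o(1)$.
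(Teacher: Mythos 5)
Your plan is essentially the paper's own proof: instantiate \cref{alg:osmm_via_mmv} with the scheme of \cref{thm:main-compressed-sensing}, compute $HA$ via the $\poly(\log n)$-column-sparsity of $H$ and \cref{prop:naive-sparse-mm}, bound $(HA)B'$ by $\omega_{\din}(\eta_1,1,\eta_2)$ using $t\cdot\card{J_i}=O(n^{\dout})$, and replace Freivalds by the sparse-aware version computing $XA$, $(XA)B'$, and $XF$ exactly as you describe. One step in your final summation is left unjustified, and it is the only place the hypothesis $\din\geq 1$ is actually needed: the per-iteration costs $\Ot(n^{\dout+\alpha})$ for assembling $F$ and for $XF$ are not obviously dominated by $n^{\din+\eps}+n^{\beta+\eps}$ (e.g.\ $\din=1$, $\dout=2$ gives $n^{\dout}\gg n^{\din}$). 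The paper closes this by observing that $\dout\leq\beta$, since taking $\eta_1=\min\set{1,\dout}$ and $\eta_2=\dout-\eta_1$ gives $\omega_{\din}(\eta_1,1,\eta_2)\geq\eta_1+\eta_2=\dout$ by \cref{prop:omega_din},~\cref{item:omega_din_lb}, which applies precisely because $\din\geq 1\geq\max\set{\eta_1,\eta_2}$; you should add this line. A smaller quibble: your ``round up $(\eta_1+\alpha,\eta_2)$ to a pair summing to exactly $\dout$'' move can fail when $\eta_1+\eta_2$ is already within $\alpha+o(1)$ of $\dout$ or when $\eta_1+\alpha>1$ (the last iterations have $t\approx n$), so you need either a small-perturbation argument (e.g.\ splitting the $n^{\eta_1+\alpha}$ rows into $n^{\alpha}$ blocks, costing an extra $n^{\alpha}$ factor absorbed into $n^{\eps}$) or the paper's ``choose $\alpha,\eps'$ sufficiently small'' phrasing; with that repaired, the rest of your argument, including the $i=0$ case and the union bound over the $O(\log n)$ \problem{Column\textrm{-}Wise\textrm{-}MMV} calls, matches the paper.
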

	
	\begin{proof}[Proof sketch]
		We refer to \cref{alg:osmm_via_mmv}, instantiated with the compressed sensing scheme in \cref{thm:main-compressed-sensing}, and show how to implement its operations under the assumption that $\norm{A}_0, \norm{B}_0 \leq O(n^{\din})$.
		
		As in the proof of \cref{thm:rand-osmm-intro}, we upper bound the worst-case running time of a single iteration (iteration $i$) of the for loop for $i\geq1$ (as the running time for $i=0$ does not affect the final bound).
		Furthermore, as noted there, we have that $t \cdot \card{J_i} = O(n^{\dout})$ at each iteration of the loop. I.e., choosing $\eta_1 > 0$ so that $t = 2^i = O(n^{\eta_1})$, we have that  $\card{J_i} \leq O(n^{\eta_2})$ where $\eta_2\leq\min\set{1, \dout - \eta_1}$.
		We then have that $m = O(t n^{\alpha}) = O(n^{\eta_1 + \alpha})$ for an arbitrarily small constant $\alpha > 0$ to be set in the analysis.
		We note that $T_{\cM}(n, t) \leq \Ot(n)$.
		Moreover, we have that computing $F$ takes $\card{J_i} \cdot T_{\cR}(n, t) \leq n^{n-\eta_1} \cdot t^{1+\alpha} \cdot \poly(\log n) = \Ot(n^{\dout+\alpha})$ time.
		
		We next sketch how to modify the algorithm in the proof of \cref{lem:freivalds-for-colMMV} to run in $\Ot(n^{\din})$ time.
		We note that the matrix $X$ sampled there has $O(\log n)$ rows, and that we need to compute $XAB'$ and $XF$.
		Computing $XA$ and $(XA)B'$ takes $\Ot(n^\din)$ time by \cref{prop:naive-sparse-mm}.
		On the other hand, we have that $\cR$ runs in $t^{1 + \alpha} \cdot \poly(\log n) = \Ot(n^{\eta_1 + \alpha})$ time, and therefore each of the $\card{J_i} = O(n^{\dout - \eta_1})$ columns of $F$ must be $\Ot(n^{\eta_1 + \alpha})$ sparse. Therefore, we can compute $XF$ in time $\Ot(n^{\dout + \alpha})$ time.
		
		It remains to analyze the time complexity of computing $D=HAB'$, since all other operations are efficient.
		We have that $A' := HA$ can be computed in $O(n^{\omega_{\din}(\eta_1 + \alpha, 1, \eta_2) + \eps'})$ time for any constant $\eps' > 0$, which implies that it can be computed in $O(n^{\omega_{\din}(\eta_1, 1, \eta_2) + \eps})$ time for any constant $\eps > 0$ by choosing $\alpha, \eps' > 0$ to be sufficiently small.
		Furthermore, since $H$ is $\poly(\log n)$ column sparse, $A'$ is itself $\Ot(n^{\din})$ sparse. It follows that $HAB = A' B$ can also be computed in $O(n^{\omega_{\din}(\eta_1, 1, \eta_2) + \eps})$ time for any constant $\eps > 0$.

		The claim follows by noting that the for loop has at most $O(\log n)$ iterations, that the definition of $\beta$ uses $\max_{\eta_1,\eta_2 : \eta_1 + \eta_2 = \dout} \omega_{\din}(\eta_1, 1, \eta_2)$, and that $\dout \leq \omega_{\din}(\eta_1, 1, \eta_2)$ for $\eta_1=\min\set{1,\dout}$ and $\eta_2=\dout-\eta_1$ by  \cref{prop:omega_din},~\cref{item:omega_din_lb} since $\eta_1 + \eta_2 = \dout$. Indeed, \cref{prop:omega_din},~\cref{item:omega_din_lb} applies since these values of $\eta_1, \eta_2$ are at most $1$ and $\din \geq 1$.
	\end{proof}
	
	We recover \cref{thm:rand-osmm-intro} from \cref{thm:fsmm-randomized} by setting $\din = 2$ and noting that for $\dout \geq 1$,
	\[
	\max_{\substack{\eta_1, \eta_2 \in [0, 1] \, , \\ \eta_1 + \eta_2 = \dout}} \omega_{\din}(\eta_1,1,\eta_2)
	=
	\max_{\substack{\eta_1, \eta_2 \in [0, 1] \, , \\ \eta_1 + \eta_2 = \dout}} \omega(\eta_1,1,\eta_2)
	\leq \omega(\dout - 1, 1, 1)
	\]
	by \cref{cor:alphaFla}.
	
	While we state \cref{thm:fsmm-randomized} for $\din\geq1$ for simplicity, we remark that the proof can be extended to the case of $\din\in[0,2]$. The main challenge in this case is that we cannot afford to run the algorithm $\cM$ it time $\Ot(n)$ anymore (as this bound might be larger than the claimed upper bound on the running time of the algorithm). To overcome this, we note that the construction of the matrix~$H$ in our compressed sensing scheme in \cref{sec:compressedSensing} is in fact strongly explicit. That is, one can compute all non-zero entries of a given column of~$H$, and therefore multiply by~$H$ efficiently without even running the algorithm~$\cM$ first.

	\subsection{Optimal FSMM (and OSMM)}
	\label{sec:fsmm-optimal}
	
	We now prove that our randomized algorithm for $\FSMM$ in \cref{thm:fsmm-randomized} (which is equivalent to~\cite[Lemma 3.1]{abboud2024time}) is optimal in the sense that any improvement on it would imply a corresponding improvement on input-sparse rectangular matrix multiplication exponents. 
	In fact, we show an \emph{equivalence} between these exponents and the running time exponent for $(\din, \dout)$-$\FSMM_R$.
	We also show an analogous result for the optimality of the algorithm in \cref{thm:rand-osmm-intro} for $\OSMM$.

	\begin{proposition} \label{prop:fsmm-rmm-equiv}
		Let $\din \in [1, 2]$, $\dout\in[0,2\din]$, let $R$ be a ring, and let $\beta \geq 0$. Then there is a randomized algorithm for $(\din,\dout)$-$\FSMM_R$ that runs in time $O(n^{\beta + \eps})$ for every constant $\eps > 0$ if and only if 
		$\beta \geq \max \set{\din, \omega_\din(\eta_1 , 1, \eta_2)}$
		for all $\eta_1, \eta_2 \in [0, 1]$ with $\eta_1 + \eta_2 = \dout$.
		
		Furthermore, if $\dout \in [1, 2]$ then there is a randomized algorithm for $\dout$-$\OSMM$ that runs in time $O(n^{\beta + \eps})$ for every constant $\eps > 0$ if and only if $\beta \geq \omega(\dout - 1, 1, 1)$. 
	\end{proposition}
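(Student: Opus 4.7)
}
The plan is to prove both directions of the equivalence, then derive the $\OSMM$ statement as a corollary by specializing $\din = 2$ and simplifying using \cref{cor:alphaFla}.

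For the ``if'' direction, assume $\beta \geq \max\{\din, \omega_\din(\eta_1, 1, \eta_2)\}$ for all $\eta_1, \eta_2 \in [0,1]$ with $\eta_1 + \eta_2 = \dout$. Then the quantity $\beta' := \max_{\eta_1 + \eta_2 = \dout, \ \eta_1, \eta_2 \in [0,1]} \omega_\din(\eta_1, 1, \eta_2)$ satisfies $\beta' \leq \beta$, and also $\din \leq \beta$. The algorithm of \cref{thm:fsmm-randomized} runs in $O(n^{\din + \eps} + n^{\beta' + \eps}) = O(n^{\beta + \eps})$ for every constant $\eps > 0$.

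For the ``only if'' direction, suppose we have a randomized $O(n^{\beta+\eps})$-time algorithm for $(\din, \dout)$-$\FSMM_R$. The bound $\beta \geq \din$ is immediate: an input instance can have $\Theta(n^{\din})$ non-zero entries, and merely reading the input already requires $\Omega(n^{\din})$ operations. For the bound $\beta \geq \omega_\din(\eta_1, 1, \eta_2)$, I would use a padding reduction from input-sparse rectangular matrix multiplication. Given matrices $A \in R^{n^{\eta_1} \times n}$ and $B \in R^{n \times n^{\eta_2}}$ with $\norm{A}_0, \norm{B}_0 \leq O(n^\din)$, form $n \times n$ matrices $A', B'$ by padding $A$ and $B$ with zero rows/columns. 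Then $\norm{A'}_0 = \norm{A}_0 \leq O(n^\din)$ and similarly for $B'$, and $A'B'$ contains $AB$ as its top-left $n^{\eta_1} \times n^{\eta_2}$ block with zeros elsewhere, so $\norm{A'B'}_0 = \norm{AB}_0 \leq n^{\eta_1 + \eta_2} = n^{\dout}$. Thus $(A', B')$ is a valid $(\din, \dout)$-$\FSMM_R$ instance, and running the assumed algorithm yields $A'B'$ (and therefore $AB$) in $O(n^{\beta + \eps})$ operations, proving $\omega_\din(\eta_1, 1, \eta_2) \leq \beta$.

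For the $\OSMM$ statement, specialize to $\din = 2$, in which case $\omega_\din = \omega$, and the first part says the optimal exponent is $\max\{2, \max_{\eta_1 + \eta_2 = \dout, \ \eta_1, \eta_2 \in [0,1]} \omega(\eta_1, 1, \eta_2)\}$. I would then argue that this maximum equals $\omega(\dout - 1, 1, 1)$ for $\dout \in [1, 2]$. The lower bound is witnessed by $\eta_1 = 1, \eta_2 = \dout - 1$ combined with \cref{prop:alphaProps},~\cref{item:alpha1}. For the upper bound, given $\eta_1, \eta_2 \in [0, 1]$ with $\eta_1 + \eta_2 = \dout \geq 1$, apply \cref{prop:alphaProps},~\cref{item:alpha1} to write $\omega(\eta_1, 1, \eta_2) = \omega(1, \eta_1, \eta_2)$, then invoke \cref{cor:alphaFla} (valid since $\eta_1 + \eta_2 \geq 1$) to get $\omega(1, \eta_1, \eta_2) \leq \omega(1, 1, \dout - 1) = \omega(\dout - 1, 1, 1)$. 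Since $\omega(\dout - 1, 1, 1) \geq 2$ by \cref{prop:alphaProps},~\cref{item:alpha0}, the condition $\beta \geq \max\{2, \omega(\eta_1, 1, \eta_2)\}$ over all valid $(\eta_1, \eta_2)$ collapses to $\beta \geq \omega(\dout - 1, 1, 1)$.

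The proof is essentially routine, and I do not anticipate a serious obstacle. The only subtle point is the reduction step: one must verify that padding preserves both the input-sparsity budget and the output-sparsity promise simultaneously, which is why the constraint $\eta_1 + \eta_2 = \dout$ (rather than just $\eta_1 + \eta_2 \leq \dout$) appears naturally. The constraint $\eta_1, \eta_2 \in [0, 1]$ is needed so that the padded matrices fit into the $n \times n$ instance size of $\FSMM$.
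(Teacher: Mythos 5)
Your proposal is correct and follows essentially the same route as the paper: the ``if'' direction invokes \cref{thm:fsmm-randomized} (resp.\ \cref{thm:rand-osmm-intro}), and the ``only if'' direction uses the same zero-padding reduction from input-sparse rectangular matrix multiplication, with the trivial $\Omega(n^{\din})$ input-reading bound. Your extra step showing that $\max_{\eta_1+\eta_2=\dout}\omega(\eta_1,1,\eta_2)=\omega(\dout-1,1,1)$ via \cref{cor:alphaFla} is the same argument the paper gives when recovering \cref{thm:rand-osmm-intro} from \cref{thm:fsmm-randomized}.
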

	
	\begin{proof}
		Let $\eta_1, \eta_2 \in [0, 1]$ with $\eta_1 + \eta_2 = \dout$.
		We reduce the problem of computing the product of $A \in R^{m \times n}$ and $B \in R^{n \times p}$ for $m = O(n^{\eta_1})$ and $p = O(n^{\eta_2})$ when $A$ and $B$ are both $O(n^{\din})$-sparse to $(\din, \dout)$-$\FSMM_R$.
		The reduction simply outputs $A' := (A^T, \vec{0}, \ldots, \vec{0})^T \in R^{n \times n}$ and $B' := (B, \vec{0}, \ldots, \vec{0}) \in R^{n \times n}$. I.e., $A'$ and $B'$ are simply $A$ and $B$ padded with enough rows and columns of $0$s, respectively, to make them square.
		It is easy to recover the product $AB$ from the product $A' B'$. Furthermore, $\norm{A'}_0 = \norm{A}_0 = O(n^{\din})$, $\norm{B'}_0 = \norm{B}_0 = O(n^{\din})$, and, because $AB \in R^{m \times p}$, $\norm{A' B'}_0 \leq mp \leq O(n^{\eta_1 + \eta_2}) = O(n^{\dout})$, as needed.
		Because the reduction worked for arbitrary $\eta_1, \eta_2 \in [0, 1]$ with $\eta_1 + \eta_2 = \dout$, we get that if there is an algorithm using $O(n^{\beta + \eps})$ operations over $R$ for $(\din, \dout)$-$\FSMM_R$ for some $\beta > 0$ and any $\eps > 0$, then $\omega_{\din}(\eta_1, 1, \eta_2) \leq \beta$. Furthermore, we trivially have that such $\beta$ must satisfy $\beta \geq \din$ since it takes $\Omega(n^{\din})$ time to read an $(\din, \dout)$-$\FSMM_R$ instance.
		
		On the other hand, \cref{thm:fsmm-randomized} implies that if $\beta \geq \din$ and $\beta \geq \omega_\din(\eta , 1, \dout-\eta)$ for all $\eta\in[0,\dout]$ then $(\din,\dout)$-$\FSMM_R$ is solvable using $O(n^{\beta + \eps})$ many operations over $R$.
		
		The result for $\OSMM$ follows by combining the padding argument above with $\din = 2$ and $\eta = 1$ together with \cref{thm:rand-osmm-intro}.
	\end{proof}
	
	\newpage
	
	\bibliographystyle{alpha}
	\bibliography{os_mm}

\newcommand{\etalchar}[1]{$^{#1}$}
\begin{thebibliography}{ZWWW11}

\bibitem[ABFK24]{abboud2024time}
Amir Abboud, Karl Bringmann, Nick Fischer, and Marvin K{\"u}nnemann.
\newblock The time complexity of fully sparse matrix multiplication.
\newblock In {\em SODA}, 2024.

\bibitem[ADV{\etalchar{+}}25]{alman2024more}
Josh Alman, Ran Duan, Virginia {Vassilevska Williams}, Yinzhan Xu, Zixuan Xu,
  and Renfei Zhou.
\newblock More asymmetry yields faster matrix multiplication.
\newblock In {\em SODA}, 2025.

\bibitem[AP09]{AP09}
Rasmus~Resen Amossen and Rasmus Pagh.
\newblock Faster join-projects and sparse matrix multiplications.
\newblock In {\em ICDT}, 2009.

\bibitem[AV21]{alman2021refined}
Joah Alman and Virginia {Vassilevska Williams}.
\newblock A refined laser method and faster matrix multiplication.
\newblock In {\em SODA}, 2021.

\bibitem[BCH15]{BCH15}
Michele Borassi, Pierluigi Crescenzi, and Michel Habib.
\newblock Into the square: On the complexity of some quadratic-time solvable
  problems.
\newblock In {\em ICTCS}, 2015.

\bibitem[BGGW24]{BGGW24-MMV}
Huck Bennett, Karthik Gajulapalli, Alexander Golovnev, and Evelyn Warton.
\newblock Matrix multiplication verification using coding theory.
\newblock In {\em RANDOM}, 2024.

\bibitem[BGI{\etalchar{+}}08]{BGIKS08}
Radu Berinde, Anna~C. Gilbert, Piotr Indyk, Howard Karloff, and Martin~J.
  Strauss.
\newblock Combining geometry and combinatorics: A unified approach to sparse
  signal recovery.
\newblock In {\em Allerton}, 2008.

\bibitem[CRT06]{journals/tit/CandesRT06}
Emmanuel~J. Cand{\`{e}}s, Justin~K. Romberg, and Terence Tao.
\newblock Robust uncertainty principles: Exact signal reconstruction from
  highly incomplete frequency information.
\newblock {\em IEEE Transactions on information theory}, 52(2):489--509, 2006.

\bibitem[CRVW02]{CRVW02}
Michael Capalbo, Omer Reingold, Salil Vadhan, and Avi Wigderson.
\newblock Randomness conductors and constant-degree lossless expanders.
\newblock In {\em STOC}, 2002.

\bibitem[DHK20]{DHK20}
Shaleen Deep, Xiao Hu, and Paraschos Koutris.
\newblock Fast join project query evaluation using matrix multiplication.
\newblock In {\em SIGMOD}, 2020.

\bibitem[DWZ23]{duan2022faster}
Ran Duan, Hongxun Wu, and Renfei Zhou.
\newblock Faster matrix multiplication via asymmetric hashing.
\newblock In {\em FOCS}, 2023.

\bibitem[Fre79]{freivalds1979fast}
R{\=u}si{\c{n}}{\v{s}} Freivalds.
\newblock Fast probabilistic algorithms.
\newblock In {\em MFCS}, 1979.

\bibitem[GI10]{journals/pieee/GilbertI10}
Anna~C. Gilbert and Piotr Indyk.
\newblock Sparse recovery using sparse matrices.
\newblock {\em Proceedings of the IEEE}, 98(6):937--947, 2010.

\bibitem[GJC{\etalchar{+}}23]{GjCHSWLW23}
Jianhua Gao, Weixing Ji, Fangli Chang, Shiyu Han, Bingxin Wei, Zeming Liu, and
  Yizhuo Wang.
\newblock A systematic survey of general sparse matrix-matrix multiplication.
\newblock {\em ACM Computing Surveys}, 55(12):1--36, 2023.

\bibitem[GLL{\etalchar{+}}17]{GLLPT17}
Leszek Gasieniec, Christos Levcopoulos, Andrzej Lingas, Rasmus Pagh, and
  Takeshi Tokuyama.
\newblock Efficiently correcting matrix products.
\newblock {\em Algorithmica}, 79(2):428--443, 2017.

\bibitem[GRS06]{GRS06}
John~R. Gilbert, Steve Reinhardt, and Viral~B. Shah.
\newblock High-performance graph algorithms from parallel sparse matrices.
\newblock In {\em PARA}, 2006.

\bibitem[Gus78]{Gu78}
Fred~G. Gustavson.
\newblock Two fast algorithms for sparse matrices: Multiplication and permuted
  transposition.
\newblock {\em ACM Transactions on Mathematical Software}, 4(3):250–269,
  1978.

\bibitem[GUV09]{GUV09}
Venkatesan Guruswami, Christopher Umans, and Salil Vadhan.
\newblock Unbalanced expanders and randomness extractors from {Parvaresh-Vardy}
  codes.
\newblock {\em Journal of the ACM}, 56(4):1--34, 2009.

\bibitem[Ind08]{conf/soda/Indyk08}
Piotr Indyk.
\newblock Explicit constructions for compressed sensing of sparse signals.
\newblock In {\em SODA}, 2008.

\bibitem[IS09]{journals/ipl/IwenS09}
Mark~A. Iwen and Craig~V. Spencer.
\newblock A note on compressed sensing and the complexity of matrix
  multiplication.
\newblock {\em Information Processing Letters}, 109(10):468--471, 2009.

\bibitem[JS15]{JS15}
Riko Jacob and Morten St{\"o}ckel.
\newblock Fast output-sensitive matrix multiplication.
\newblock In {\em ESA}, 2015.

\bibitem[KSV06]{KSV06}
Haim Kaplan, Micha Sharir, and Elad Verbin.
\newblock Colored intersection searching via sparse rectangular matrix
  multiplication.
\newblock In {\em SoCG}, 2006.

\bibitem[K{\"u}n18]{kunnemann2018nondeterministic}
Marvin K{\"u}nnemann.
\newblock On nondeterministic derandomization of {Freivalds'} algorithm:
  Consequences, avenues and algorithmic progress.
\newblock In {\em ESA}, 2018.

\bibitem[Kut13]{kutzkov13}
Konstantin Kutzkov.
\newblock {Deterministic algorithms for skewed matrix products}.
\newblock In {\em STACS}, 2013.

\bibitem[{Le }12]{le2012faster}
Fran{\c{c}}ois {Le Gall}.
\newblock Faster algorithms for rectangular matrix multiplication.
\newblock In {\em FOCS}, 2012.

\bibitem[{Le }24]{gall2023faster}
Fran{\c{c}}ois {Le Gall}.
\newblock Faster rectangular matrix multiplication by combination loss
  analysis.
\newblock In {\em SODA}, 2024.

\bibitem[Lin09]{L09}
Andrzej Lingas.
\newblock A fast output-sensitive algorithm for boolean matrix multiplication.
\newblock In {\em ESA}, 2009.

\bibitem[LR83]{lotti1983asymptotic}
Grazia Lotti and Francesco Romani.
\newblock On the asymptotic complexity of rectangular matrix multiplication.
\newblock {\em Theoretical Computer Science}, 23(2):171--185, 1983.

\bibitem[LU18]{gall2018improved}
Fran{\c{c}}ois {Le Gall} and Florent Urrutia.
\newblock Improved rectangular matrix multiplication using powers of the
  {Coppersmith-Winograd} tensor.
\newblock In {\em SODA}, 2018.

\bibitem[Pag12]{P12}
Rasmus Pagh.
\newblock Compressed matrix multiplication.
\newblock In {\em ITCS}, 2012.

\bibitem[PS14]{PS14}
Rasmus Pagh and Morten St{\"o}ckel.
\newblock The input/output complexity of sparse matrix multiplication.
\newblock In {\em ESA}, 2014.

\bibitem[PV05]{PV05}
Farzad Parvaresh and Alexander Vardy.
\newblock Correcting errors beyond the {Guruswami-Sudan} radius in polynomial
  time.
\newblock In {\em FOCS}, 2005.

\bibitem[Roc18]{roche18}
Daniel~S. Roche.
\newblock What can (and can't) we do with sparse polynomials?
\newblock In {\em ISSAC}, 2018.

\bibitem[SHAB20]{SHAB20}
Oguz Selvitopi, Md~Taufique Hussain, Ariful Azad, and Ayd{\i}n Bulu{\c{c}}.
\newblock Optimizing high performance {M}arkov clustering for pre-exascale
  architectures.
\newblock In {\em IPDPS}, 2020.

\bibitem[Sho90]{shoup1990new}
Victor Shoup.
\newblock New algorithms for finding irreducible polynomials over finite
  fields.
\newblock {\em Mathematics of computation}, 54(189):435--447, 1990.

\bibitem[Sho09]{shoup2009computational}
Victor Shoup.
\newblock {\em A computational introduction to number theory and algebra}.
\newblock Cambridge University Press, 2009.

\bibitem[TSUZ01]{TUZ01}
Amnon Ta-Shma, Christopher Umans, and David Zuckerman.
\newblock Loss-less condensers, unbalanced expanders, and extractors.
\newblock In {\em STOC}, 2001.

\bibitem[VWWZ15]{GWWZ15}
Dirk {Van Gucht}, Ryan Williams, David~P. Woodruff, and Qin Zhang.
\newblock The communication complexity of distributed set-joins with
  applications to matrix multiplication.
\newblock In {\em PODS}, 2015.

\bibitem[VXXZ24]{vassilevska-williams2024new}
Virginia {Vassilevska Williams}, Yinzhan Xu, Zixuan Xu, and Renfei Zhou.
\newblock New bounds for matrix multiplication: from alpha to omega.
\newblock In {\em SODA}, 2024.

\bibitem[vzGG13]{modernalgebra}
Joachim von~zur Gathen and Jürgen Gerhard.
\newblock {\em Modern Computer Algebra}.
\newblock Cambridge University Press, 3 edition, 2013.

\bibitem[YZ05]{yuster2005fast}
Raphael Yuster and Uri Zwick.
\newblock Fast sparse matrix multiplication.
\newblock {\em ACM Transactions On Algorithms}, 1(1):2--13, 2005.

\bibitem[ZWWW11]{ZWWW11}
Yudong Zhang, Lenan Wu, Geng Wei, and Shuihua Wang.
\newblock A novel algorithm for all pairs shortest path problem based on matrix
  multiplication and pulse coupled neural network.
\newblock {\em Digital Signal Processing}, 21(4):517--521, 2011.

\end{thebibliography}
	
	\newpage
	\appendix
	
	\section{Efficient, Deterministic Compressed Sensing over Rings}\label{sec:compressedSensing}
	
	In this section, we present an efficient compressed sensing scheme over an arbitrary ring. Specifically, we show that the compressed sensing scheme from~\cite{BGIKS08}, when instantiated with the expander construction from~\cite{GUV09}, runs efficiently over any ring.
	In~\cref{sec:expanders}, we analyze the time complexity of the expander construction from~\cite{GUV09}. In~\cref{sec:compressed}, we show how to instantiate the compressed scheme from~\cite{BGIKS08} with the constructed expander.
	
	\subsection{Preliminaries}
	We use $\F_q$ to denote a finite field of order $q$.
	Such fields exist for all prime powers $q$.
	Arithmetic operations over $\F_q$, including computing multiplicative inverses, can be performed in time $\poly(\log{q})$ (see, e.g., the excellent textbooks~\cite{shoup2009computational,modernalgebra}).
	
	We will use the following result about deterministically computing field extensions. 
	
	\begin{theorem}[{\cite[Theorem 4.1]{shoup1990new}}] \label{thm:fieldExtension}
		Let $p$ be a prime, let $a, n \in \Z^+$, and let $q := p^a$.
		There is a deterministic algorithm that, given $\F_q$ as input, constructs an irreducible polynomial over $\F_q$ of degree~$n$ in time
		\[
		O(p^{1/2} (\log p)^3 n^{3 + \eps} + (\log p)^2 n^{4 + \eps} + (\log p) n^{4 + \eps} a^{2 + \eps})
		\]
		for any constant $\eps > 0$.
	\end{theorem}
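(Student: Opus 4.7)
The plan is to follow Shoup's classical approach, which reduces the deterministic construction of a degree-$n$ irreducible polynomial over $\F_q$ to three more tractable subproblems assembled via the structure of finite-field extensions.

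First, I would reduce to the prime-power case. If $n = \prod_i \ell_i^{e_i}$ is the prime factorization of $n$, then $\F_{q^n}$ is the compositum of the fields $\F_{q^{\ell_i^{e_i}}}$, so it suffices to construct an irreducible polynomial of each degree $\ell_i^{e_i}$ over $\F_q$ and then combine these via a standard compositum argument (multiplying minimal polynomials of sums of roots, or using resultant-based compositum formulas). Factoring $n$ by trial division and orchestrating this reduction contributes only $\poly(\log q, n)$ overhead.

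Second, for each prime power $\ell^e \mid n$, I would use a cyclotomic construction. Search for an auxiliary prime $r$ of bounded size such that $\ell^e$ divides $\mathrm{ord}_r(q)$; by number-theoretic density estimates on primes in arithmetic progressions, such an $r$ can be found within a polynomial range by brute-force enumeration. The $r$th cyclotomic polynomial $\Phi_r$ factors over $\F_q$ into irreducibles each of degree $\mathrm{ord}_r(q)$, and a degree-$\ell^e$ subfield of the splitting field can be cut out by computing the minimal polynomial of a suitable trace of a primitive $r$th root of unity. The deterministic search for elements of specified multiplicative order inside $(\Z/r\Z)^*$ (and its analogues), performed via baby-step giant-step, is precisely where the $p^{1/2}$ factor enters the running time.

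Third, I would bound the cost of the underlying polynomial arithmetic: computing $\Phi_r \bmod \F_q$, taking gcds, performing distinct-degree factorization, and computing traces and minimal polynomials, all using standard FFT-based routines over $\F_q$. Combined with the conversion cost for operating over an $a$-degree extension of $\F_p$, these contribute the $(\log p)^2 n^{4+\eps}$ and $(\log p) n^{4+\eps} a^{2+\eps}$ terms. The main obstacle is the $p^{1/2}$ factor in Stage~2: without assumptions such as GRH (which would allow one to bound smallest quadratic non-residues and related objects by $\polylog(p)$), baby-step giant-step is essentially the best known deterministic method for extracting elements of prescribed order, and careful bookkeeping is required to confirm that no other phase of the algorithm incurs a worse dependence on $p$. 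As the statement is a verbatim restatement of~\cite[Theorem~4.1]{shoup1990new}, the full proof is deferred to that reference.
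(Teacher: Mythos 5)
This statement is imported verbatim from Shoup and the paper gives no proof of it, so your choice to defer to \cite[Theorem~4.1]{shoup1990new} exactly matches the paper's treatment, and your outline of Shoup's method (reduction to prime-power degrees via composita, cyclotomic constructions for each prime power, deterministic search accounting for the $p^{1/2}$ term) is a fair account of the reference. One small correction: the $p^{1/2}$ factor does not come from searching inside $(\Z/r\Z)^*$ --- the auxiliary prime $r$ is polynomially bounded, so that search is cheap --- but from the deterministic factorization of the relevant polynomials over $\F_p$ itself (equivalently, finding $\ell$-th power non-residues in $\F_p$ and performing deterministic root-finding), for which $O(p^{1/2})\cdot\poly(\log p)$ is the best known unconditional bound.
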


	\subsection{Expanders and Sparse Recovery}\label{sec:expanders}
	We work with undirected bipartite graphs $G=(L \sqcup R, E)$, where $L$ and $R$ denote the two disjoint sets of vertices, and every edge connects a vertex in~$L$ to a vertex in~$R$. We say that $G$ is \emph{$d$-left-regular} if every vertex $v\in L$ has exactly $d$ incident edges. For a subset of vertices $S\subseteq L$, we use $\cN(S)$ to denote the neighborhood of~$S$, i.e., the set of vertices in~$R$ adjacent to at least one vertex in~$S$.
	By the adjacency matrix of~$G$ we mean the matrix $A\in\{0,1\}^{|R| \times |L|}$, where $A[i,j]=1$ if and only if $G$ contains an edge from the $i$th vertex in~$R$ to the $j$th vertex in~$L$. We always represent such a graph~$G$ using sparse representation of its adjacency matrix, i.e., adjacency lists of all left vertices of~$G$. 
	\begin{definition}
		Let $\eps\in[0,1)$ and $K\in\Z^+$. A \emph{$(K, \eps)$-unbalanced expander} is a $d$-left-regular bipartite graph $G = (L\sqcup R, E)$ such that for every $S \subseteq L$ with $\card{S} \leq K$, $\card{\cN(S)} \geq (1- \eps)d\card{S}$.
	\end{definition}
	
	When constructing unbalanced expanders, the goal is to minimize $M$, $d$, and $\eps$, while simultaneously maximizing~$K$. Using the probabilistic method, it is easy to show that for every~$1\leq K \leq  N/2$ and $\eps>0$, a random $d$-left-regular graph is a $(K, \eps)$-expander with $d=O(\log{(N/K)}/\eps)$ and $M=O(Kd/\eps)$.

	However, for our purposes, we require efficient \emph{deterministic} algorithms for constructing expander graphs. This topic has been extensively studied~\cite{TUZ01,CRVW02,GUV09}, with a focus on optimizing expansion properties while maintaining polynomial-time complexity. For our applications, however, it is necessary to construct expanders in deterministic time $\Ot(n^2)$.
	
	To this end, we note that the explicit construction of an unbalanced expander in \cite{GUV09} is in fact extremely efficient. We state their algorithm formally in \cref{alg:expander_construction} and analyze its complexity below in \cref{thm:expander}. Correctness of the construction follows from the list-decoding properties of Parvaresh-Vardy codes \cite{PV05}, and we refer the reader to the original paper for a proof of correctness~\cite[Theorem~3.5]{GUV09}.
	
	\begin{theorem}[\cite{GUV09}]\label{thm:expander}
		Let $\alpha > 0$ and $\eps \in (0, 1)$ be constants. Then for all sufficiently large $N$ and all $K\leq N$, \cref{alg:expander_construction} outputs a $(K, \eps)$-unbalanced expander $G=(L\sqcup R, E)$ with $\card{L}= N$, $\card{R} = O(d^2\cdot K^{1 + \alpha})$, and left degree
		\[d = O\left( \left( \dfrac{\log N \cdot \log K}{\eps} \right)^{1 + 1/\alpha} \right)\;.\]
		The running time of the algorithm is $\Ot(N)$.\footnote{While the stated upper bound on $|R|$ might be larger than $N\cdot\poly(\log{N})$ when $K$ is close to~$N$, $G$ always has $Nd=\Ot(N)$ edges, and therefore the algorithm outputs a sparse representation of the adjacency matrix of $G$ in time~$\Ot(N)$.}
	\end{theorem}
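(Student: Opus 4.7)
The plan is to analyze the running time of the Parvaresh-Vardy based construction of \cite{GUV09}; correctness of the expansion property is borrowed from \cite[Theorem 3.5]{GUV09} and I take it for granted. Recall that the construction fixes a prime power $q$, an integer $n$ with $q^n \geq N$, and an irreducible polynomial $E \in \F_q[X]$ of degree~$n$, then identifies each left vertex with a polynomial $f \in \F_q[X]$ of degree below~$n$. For appropriate parameters $m$ and $h$, the $d$ neighbors of $f$ are labeled by the tuples $\bigl(y, f(y), g_1(y), \ldots, g_{m-1}(y)\bigr)$ with $y$ ranging over $\F_q$ and $g_i(X) := f(X)^{h^i} \bmod E(X)$. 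One picks the parameters so that \cite[Theorem 3.5]{GUV09} gives the claimed bounds on $d$ and $\card{R}$ and, crucially, so that $q, h = \poly(\log N)$, $n = O(\log N)$, and the products $mn$ and $m \log h$ are also $\poly(\log N)$ in every regime of $K$.

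My first step is to construct $\F_q$ and the polynomial $E$ deterministically. Writing $q = p^a$ with $p$ prime and $p, a = \poly(\log N)$, I invoke \cref{thm:fieldExtension} once to build an irreducible polynomial of degree $a$ over $\F_p$ (yielding $\F_q$) and once more to build $E$ of degree $n$ over $\F_q$; in each case Shoup's bound gives $\poly(\log N)$ total time. My main step is then to list all neighbors of a single left vertex $f$ in $\poly(\log N)$ time. I precompute $g_1, \ldots, g_{m-1}$ by repeated squaring in the ring $\F_q[X]/(E)$: one multiplication in this ring costs $\poly(n, \log q) = \poly(\log N)$ operations over $\F_q$, each exponentiation by $h^i$ uses $O(i \log h)$ such multiplications, and summing over $i$ bounds the total precomputation by $m^2 \log h \cdot \poly(\log N) = \poly(\log N)$. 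Then for each $y \in \F_q$ I evaluate $f(y), g_1(y), \ldots, g_{m-1}(y)$ by Horner's rule in $O(mn)$ field operations, giving $O(q m n) = \poly(\log N)$ total per vertex. Summing over the $N$ left vertices yields total running time $N \cdot \poly(\log N) = \Ot(N)$, which matches the claim and is in any case tight against the output size $Nd = \Ot(N)$.

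The main obstacle is essentially bookkeeping: checking that the \cite{GUV09} parameter choices of $q, n, m, h$ simultaneously deliver the promised expansion, match the stated bounds on $d$ and $\card{R}$, and keep $q, h$ and the products $mn$ and $m \log h$ polylogarithmic so the per-vertex work fits in $\poly(\log N)$. Once that balance is struck, every remaining ingredient---finite field arithmetic, polynomial reduction modulo $E$, Horner evaluation, and writing out the sparse adjacency-list representation---is a standard $\poly(\log N)$-time operation, and the conclusion follows by multiplying through.
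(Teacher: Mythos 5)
Your proposal is correct and takes essentially the same route as the paper: the expansion property (and the choice of $q,h,m,n$) is delegated to \cite{GUV09}, and the real content is the $\Ot(N)$ running-time analysis via Shoup's deterministic irreducible-polynomial construction, repeated squaring modulo the degree-$n$ irreducible polynomial, and evaluation at all $q$ points of $\F_q$ for each of the $N$ left vertices, exactly as in the paper. The ``bookkeeping'' you defer is precisely what the paper writes out and it goes through as you expect: $d = q = \Theta(h^{1+\alpha}) = \Theta\bigl((\log N \cdot \log K/\eps)^{1+1/\alpha}\bigr)$ and $\card{R} = q^{m+1} \leq q^2 (h^{1+\alpha})^{k/\log h} = q^2 K^{1+\alpha}$.
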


	\begin{algorithm}[th]
		\KwIn{$K,N\in\Z^+$, $K\leq N$, $\alpha>0$, $\eps \in (0,1)$.}
		\KwOut{A $(K, \eps)$-unbalanced expander $G = (L\sqcup R, E)$.}%
		\nonl~\\
		
		Define $n := \ceil{\log N}$, $k := \log K$.\\
		Define $h := \ceil{(2nk/\eps)^{1/\alpha}}$ and $m := \ceil{k / \log h}$.\\
		Define $q := 2^{\floor{\log(h^{1 + \alpha})}}$.\\
		Compute a degree-$n$ irreducible polynomial $p$ over $\F_q$ using \cref{thm:fieldExtension}.\\
		Initialize $L=[N], \; R=\F_q^{m+1}, \; E = \emptyset$.\\
		\For{$\ell=1, \ldots, N$}{
			Set $t$ to be the $\ell$th polynomial over $\F_q$ of degree $\leq(n-1)$. \DontPrintSemicolon \tcp*{\footnotesize{Under arbitrary ordering.}}
			\For{$i=0, \ldots, m-1$}{
				Compute $t_i := t^{h^i} \bmod p$. \DontPrintSemicolon  \tcp*{\footnotesize{Compute degree-$(n-1)$ polynomial over $\F_q$.}}
				
			}
			\For{every element $y \in \F_q$}{
				\For{$j=0, \ldots, m-1$}{
					Compute $e_j := t_i(y)$. \DontPrintSemicolon \tcp*{\footnotesize{Evaluate degree-$(n-1)$ polynomial at $y \in \F_q$.}}
				}
				Define $e:= (\ell, (y, e_0, \ldots, e_{m-1}))\in [N]\times\F_q^{m+1}$. \\
				$E = E \union \{e\}$.
			}
		}
		\KwRet Sparse representation of~$G=(L\sqcup R, E)$.
		\caption{Deterministic Construction of Expanders via Parvaresh-Vardy Codes.}
		\label{alg:expander_construction}
	\end{algorithm}

	\begin{proof}
		The algorithm creates $N$ left vertices, where each vertex corresponds to a univariate polynomial over $\F_q$ of degree at most $n-1$. This is possible because for all large enough $N$, $q\geq 2$, and, thus, the number of such polynomials is %
		$q^n\geq 2^n \geq N$.
		The right vertices correspond to vectors in $\F_q^{m+1}$. Each left vertex is connected to (exactly) $q$ right vertices. Therefore, the number of left vertices is $\card{L}=N$, the degree of each left vertex is 
		\[
		d:= q = \Theta(h^{1+\alpha}) = \Theta\left( \left( \dfrac{\log N \cdot \log K}{\eps} \right)^{1 + 1/\alpha} \right) \;,
		\]
		and the number of right vertices is 
		\[
		\card{R}=q^{m+1} 
		= q^2\cdot q^{m-1}
		\leq q^2\cdot (h^{1+\alpha})^{k/\log{h}} 
		=q^2\cdot K^{1+\alpha}
		\;.
		\]
		
		We now turn to analyzing the running time of the algorithm.
		By \cref{thm:fieldExtension}, we can compute an irreducible polynomial of degree~$n$ over $\F_q$ in deterministic time $q^{1/2} \cdot\poly(n, \log q)$. Each arithmetic operation with univariate degree-$n$ polynomials over $\F_q$ can be performed in time $\poly(n, \log{q})$. 
		
		Next, for each of the~$N$ iterations of the outermost for loop, the algorithm performs the following operations. It computes each of the $m$ polynomials $t_i$ using the standard repeated squaring technique in time $\log(h^i) \cdot \poly(n, \log{q}) \leq m \log h \cdot \poly(n, \log{q})$. Then the algorithm evaluates each~$t_i$ at all points of $\F_q$ in time $q\cdot\poly(n, \log{q})$. Therefore, \cref{alg:expander_construction} lists all edges of~$G$ in time $N \cdot \poly(m, q, k, h, n)=N \cdot \poly(\log{K}, \log{N}) = \Ot(N)$.
	\end{proof}
	
	Although the explicit construction in~\cref{thm:expander} has a slight loss in parameters compared to the previously mentioned randomized construction, its expansion properties still suffice for all of our applications. We will use the following corollary as a building block of the compressed sensing scheme in~\cref{sec:compressed}.

	\begin{corollary}\label{cor:expander}
		Let $\alpha > 0$ be a constant. There is a deterministic algorithm that for all sufficiently large $n \in \Z^+$ and any $t \in \Z^+, t \leq n$, outputs the adjacency  matrix $A\in\{0,1\}^{m\times n}$ of a $(t, 1/12)$-unbalanced expander with $m=t^{1+\alpha}\cdot\poly(\log{n})$. The running time of the algorithm is $\Ot(n)$, and $A$ is $d$-column-sparse for $d=\poly(\log{n})$.
	\end{corollary}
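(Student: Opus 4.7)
The plan is to derive this as a direct specialization of \cref{thm:expander}, invoked with parameters $N := n$, $K := t$, and $\eps := 1/12$. This produces a $(t, 1/12)$-unbalanced expander $G = (L \sqcup R, E)$ with $|L| = n$, output in sparse representation. I would then set $m := |R|$ and take $A \in \{0,1\}^{m \times n}$ to be the adjacency matrix of $G$ (which is exactly the object \cref{alg:expander_construction} produces).

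Next I would verify that the parameter bounds from \cref{thm:expander} specialize to the ones claimed in the corollary. The left-degree bound gives
\[
d = O\Bigl(\bigl(\log n \cdot \log t / \eps \bigr)^{1 + 1/\alpha}\Bigr) = O\bigl((\log n)^{2(1 + 1/\alpha)}\bigr) = \poly(\log n),
\]
since $\eps = 1/12$ is a constant and $\log t \leq \log n$. The right-vertex count then becomes $|R| = O(d^2 \cdot t^{1+\alpha}) = t^{1+\alpha} \cdot \poly(\log n)$, matching the claimed form of $m$.

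For the column sparsity, I would use $d$-left-regularity: every vertex in $L$ has exactly $d$ incident edges, so each column of $A$ (indexed by $L$) contains exactly $d = \poly(\log n)$ ones. Hence $A$ is $d$-column-sparse as required. The running time bound $\Ot(n)$ is inherited verbatim from \cref{thm:expander}, and the sparse representation has $nd = \Ot(n)$ edges, consistent with the claimed time.

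There is no real obstacle here; the corollary is essentially a rebranding of \cref{thm:expander} with $\eps$ fixed to a universal constant and the parameter bounds simplified. The only thing to be mildly careful about is ensuring that the $d^2$ factor in the bound on $|R|$ is absorbed into the $\poly(\log n)$ factor, which follows immediately once we observe $d = \poly(\log n)$.
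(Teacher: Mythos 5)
Your proposal is correct and matches the paper's own proof essentially verbatim: both invoke \cref{thm:expander} with $N=n$, $K=t$, $\eps=1/12$, absorb the $d^2$ factor into $\poly(\log n)$ to get $m = t^{1+\alpha}\cdot\poly(\log n)$, use $d$-left-regularity for column sparsity, and inherit the $\Ot(n)$ running time. No issues.
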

	
	\begin{proof}
		We apply \cref{thm:expander}  with $\eps = 1/12$, $N=n$, and $K=t$. From \cref{thm:expander}, we have that 
		\begin{align*}
			d &=  O\left( \left( \dfrac{\log N \cdot \log K}{\eps} \right)^{1 + 1/\alpha} \right)=\poly(\log(n)) \;,\\
			m &= O(d^2 K^{1+\alpha}) = t^{1+\alpha}\cdot\poly(\log{n}) \;.
		\end{align*}
		Since the sparsity of each column of~$A$ equals the left degree of the expander, we have that each column of $A$ is $d$-sparse.
		Finally, from~\cref{thm:expander}, the running time of the algorithm is~$\Ot(n)$.
	\end{proof}
	
	\subsection{An Efficient Compressed Sensing Scheme over any Ring}\label{sec:compressed}
	We use row tensor Hadamard products of matrices. First, we recall the definition of the Hadamard product, which is simply the coordinate-wise product of two vectors.
	Specifically, the \emph{Hadamard product} of two vectors $\vec{a} = (a_1, \ldots, a_n)$ and $\vec{b} = (b_1, \ldots, b_n)$ in $R^n$ is 
	\[\vec{a} \odot \vec{b}:=(a_1 \cdot b_1, \ldots , a_n \cdot b_n)\in R^n\;.\]
	The row tensor Hadamard product of matrices $A$ and $B$ is the matrix whose rows are the Hadamard products of all pairs of rows from $A$ and~$B$.

	\begin{definition}[Row tensor Hadamard product] The \emph{row tensor Hadamard product} of two matrices $A=(\vec{a}_1,\ldots,\vec{a}_{m_1})^T \in R^{m_1 \times n}$ and $B=(\vec{b}_1,\ldots,\vec{b}_{m_2})^T \in R^{m_2 \times n}$ is the matrix $C = (\vec{c}_1,\ldots,\vec{c}_{m_1m_2})^T:= A \otimes_r B\in R^{(m_1 m_2)\times n}$ defined as follows. For every $i\in[m_1]$ and $j\in[m_2]$,
		\[
		\vec{c}_{(i-1)m_2+j}=\vec{a}_i \odot \vec{b}_j \;.
		\]
	\end{definition}
	
	\begin{algorithm}[!ht]
		\KwIn{$1^n$ and $1^t$, where $n=2^\ell-1$ for large enough $\ell\in\Z^+$, and $t\leq n$.}
		\KwOut{A measurement matrix~$H\in\{0,1\}^{m\times n}$ of an $m$-compressed sensing scheme with $m(n,t)=t^{1+\alpha}\cdot\poly(\log{n})$.}
		\nonl~\\
		Compute $B=(\vec{b}_1,\ldots,\vec{b}_{n})\in\{0,1\}^{\log(n+1)\times n}$, where $\vec{b}_i$ is the binary expansion of $i$. \\
		
		Compute the adjacency matrix $A\in\{0,1\}^{m'\times n}$ of a $(t,1/12)$-unbalanced expander for $m'=t^{1+\alpha}\cdot\poly(\log{n})$ 
		using \cref{cor:expander}.\\
		Compute the measurement matrix $H\in\{0,1\}^{(m'\log(n+1))\times n}$ as $H := A \otimes_r B$. \\
		\KwRet Sparse representation of~$H$.
		\caption{Computing the measurement matrix of the compressed sensing scheme.}
		\label{alg:algM}
	\end{algorithm}

	In this section, we present an efficient $m$-compressed sensing scheme $(\cM, \cR)$ for $m(n,t)=t^{1+\alpha}\cdot\poly(\log{n})$ for an arbitrarily small constant $\alpha>0$. We assume that $n$ is such that $n+1$ is a power of two. The algorithm $\cM$ outputting the measurement matrix~$H\in\{0,1\}^{m\times n}$ is presented in \cref{alg:algM}. The algorithm proceeds as follows. Let $A\in\{0,1\}^{m'\times n}$ be the adjacency matrix of a $(t, 12)$-unbalanced expander. Let $B\in\{0,1\}^{\log(n+1)\times n}$ be the matrix where the $i$th column of $B$ corresponds to the binary expansion of $i$ for $i \in [n]$. Then the measurement matrix is just $H:= A \otimes_r B$.
	
	First, in \cref{lem:reduce} we present a subroutine (\cref{alg:indyk_reduce}) that takes a measurement $H\vec{x}$ for a $t$-sparse vector $\vec{x}\in R^n$ and outputs an ``approximation'' of $\vec{x}$---a vector~$\vec{y}\in R^n$ such that $\norm{\vec{x}-\vec{y}}_0\leq \norm{\vec{x}}_0/2$.

	\begin{algorithm}[!ht]
		\KwIn{Measurement matrix $H:=\cM(1^n,1^t) \in \bit^{m \times n}$ where $t\leq n$ and $n=2^\ell-1$ for $\ell\in\Z^+$, degree parameter~$d$, and measurement $\vec{z}=(z_1, \ldots, z_m):=H\vec{x}$.}
		\KwOut{$\vec{y} \in R^n$ with $\norm{\vec{x}-\vec{y}}_0 \leq \norm{\vec{x}}_0/2$, if $\norm{\vec{x}}_0 \leq t$.}
		\nonl~\\
		$m' = m / \log(n+1)$. \DontPrintSemicolon \tcp*{\footnotesize{$A\in R^{m'\times n}$ is adjacency matrix of expander, and $H := A \otimes_r B$.}}
		$\textsc{Candidates} = [\quad]$\DontPrintSemicolon \tcp*{\footnotesize{List of potential non-zero (coordinate, value) pairs of $\vec{y}$.}}
		Let $\vec{z}:=(\vec{s}^1,\ldots,\vec{s}^{m'})$, where $\vec{s}^i\in R^{\log(n+1)}$ for $i\in[m']$.\\
		\For{$i = 1, \ldots, m'$}{
			\If{$\{\vec{s}^i_1,\ldots,\vec{s}^i_{\log(n+1)}\}\setminus\{0\} =\{v\}$ for some $v\in R$ }{
				\tcp*{\footnotesize{If all non-zero values are the same (and there is one).}}
				$j = \sum_{k \colon s_k=v} 2^{\log(n+1)-k}$. \DontPrintSemicolon \tcp*{\footnotesize{Index $j$ is computed from indices $k$ s.t. $s_k=v$.}}
				Add $(j, v)$ to $\textsc{Candidates}$.
			}
		}
		Initialize $\vec{y}= \vec{0}$.\\

		\ForEach{$(j, v)$ such that \textsc{Candidates} contains $>d/2$ copies of $(j,v)$}{
			
			Set $y_{j} = v$.\\
		}
		\KwRet $\vec{y}$.
		\caption{Reduce algorithm.}
		\label{alg:indyk_reduce}
	\end{algorithm}

	\begin{lemma}[{\cite{BGIKS08}}]\label{lem:reduce}
		Let $n=2^\ell-1$ for $\ell\in\Z^+$, $t\leq n$, $H:=\cM(1^n, 1^t)$ be the matrix output by \cref{alg:algM}, 
		and $\vec{x} \in R^n$ be a $t$-sparse vector. Given $H\vec{x} \in R^m$, \cref{alg:indyk_reduce} outputs a vector $\vec{y} \in R^n$, such that $\norm{\vec{x}-\vec{y}}_0 \leq \norm{\vec{x}}_0/2$. The algorithm performs $m\cdot \poly(\log{n})$ arithmetic operations over~$R$.
	\end{lemma}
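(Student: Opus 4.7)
The plan is to exploit the bipartite expander structure of $A$ and the bit-indexing structure of $B$ so that, for each right vertex $i$ whose only $S$-neighbor is some $j \in S := \mathrm{supp}(\vec{x})$, the $i$th block $\vec{s}^i$ directly encodes the pair $(j, x_j)$. First I would observe that $(\vec{s}^i)_k = \sum_{j' \in \cN_A(i)} B_{k,j'}\, x_{j'}$, so when $\cN_A(i) \cap S = \{j\}$, the nonzero entries of $\vec{s}^i$ are all equal to $x_j$ and their positions $k$ are exactly the positions of the $1$ bits in the binary expansion of $j$. This is precisely what \cref{alg:indyk_reduce} detects to append $(j, x_j)$ to \textsc{Candidates}. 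Call such an $i$ a \emph{good} right vertex for $j$, and call $i$ with $\card{\cN_A(i) \cap S} \geq 2$ \emph{bad}; right vertices not in $\cN_A(S)$ contribute nothing.

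Next I would apply the standard expander-collision bound to the set $T$ of bad right vertices. Writing $d_i = \card{\cN_A(i)\cap S}$, the identities $\sum_i d_i = d\card{S}$ and $\card{\cN_A(S)} = \#\{i : d_i = 1\} + \card{T}$ together with $\card{\cN_A(S)} \geq (1-\eps)d\card{S}$ (for $\eps = 1/12$) yield $\card{T} \leq \eps d\card{S}$ and $\sum_{i \in T} d_i \leq 2\eps d\card{S}$. Defining $B_j$ as the number of bad neighbors of $j \in S$, we get $\sum_{j \in S} B_j \leq 2\eps d\card{S}$, and hence at most $4\eps\card{S}$ indices $j \in S$ satisfy $B_j \geq d/2$.

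The heart of the argument is then a two-part error accounting. For each $j \in S$ with $B_j < d/2$, strictly more than $d/2$ good right vertices contribute the candidate $(j, x_j)$, which therefore clears the threshold and forces $y_j = x_j$; no competing pair $(j, v')$ with $v' \neq x_j$ can reach $>d/2$ copies because every such vote must come from one of the $B_j < d/2$ bad neighbors of $j$. This controls in-support errors by $4\eps\card{S}$. For $j \notin S$, any vote for $(j, v)$ must be contributed by some $i \in T$, and each $i \in T$ contributes at most one candidate, so the number of distinct $j \notin S$ with $y_j \neq 0$ is strictly less than $\card{T}/(d/2) \leq 2\eps\card{S}$. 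Summing the two error types gives $\norm{\vec{x}-\vec{y}}_0 \leq 6\eps\card{S} = \card{S}/2$, as required.

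For the running time, each of the $m' = m/\log(n+1)$ blocks is processed using $O(\log n)$ ring comparisons and $O(\log n)$ additions to reconstruct the binary index, producing at most $m'$ candidate pairs; sorting and counting multiplicities uses $O(m' \log m')$ comparisons on strings of length $O(\log n)$, and writing $\vec{y}$ is dominated by this. The total is $m \cdot \poly(\log n)$ operations over $R$. The main subtlety will be simultaneously controlling the in-support and out-of-support errors via the single expander bound; once the constant $\eps = 1/12$ is chosen, the accounting is tight but the bookkeeping (especially the fact that no bad pair can outvote the correct pair at a given $j \in S$) must be stated carefully.
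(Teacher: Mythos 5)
Your overall route is essentially the paper's: isolating (good) measurements recover $(j, x_j)$ via the bit-pattern structure of $B$, the expander property bounds the number of colliding (bad) right vertices, and a $d/2$ voting threshold yields a two-type error count of $2\eps\norm{\vec{x}}_0 + 4\eps\norm{\vec{x}}_0 = 6\eps\norm{\vec{x}}_0 = \norm{\vec{x}}_0/2$ for $\eps = 1/12$, with the same $m\cdot\poly(\log n)$ operation count.

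However, one intermediate claim is false: you assert that a competing vote $(j,v')$ with $v'\neq x_j$ ``must come from one of the $B_j < d/2$ bad neighbors of $j$.'' A bad right vertex $i$ need not be adjacent to $j$ to cast a vote for index $j$: its block satisfies $\vec{s}^i_k=\sum_{j'\in\cN(i)\cap S}B[k,j']\,x_{j'}$, and when $i$ has two or more neighbors in $S$ the resulting nonzero pattern can coincide with the binary expansion of an arbitrary index with a common value (for instance, over $\Z$, neighbors $4$ and $2$ carrying equal values produce a candidate for index $6$; over general rings cancellations make even stranger patterns possible). So for $j\in S$ with $B_j<d/2$ you are guaranteed only that $(j,x_j)$ clears the threshold, hence $y_j\neq 0$, not that $y_j=x_j$: a spurious pair $(j,v')$ fed by bad vertices elsewhere may also clear the threshold and overwrite it. The final bound survives once the accounting is redone as in the paper's bound on $c_1$: every index receiving a wrong nonzero value (whether inside or outside the support) needs more than $d/2$ votes for a single wrong pair, all such votes come from bad vertices, each bad vertex casts at most one vote, and the number of bad vertices is at most $\eps d\norm{\vec{x}}_0$, so at most $2\eps\norm{\vec{x}}_0$ indices in total (not just outside the support) get a wrong nonzero value. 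Combined with your $4\eps\norm{\vec{x}}_0$ bound on indices $j\in S$ with $B_j\geq d/2$ — which is the only way $y_j$ can remain zero on the support — this gives the claimed $\norm{\vec{x}-\vec{y}}_0\leq \norm{\vec{x}}_0/2$.
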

	\begin{proof}
		Let $A\in R^{m' \times n}$ be the adjacency matrix of a $(t, \eps=1/12)$-unbalanced expander with left-degree~$d$. Consider a measurement $\vec{z}=(z_1,\ldots,z_m) := H\vec{x}$ of a $t$-sparse vector $\vec{x} = (x_1, \ldots , x_n)$.  Let $X := \textrm{supp}(x)\subseteq[n]$ be the set of (indices of) non-zero coordinates of $\vec{x}$. Similarly, let $\textrm{supp}(A[i])$ be the set of non-zero coordinates in the $i$th row of~$A$. We call $i \in [m']$ an \emph{isolating} measurement for~$j\in X$ if $X \cap \text{supp}(A[i]) = \{j\}$. If~$i$ is an isolating measurement for~$j$, then the $i$th coordinate of the product $(A\vec{x})_i=x_j$ recovers the non-zero entry~$x_j$ of~$\vec{x}$. The key observation of this proof is that when $A$ is the adjacency matrix of an expander, then the set of measurements $A\vec{x}$ isolates most non-zero entries of $\vec{x}$.

		Let us first consider the case where $i$ is an isolating measurement for~$j$. Note that this implies that $x_j \neq 0$. Recall that the measurement matrix $H = A \otimes_r B$, and we will focus on the block of rows of~$H$ with indices from $(i-1)\log(n+1)+1$ to $i\log(n+1)$, which corresponds to our augmentation of the matrix $B$ on row~$i$ of~$A$. Specifically, for $k\in[\log(n+1)]$, let $s_k$ be the inner product of the $(i-1)\log(n+1)+k$th row of~$H$ with~$\vec{x}$, \[s_k = z_{(i-1)\log(n+1)+k}:=(H\vec{x})_{(i-1)\log(n+1)+k}\;.\]

		Since $i$ isolates~$j$, we have that $s_k=x_j$ whenever $B[k,j]=1$, and $s_k=0$ otherwise. Moreover, since the $j$th column of $B$ is the binary expansion of $j$, we can recover the coordinate~$j$ from the indices of non-zero entries from $s_1, \ldots s_{\log n}$. Specifically, when $i$ isolates~$j$, we have that $s_k = x_j$ only if $k$ is set to $1$ in the binary expansion of $j$. 
		Thus, for an $i$ isolating~$j$, we can recover both the non-zero coordinate~$j$ and its value $x_j$.
		
		We now show that since $A$ is the adjacency matrix of a $(t, 1/12)$-unbalanced expander, most non-zero coordinates of~$\vec{x}$ have isolating measurements. We have that each non-zero coordinate of $\vec{x}$ participates in~$d$ measurements, and the maximum number of measurements over non-zero coordinates is $d\norm{\vec{x}}_0$. By the properties of the expander graph, we are guaranteed that the set of non-zero coordinates $X$ participates in at least $(1-\eps)d|X| = (1-\eps)d\norm{\vec{x}}_0$ measurements. This implies that the number of non-isolating measurements is at most $\eps d \norm{\vec{x}}_0 $, and the number of isolating measurements is at least $(1- 2\eps)d\norm{\vec{x}}_0$.
		
		We say that the $i$th coordinate of $\vec{y}$ computed by \cref{alg:indyk_reduce} is \emph{corrupted} if $x_i \neq y_i$. We now bound the number of corrupted indices in~$\vec{y}$. An index~$i$ may be corrupted in one of two ways: (i) $y_i \neq 0$ and $x_i \neq y_i$; (ii) $y_i = 0$ and $x_i \neq 0$. Let $c_1$ denote the number of corrupted indices with $y_i \neq 0$, and $c_2$ denote the number of corrupted indices with $y_i=0$. 
		
		We observe that for the algorithm to set $y_i$ to a non-zero value, it requires the contribution of more than~$d/2$ measurements that can only be used for that index. 
		Thus, we have that 
		$c_1< (\eps d \norm{\vec{x}}_0)/(d/2) \leq 2 \eps \norm{\vec{x}}_0$.
		Next, each of the $c_2$ corrupted indices of the second type must participate in at most~$d/2$ isolating measurements. Then the total number of isolating measurements is at most $d c_2/2 + (\norm{\vec{x}}_0 - c_2) d$. Recall that the expander property implies that the number of isolating measurements is at least $(1-2\eps)d\norm{\vec{x}}_0$.
		Therefore, $(1-2\eps)d\norm{\vec{x}}_0 \leq d c_2/2 + (\norm{\vec{x}}_0 - c_2) d$, which in turn implies that $c_2 \leq 4 \eps \norm{\vec{x}}_0$.
		Thus, the total number of corrupted indices is at most $c_1+c_2\leq 2\eps \norm{\vec{x}}_0 + 4\eps \norm{\vec{x}}_0 \leq \norm{\vec{x}}_0/2$ for $\eps = 1/12$.
		
		We note that for each $i\in[m']$ and each $k\in[\log(n+1)]$, the algorithm performs $\poly(\log{n})$ arithmetic operations over~$R$. This implies that the algorithm outputs a sparse representation of~$y$ using $m\cdot\poly(\log{n})$ arithmetic operations over~$R$.
	\end{proof}

	We are now ready to present the main result of this section---an efficient $m$-compressed sensing scheme for $m(n,t)=t^{1+\alpha}\cdot\poly(\log{n})$ for an arbitrary constant $\alpha>0$.
	
	\begin{algorithm}[!ht]
		\KwIn{Measurement matrix $H:=\cM(1^n,1^t) \in \bit^{m \times n}$ where $n=2^\ell-1$ for $\ell\in\Z^+$, sparsity parameter $t\leq n$, degree parameter~$d$, and measurement $\vec{z}_0:=H\vec{x}  \in R^m$.}
		\KwOut{$\vec{x}$, if $\norm{\vec{x}}_0 \leq t$.}
		\nonl~\\
		\For{$i=1, \ldots, \ceil{\log (2t)}$}{
			Set $\vec{y}_i := \textsc{Reduce}(H, d, \vec{z}_{i-1})$.\\
			\If{$\norm{\vec{y}_i}_0>3t/2$}{
				\KwRet $\vec{0}$.\DontPrintSemicolon \tcp*{\footnotesize{The Recovery procedure failed.}} %
			}
			Set $\vec{z}_i = \vec{z}_{i-1} - H\vec{y}_i$.\\ %
		}
		\KwRet $\sum_{i=1}^{\ceil{\log(2t)}}\vec{y}_i$.
		\caption{Recovery algorithm of the compressed sensing scheme.}
		\label{alg:indyk_recover}
	\end{algorithm}

	\MainCS*
	
	\begin{proof}

		We will assume without loss of generality that $n=2^\ell-1$ for a large enough~$\ell\in\Z^+$.
		We first analyze the algorithm $\cM$ generating the measurement matrix $H\in\{0,1\}^{m\times n}$, which is illustrated in \cref{alg:algM}. The algorithm $\cM$ outputs the measurement matrix $H\in\{0,1\}^{m\times n}$, where $m=m'(\log(n+1))=t^{1+\alpha}\cdot\poly(\log{n})$, as desired.
		
		The algorithm constructs $B$ in time $\Ot(n\log{n})$. Then it constructs the adjacency matrix~$A$ of an expander graph in time $\Ot(n)$ by~\cref{cor:expander}. Recall that by~\cref{cor:expander}, $A$ is $d$-column-sparse for $d=\poly(\log{n})$. Finally, $\cM$ naively computes (a sparse representation of) the row tensor Hadamard product of $A$ and~$B\in R^{\log(n+1)\times n}$ in time $\Ot(\norm{A}_0 \cdot \log(n+1))=\Ot(n)$. Thus, the total running time of $\cM$ is $\Ot(n)$.

		We now analyze the recovery algorithm $\mathcal{R}$ which is illustrated in \cref{alg:indyk_recover}. For any input vector $\vec{z}_0:=H\vec{x}$, the algorithm performs at most $\ceil{\log(2t)}$ iterations of the for loop. In each iteration, it calls the \textsc{Reduce} subroutine, multiplies $H$ by a vector $\vec{y}_i\in R^n$ of sparsity at most~$3t/2$, and adds two vectors from $R^n$. By~\cref{lem:reduce}, $\textsc{Reduce}$ performs $m\cdot \poly(\log{n})\leq t^{1+\alpha}\cdot\poly(\log{n})$ arithmetic operations over~$R$. By~\cref{prop:naive-sparse-mm} $H$ can be multiplied by an $n$-dimensional vector~$\vec{y}_i$ using $\norm{\vec{y}_i}_0\cdot\poly(\log{n})\leq t\cdot\poly(\log{n})$ operations over~$R$. After the loop, the algorithm simply sums up $O(\log{n})$ vectors from~$R^n$. Thus, the algorithm~$\cR$ performs at most $n^{1+\alpha}\cdot\poly(\log{n})$ operations over~$R$.
		
		We now argue correctness of the algorithm~$\cR$. Assume that $\norm{\vec{x}}_0\leq t$.  In this case, \cref{lem:reduce} guarantees that   for all $i\leq \ceil{\log(2t)}$,
		\begin{align*}\label{eq:reduce}
			\norm{\vec{x}-\sum_{j\leq i}\vec{y}_j}_0\leq t/ 2^i \;.
		\end{align*}
		In particular, each $\vec{y}_i$ satisfies $\norm{\vec{y}_i}_0\leq 3t/2$, and the algorithm performs all $\ceil{\log(2t)}$ iterations of the for loop. 
		Since $\norm{\vec{x}-\sum_{j\leq \log(2t)}\vec{y}_j}_0=0$, we conclude that $\vec{x}=\sum_{i=1}^{\ceil{\log(2t)}}\vec{y}_i$. This finishes the proof of the theorem.
	\end{proof}

\end{document}